\newtheorem{theorem}{Theorem}
\newtheorem{lemma}{Lemma}
\newtheorem{claim}{Claim}
\newtheorem{proposition}{Proposition}
\newtheorem{corollary}{Corollary}
\newtheorem{definition}{Definition}
\newtheorem{assumption}{Assumption}
\newtheorem{remark}{Remark}
\DeclareMathOperator*{\argmax}{arg\,max}
\title{\textbf{Incentivizing Knowledge Transfers}\thanks{We thank David Miller and Harry Di Pei for helpful comments and discussions. All errors are our own. \emph{Email}: Kuang (kuang@ruc.edu.cn); Liu (yi.liu.yl2859@yale.edu); Wei (dwei10@ucsc.edu).}}
\author{\textsc{Zhonghong Kuang}\\ \textit{Renmin U}
\and \textsc{Yi Liu} \\ \textit{Yale}
\and \textsc{Dong Wei}\\ \textit{UC Santa Cruz}}
\date{January 2026}
\begin{document}

\maketitle

\begin{abstract}

We study the optimal design of relational contracts that incentivize an expert to share specialized knowledge with a novice. While the expert fears that a more knowledgeable novice may later erode his future rents, a third-party principal is willing to allocate her resources to facilitate knowledge transfer. 
In the unique profit-maximizing contract between the principal and the expert, the expert is asked to train the novice as much as possible, for free, in the initial period; 
knowledge transfers then proceed gradually and perpetually, while the principal offers lump-sum compensations to the expert right after verifying each transfer; 
even in the long run, a complete knowledge transfer might not be attainable. Our analysis sheds light on the success of several prominent cross-border technology transfers that took place in China's auto industry and Korea's high-speed rail development. 
\bigskip

\noindent \textit{Keywords}: knowledge transfer, relational contract, dynamic contract

\noindent \textit{JEL classification}: C73, D86, J24, M53

\end{abstract}

\newpage

\begin{center}
\textit{``Teach an apprentice, and the master may starve."} --- a Chinese proverb.
\end{center}

\section{Introduction}
Knowledge functions as a distinct form of capital---its transmission typically requires relatively few resources but can significantly enhance productivity. However, an individual who possesses specialized knowledge (expert) may be hesitant to share it with a less experienced peer (novice), particularly when doing so could jeopardize their own position or bargaining power. For example, employers often ask senior employees to train junior staff, but a senior employee may fear that, once the junior is fully trained, they could be replaced or see their share of rent diminish as a result.\footnote{This type of concern is common across various professions, such as software engineers, technicians, and lawyers. For documented examples, see Peter Gosselin and Ariana Tobin, ``Cutting `Old Heads' at IBM," \emph{ProPublica}, March 22, 2018; Liz Ryan, ``No -- I Won't Train The New Guy To Replace Me," \textit{Forbes}, August 29, 2017.}
A similar dynamic plays out on a larger scale: Since the 1970s, many developing countries have required multinational corporations to transfer technology to local firms in exchange for market access \citep*{Holmes2015Quid}. Once they acquire the technology, however, domestic firms may use it to compete directly with these multinational companies.\footnote{The joint venture between Shanghai Automotive Industry Corporation and General Motors (SAIC-GM) is a prominent example. As pointed out: ``GM helped rear SAIC into a full-fledged auto maker, with top-tier designers, engineers and marketers." However, ``SAIC could use GM expertise and technology to transform itself into a global auto powerhouse that challenges [GM] down the road." See Sharon Terlep, ``Balancing the Give and Take in GM’s Chinese Partnership," \emph{Wall Street Journal}, August 19, 2012.}

Despite such disincentives, a third party (principal), such as an employer or a local government, is often interested in facilitating knowledge transfer, potentially through leveraging some of its own resources. This is because successful transfers can yield substantial benefits to the principal, including increased firm profitability, higher tax revenues, and/or improved consumer welfare. Given the inherent conflict of interest between the expert and the novice, several natural questions arise: How can a principal effectively incentivize an expert to share knowledge with a novice? What are the key features of the optimal (relational) contract? And what are the implications when an expert eventually retires?

To shed light on these questions, we study a stylized model where a principal (she) interacts over time with two agents, an expert (he) and a novice (he). The expert maintains the maximum knowledge level throughout the game, and the novice begins with a knowledge level of $0$. In each period, the principal makes a payment to the expert, and then the expert trains the novice, which increases the novice's knowledge in the next period to a level at the expert's choice. The stage payoffs depend on the novice's current knowledge level: a more knowledgeable novice leads to higher stage payoffs for both the novice and the principal, but a lower stage payoff for the expert, capturing the potential loss of his task assignment/market share to a more capable novice. In this case, the expert is willing to train the novice only if he will be compensated for the resulting loss of future stage payoffs. 
We also assume that the bilateral surplus \textit{between the principal and the expert} increases with the novice's knowledge level.\footnote{In the baseline model, the novice is a passive player with no decisions to make. In Appendix \ref{sec:micro}, we provide several microfoundations where the novice becomes an active player.}

We model knowledge transfer as an irreversible and non-contractible process with delayed observation: After the novice is trained in a period, the outcome is not observed until the next period; moreover, because knowledge levels and increments are difficult to prove in court, the parties cannot write an enforceable contract that depends on the novice's knowledge level. Consequently, there is a lack of commitment on both sides, making immediate knowledge transfer impossible.\footnote{If knowledge were fully transferred to the novice immediately, the principal would have no incentives to compensate the expert afterwards. It is also impossible to ask the principal to pay in advance because the expert, expecting no further payments, would not have incentives to train the novice.} This dual commitment failure necessitates \emph{gradual} knowledge transfer through self-enforcing \emph{relational contracts} between the principal and the expert, restricted by their respective incentive constraints. Specifically, the expert is willing to train the novice for future payments; meanwhile, the principal prefers to pay the expert in order to continue the transfer process rather than terminating both payments and training.

We fully characterize the unique profit-maximizing contract (Theorem \ref{thm:profit}). In the optimal contract, knowledge is gradually transferred through an infinite-period arrangement.\footnote{Except when the players are highly impatient, in which case no knowledge transfer can be incentivized.} In the first period, the principal asks the expert to transfer as much knowledge as possible to the novice, without paying him right away. After that, in each period, the principal compensates the expert for the present value of his payoff losses incurred from the last period's knowledge transfer; meanwhile, the principal requests the expert to transfer additional knowledge, the present value of which to the principal exactly equals the compensation provided. The optimal knowledge sequence dominates, period by period, all other knowledge sequences that can be implemented by some relational contract, and because of that, it also serves as the unique knowledge sequence in every Pareto-efficient contract (Theorem \ref{thm:pareto}).

To understand these properties, note that the principal pays nothing in the very first period because it does not affect the expert's subsequent incentives to train.\footnote{This result is reminiscent of the ``knowledge gift" in \cite{Garicano2017Relational}, but they occur for very different reasons. See related literature and Section \ref{subsec:rayo} for a discussion.} After the initial period, the principal always fully compensates the expert for the present value of his future losses, \textit{immediately after the training he provided is verified}. In principle, such a compensation could take multiple periods, but doing so would only strengthen the principal's incentives to default in the future, so it is desirable to frontload. Moreover, through a perturbation argument, we show that any optimal contract must bind the expert's incentive constraint in every period, as well as the principal's incentive constraints starting from the second period. These binding constraints pin down a difference equation for knowledge levels in adjacent periods, which allows us to trace out a sequence of knowledge levels from any initial condition. Finally, it turns out that a greater amount of training in the initial period will result in higher knowledge levels in all future periods, so that the principal's optimization problem boils down to incentivizing the highest possible amount of training at the beginning of the game.

Interestingly, the novice's limiting knowledge level in the long run may or may not be the maximum knowledge level. In particular, for moderately patient players, while some knowledge transfers can be sustained, a complete transfer is \textit{not} achievable even in the long run, as any attempt to further increase training would require a payment obligation that is impossible to incentivize the principal to fulfill. As the players become more patient, knowledge transfers are accelerated, and all knowledge will be transferred eventually. Even in the patience limit, however, the transfer process remains gradual and nondegenerate, and it will \textit{not} approach transferring all knowledge immediately (Proposition \ref{prop:delta}).

The qualitative properties of the optimal contract align closely with the \textit{quid pro quo} (``technology for market access") policy adopted in the Chinese automobile industry over the last four decades. A prominent example is the joint venture established in 1997 between Shanghai Automobile Industry Corporation (SAIC) and General Motors (GM), which led to the creation of the SAIC-GM brand for manufacturing passenger vehicles in China. Consistent with our theoretical predictions, the technology transfer process unfolded gradually over many years; meanwhile, the local government provided continuous support to GM, primarily through land price discounts, tax incentives, and preferential procurement treatment. This arrangement ultimately enabled SAIC to launch its own independent automotive brand, Roewe, in 2006, although cars under this local brand were generally regarded as lower in quality compared to those under the SAIC-GM brand. In fact, even by 2012, GM had not yet transferred some of its key technologies. 
The original joint venture agreement also specifically required GM to offer know-how and product lines as equity, thereby pushing for a significant amount of technology transfer at the initial stage of the partnership. Another notable example is the development of Korea Train Express, South Korea's high-speed rail system. The Korean government required transfer of technology in its agreement with Alstom, and the implementation process also seems to match well with our predictions. We discuss these applications in detail in Section \ref{sec:applications}.

In many organizations, the agents (workers) often progress through their career ladders and are not as long-lived as the principal (firm). To account for these features, we extend our analysis to an overlapping-generation model where current experts retire and new novices arrive over time. In the model, each worker lives for $2K$ periods, with the first $K$ periods being his novice phase and the last $K$ periods being the expert phase. The current expert retires and exits the game at the end of his expert phase, at which time the existing novice becomes an expert and a new worker arrives as a novice. This ensures that in every period the principal faces exactly two workers, an expert and a novice, just as in the baseline model. In the event that an upcoming expert is not fully trained by his predecessor, the principal must incur a catch-up cost to qualify him as an expert.

We characterize the profit-maximizing stationary contract and show that it preserves most of the qualitative features from the baseline model, with two substantial differences (Theorem \ref{thm:profit-r}). First, the expert fully trains the novice by the time he retires. This happens because a knowledgeable novice cannot hurt the expert anymore after his retirement. Second, the equilibrium's qualitative properties no longer depend on the discount factor, and in particular, the trivial contract without any transfers is always suboptimal, even for highly impatient players. Although the principal never incurs the catch-up cost in the optimal contract, the presence of such a cost serves as a credible punishment on the principal's default, which helps sustain positive knowledge transfers throughout the relationship.\footnote{If there were no such costs, there would only be positive knowledge transfer in the final period of the expert phase.} Moreover, we find that uniformly increasing the catch-up cost function will benefit the principal and speed up the knowledge transfer process (Proposition \ref{prop:CSgeneralC}). An application to craftsmanship preservation by charitable foundations is discussed.

\medskip

{
\noindent\textbf{Related Literature.} 
This paper contributes broadly to the literature on principal-agent models with relational contracts. Early studies, such as \cite{Bull1987The}, \cite{MacLeod1989Implicit}, and \cite*{Baker1994Subjective}, investigate how firms can induce costly worker effort in repeated games. Subsequent research has explored the impact of other factors on relational contracts, such as information asymmetry \citep{Levin2003Relational,Halac2012Relational,Li2013Managing}, bargaining and renegotiation \citep*{Miller2013A,Miller2023}, and mutual uncertainty about match quality \citep{Kuvalekar2020Job,Kostadinov2022Learning}. In all these models, agents' skill levels are treated as stationary and exogenous. In contrast, we focus on a principal's attempt to facilitate the persistent buildup of a novice's knowledge stock, while abstracting away from effort costs and incomplete information.

Our work also relates to models of gradualism, which have been applied to a variety of contexts, including the prisoners' dilemma \citep{Lockwood2002Gradualism,Fujiwara-Greve2009}, gift exchanges \citep{Kranton1996,Kamada2018Sequential}, hold-up problems \citep{Pitchford2004A,Roy2024}, credit markets \citep{Ghosh2016,Wei2019}, and more broadly, long-term relationships with incomplete information \citep{Watson1999,Hua2022}. Meanwhile, our analysis aims at addressing a central tension in apprenticeships between transferring knowledge and managing future competition, exacerbated by the lack of commitment due to the non-contractibility of knowledge. It also extends to more general environments that involve dynamic career progression.

Most closely related to this paper is the recent line of research on relational knowledge transfers. 
\cite{Garicano2017Relational} examine a setting where the expert relies on a knowledgeable novice to produce output, but the novice can leave at any time with the knowledge he has acquired. They characterize the (expert-)optimal contract between the expert and the novice, in which the novice trades free labor for knowledge over multiple periods. Building on this framework, \cite{Fudenberg2019Training} endogenize effort provision and show that the resulting optimal contract is inefficient: the novice exerts excessive effort and spends too much time on menial work. \cite*{Fudenberg2021Working} further assume that knowledge can only accumulate when the novice exerts a costly effort to work and examine the impact of this learning-by-doing constraint. 

Our paper differs from these studies in three key respects. First, in the aforementioned papers, the novice is the sole productive player whose productivity increases with his knowledge level, so the expert has an intrinsic motivation to train him. In contrast, we highlight a fundamental conflict of interest between the two, arising from the fact that a more capable novice may become a stronger competitor to the expert. Yet, in our setting, it is in the interest of a third-party principal to promote knowledge transfer. Second, in those models, the expert serves as the contract designer who uses knowledge transfers as a main source of reward for the novice's production. In our model, however, the contract designer is the principal who leverages monetary transfers to incentivize the expert to train the novice. Third, \cite{Garicano2017Relational} assume that the expert has one-period commitment, and \cite{Fudenberg2019Training} and \cite*{Fudenberg2021Working} endow the expert with full commitment power. By contrast, our model assumes no commitment on either side, reflecting the non-contractible nature of knowledge. A more detailed comparison between our results and theirs is provided in Section \ref{subsec:rayo}.

}

\section{Model} \label{sec:model}
\subsection{Players, Actions, and Payoffs}
A principal (she) interacts with an expert (he) and a novice (he) over infinite periods $t=0,1,\cdots$, and they discount future payoffs with a common discount factor $\delta\in(0,1)$. The expert possesses one unit of perfectly-divisible productive knowledge, and he can train the novice to increase the novice's knowledge level at any rate he desires. The novice's initial knowledge level at time $0$ is $s_0=0$. Let $s_t\in[0,1]$ be the novice's knowledge level at the beginning of period $t$, and let $x_t\equiv s_{t+1}-s_t$ be the amount of knowledge accrued by the novice in period $t$. The expert's knowledge level remains at $1$ in every period regardless of the training he provides. 

To fix ideas, think of the expert as an experienced worker who engages in both training and production, the novice as an inexperienced worker who only engages in production, and the principal as a firm employing the two. Training is intrinsically \textit{costless} to the expert, but the novice's knowledge level will affect their payoffs by changing the players' relative importance in production. Specifically, given (novice's) knowledge $s_t$ in period $t$, the payoffs (absent monetary transfers) of the principal, expert, and novice within the stage are $\pi(s_t)$, $w(s_t)$, and $v(s_t)$, respectively. Throughout the paper, we maintain the following assumptions on these functions.
\begin{assumption}\label{ass:1}\ 
    \begin{enumerate}
        \item[i)] $\pi$ and $w$ are continuous;
        \item[ii)] $\pi$ and $v$ are strictly increasing, and $w$ is strictly decreasing;
        \item[iii)] $G\equiv \pi+w$ is strictly increasing.
    \end{enumerate}
\end{assumption}
That is, both the principal and the novice strictly benefit from the expert's training of the novice; meanwhile, the more knowledgeable the novice is, the less rent the expert can enjoy in each period, due to the potential loss of his task assignment/market share to a more capable novice. In addition, the bilateral surplus \textit{between the principal and the expert}, $G(s)\equiv \pi(s)+w(s)$, is strictly increasing in $s$. In Appendix \ref{sec:micro}, we provide microfoundations based on various economic settings that generate stage payoff functions satisfying Assumption \ref{ass:1}.

To encourage knowledge transfer, within each period, the principal compensates $p_t$ to the expert; the expert then trains the novice with additional knowledge $x_t$. The stage payoffs of the principal and the expert in period $t$ are given by $\pi(s_t)-p_t$ and $w(s_t)+p_t$, respectively. The timing of the stage game is shown in Figure \ref{fig:timing}.

\begin{figure}[!htb]
\centering
\begin{tikzpicture}[scale=3]
    \draw[line width=1.5pt](0,0.05)--(0,0)--(3,0)--(3,0.05)node[below]at(0,0){$s_t$ is observed}node[above]at(3,0.05){Expert gets $w(s_t)+p_t$}node[above]at(3,0.25){Principal gets $\pi(s_t)-p_t$};
    \draw[dotted,line width=1.5pt](3.2,0)--(3.5,0);
    \draw[line width=1.5pt](3.7,0)--(4.2,0);
    \draw[line width=1.5pt](1,0)--(1,0.05)node[above]{Principal chooses $p_t$};
    \draw[line width=1.5pt](2,0.05)--(2,0)node[below]{Expert chooses $x_t$};
    \draw[line width=1.5pt](3.7,0)--(3.7,0.05)node[below]at(3.7,0){$s_{t+1}$ is observed}node[below]at(3.7,-0.2){($x_t$ is observed)};
    \draw[dotted,line width=1.5pt](-0.5,0)--(-0.2,0);
\end{tikzpicture}
\caption{Timing of the Game.}
\label{fig:timing}
\end{figure}

Because knowledge transfer is difficult to observe immediately, and even more so to contract on, we assume that the principal does not observe $x_t$ until the beginning of $t+1$ and that the players are unable to write an enforceable long-term contract based on any $x_t$. Instead, in period $0$, the principal and the expert agree on a \textbf{relational contract} (or simply, a \textbf{contract}): a self-enforcing agreement that specifies, for each period, a knowledge stock $s_t$ and a compensation $p_t$, conditional on the players staying on path. Let $\bm{s}=(s_0,s_1,\cdots)$ and $\bm{p}=(p_0,p_1,\cdots)$, and we denote a contract by $(\bm{s},\bm{p})$.

Players are risk-neutral and care only about the present value of current and future payoffs. For a given contract $(\bm{s},\bm{p})$, let $\Pi_t(\bm{s},\bm{p})$ and $W_t(\bm{s},\bm{p})$ denote, respectively, the principal's and the expert's \emph{continuation values} from the standpoint of the beginning of period $t$. These continuation values are given by
\begin{align*}
\Pi_t(\bm{s},\bm{p})&=\sum_{\tau=t}^{\infty}\delta^{\tau-t}[\pi(s_\tau)-p_\tau],\\
W_t(\bm{s},\bm{p})&=\sum_{\tau=t}^{\infty}\delta^{\tau-t}[w(s_\tau)+p_\tau].
\end{align*}

\begin{remark}
    The novice in our model is a passive player with no choices to make, and will thus be absent from the analysis. We introduce and make assumptions on his stage payoff, $v$, only for completeness. In the microfoundations in Appendix \ref{sec:micro}, the novice's active decisions will give rise to such an $v$ and also drive the assumptions on $\pi$ and $w$.
\end{remark}

\subsection{Implementable Contracts}
\paragraph{Feasibility.}  A contract $(\bm{s},\bm{p})$ is \textbf{feasible} if
\begin{align}
s_{t+1}\geq s_t,\ s_{t+1}\in [0,1],\quad \forall t\geq 0, \tag{M}\label{eqn:M}\\
p_t\geq 0,\quad \forall t\geq 0.\tag{LL}\label{eqn:LL}
\end{align}
The first condition requires knowledge transfers to be irreversible (i.e., knowledge cannot be ``unlearned"), and the second condition assumes limited liability in monetary transfers.

\paragraph{Incentive Compatibility.} A contract $(\bm{s},\bm{p})$ is \textbf{implementable} if it is feasible and constitutes the on-path plays of a subgame perfect equilibrium (SPE).  

Notice that, starting from any knowledge level $s_t$ at the beginning of time $t$, the continuation strategy profile, $\{p_\tau=0,\ x_\tau=0\}_{\tau\geq t}$ at all histories, forms an SPE in the continuation game, which we call the \textbf{inactive equilibrium}. It turns out that the inactive equilibrium renders the minimum continuation values to both parties, making it a common worst punishment (see Lemma \ref{lem:revelation_principle}). Intuitively, after a missed payment by the principal or an inadequate training by the expert, the two parties agree to stick to the status quo with no further payments or training to be made. This gives rise to the following incentive compatibility constraints. In each period $t$, the principal should weakly prefer continuing the contract to ending all payments and training from this period onward:
\begin{equation}
\Pi_t(\bm{s},\bm{p})\geq \frac{\pi(s_t)}{1-\delta},\quad \forall t\geq 0. \tag{P-IC}\label{eqn:P-IC}
\end{equation}
On the other hand, upon receiving the promised payment $p_t$, the expert should prefer providing the specified training to forfeiting all future payments:
\begin{equation*}
W_t(\bm{s},\bm{p})\geq p_t+\frac{w(s_t)}{1-\delta},\quad \forall t\geq 0. 
\end{equation*}
Because the expert's period-$t$ payoff, $w(s_t)+p_t$, is not affected by his choice in that period, the above condition simplifies to
\begin{equation}
W_{t+1}(\bm{s},\bm{p})\geq \frac{w(s_t)}{1-\delta},\quad \forall t\geq 0. \tag{E-IC}\label{eqn:E-IC}
\end{equation}

\begin{lemma}\label{lem:revelation_principle}
$(\bm{s},\bm{p})$ is implementable if and only if it satisfies \eqref{eqn:M}, \eqref{eqn:LL}, \eqref{eqn:P-IC}, and \eqref{eqn:E-IC}.
\end{lemma}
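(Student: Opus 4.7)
The plan is to prove both directions by leveraging the inactive equilibrium as a common worst-case punishment, which is the standard structure for a revelation-principle-type result in relational contracts.

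For necessity, I would first observe that feasibility conditions \eqref{eqn:M} and \eqref{eqn:LL} must hold for the on-path actions to be admissible moves in the stage game. To derive \eqref{eqn:P-IC}, I would note that the principal can always deviate to $p_t=0$ in period $t$; in any SPE, her continuation value cannot fall below what she can guarantee by this deviation followed by any continuation play. Since both players choosing zero forever constitutes an SPE of the continuation game (the inactive equilibrium), the principal can secure at least $\pi(s_t)/(1-\delta)$, which gives \eqref{eqn:P-IC}. For \eqref{eqn:E-IC}, I would observe that after receiving $p_t$, the expert can deviate to $x_t=0$ and trigger the inactive continuation starting from knowledge level $s_t$; since $w$ is decreasing, this $x_t=0$ deviation dominates any other undertraining deviation, and any overtraining only hurts him via the monotonicity of $w$. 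Comparing on-path $w(s_t)+p_t+\delta W_{t+1}(\bm{s},\bm{p})$ with the deviation payoff $w(s_t)+p_t+\delta w(s_t)/(1-\delta)$ yields \eqref{eqn:E-IC}.

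For sufficiency, I would explicitly construct a grim-trigger SPE: on path, in period $t$ the principal pays $p_t$ and the expert trains $x_t=s_{t+1}-s_t$; any visible deviation (a payment other than $p_t$, or a knowledge increment other than $x_t$) triggers a permanent switch to the inactive equilibrium from the resulting state. The preliminary step is to verify that the inactive equilibrium is itself an SPE from any state: since knowledge is irreversible and $w$ is decreasing, the expert weakly prefers $x=0$ when he expects no future compensation, and given zero training the principal optimally pays nothing. With this punishment in place, I would check on-path optimality: the principal's best deviation is $p_t=0$, yielding $\pi(s_t)/(1-\delta)$, so \eqref{eqn:P-IC} guarantees no profitable deviation; the expert's best deviation is $x_t=0$ (by monotonicity of $w$, any $x_t'\in[0,x_t]$ is dominated by $x_t'=0$, and $x_t'>x_t$ only lowers his continuation via the punishment at a higher knowledge stock), so \eqref{eqn:E-IC} rules it out.

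The main subtlety, rather than an obstacle, is handling the expert's entire deviation set $[0,1-s_t]$ rather than just the single alternative $x_t'=0$: I would need to argue cleanly that the decreasing-in-$s$ property of $w$ together with irreversibility makes $x_t'=0$ the unique worst-case deviation, so that \eqref{eqn:E-IC} alone suffices. A second minor point is that the expert's stage payoff $w(s_t)+p_t$ in period $t$ depends only on the pre-training state and the payment already received, which is why the IC constraint can be expressed purely through $W_{t+1}$ rather than $W_t$; this should be flagged in passing when setting up the expert's deviation comparison.
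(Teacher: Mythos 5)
Your proof is correct and follows essentially the same route as the paper: necessity via each player's ability to unilaterally guarantee the inactive-equilibrium value (pay zero forever, resp.\ never train again), and sufficiency via a grim-trigger construction punishing deviations with the inactive equilibrium, with $x_t=0$ identified as the expert's worst-case deviation because $w$ is decreasing. One small framing point: in the necessity direction the deviator cannot literally ``trigger'' the inactive continuation (the opponent's post-deviation play is whatever the SPE prescribes), so the clean statement is that the deviation payoff is bounded below by the security level $\pi(s_t)/(1-\delta)$ (resp.\ $w(s_t)/(1-\delta)$) regardless of the continuation—which is exactly the bound you use, so the argument goes through.
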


\subsection{Principal's Problem}
Notice that the \textbf{trivial contract} $\{s_t=s_0,\ p_t=0\}_{t\geq 0}$ is implementable, in which case the principal's time-$0$ profit is $\frac{\pi(s_0)}{1-\delta}$. The principal aims to select an implementable contract that maximizes her time-$0$ profit, which leads to the following program:
\begin{equation}
\max_{(\bm{s},\bm{p})} \ \Pi_0(\bm{s},\bm{p})\quad \mbox{s.t.} \quad \eqref{eqn:M},\ \eqref{eqn:LL},\ \eqref{eqn:P-IC},\ \eqref{eqn:E-IC}. \label{eqn:program}
\end{equation}
\begin{lemma}\label{lem:existence}
    An \textbf{optimal contract} exists. That is, the principal's program \eqref{eqn:program} admits a solution.
\end{lemma}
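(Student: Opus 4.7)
The plan is to prove existence by a compactness-and-continuity argument in the product topology. By Tychonoff's theorem, any countable product of compact intervals is compact, so the main tasks are (i) to identify a compactification of the space of candidate contracts that contains all implementable ones, and (ii) to verify that the constraints cut out a closed subset on which $\Pi_0$ is continuous.

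First I would derive a uniform upper bound on payments. Applying \eqref{eqn:P-IC} at an arbitrary time $t$ yields
\begin{equation*}
\sum_{\tau \geq t} \delta^{\tau-t} p_\tau \;\leq\; \sum_{\tau \geq t} \delta^{\tau-t} \pi(s_\tau) \;-\; \frac{\pi(s_t)}{1-\delta} \;\leq\; \frac{\pi(1)-\pi(0)}{1-\delta} \;\equiv\; C,
\end{equation*}
since $\pi$ is bounded above by $\pi(1)$ on $[0,1]$. Because all payments are nonnegative, the $\tau=t$ term alone is dominated by the full sum, giving the uniform per-period bound $p_t \leq C$ in every implementable contract. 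Hence attention can be restricted without loss to the compact product space $[0,1]^{\mathbb{N}} \times [0,C]^{\mathbb{N}}$.

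Next I would verify that the constraints define a closed subset and that $\Pi_0$ is continuous. Since $p_\tau \leq C$ uniformly and $\pi, w$ are continuous (hence bounded) on $[0,1]$, each term in the defining series for $\Pi_t$ and $W_t$ is dominated, independent of the contract, by $\delta^{\tau-t}(M+C)$ for some constant $M$. The dominated convergence theorem then implies that $\Pi_t$ and $W_t$ are continuous in the product topology. Constraints \eqref{eqn:M} and \eqref{eqn:LL} are manifestly closed pointwise conditions, while \eqref{eqn:P-IC} and \eqref{eqn:E-IC} are preimages of closed half-lines under continuous functionals, hence closed. The feasible set is therefore compact and nonempty (the trivial contract satisfies every constraint), so Weierstrass's theorem delivers a maximizer of the continuous objective $\Pi_0$.

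The main obstacle is extracting the correct payment bound. Using only \eqref{eqn:P-IC} at $t=0$ would give $\sum_t \delta^t p_t \leq C$, which permits $p_t$ to grow like $C/\delta^t$ and leaves a payment space on which $\Pi_0$ is merely upper semi-continuous: a maximizing sequence that backloads payments to increasingly distant periods could converge pointwise to a contract with strictly smaller continuation values, threatening the preservation of \eqref{eqn:E-IC} in the limit. Invoking \eqref{eqn:P-IC} at \emph{every} $t$ simultaneously yields the uniform bound $p_t \leq C$, which restores dominated convergence and resolves this difficulty in one stroke.
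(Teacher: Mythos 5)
Your proof is correct and follows essentially the same route as the paper's: confine contracts to a compact product space $[0,1]^{\mathbb{N}}\times[0,C]^{\mathbb{N}}$, check that \eqref{eqn:M}, \eqref{eqn:LL}, \eqref{eqn:P-IC}, \eqref{eqn:E-IC} cut out a closed (hence compact) nonempty set on which $\Pi_0$ is continuous, and invoke the extreme value theorem. The only difference is in the payment bound: you derive $p_t\leq \frac{\pi(1)-\pi(0)}{1-\delta}$ directly from \eqref{eqn:P-IC} at every $t$, so no restriction is needed, whereas the paper caps payments at $\frac{w(0)-w(1)}{1-\delta}$ via a without-loss argument; both are valid, and your variant is self-contained.
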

\begin{remark}
    In this model, we do not impose any participation constraints. One can think of the expert's outside option as valuing no more than $w(1)$ per period, then limited liability ensures that the expert never wants to exit from the relationship.
\end{remark}

\section{Optimal Contracts}\label{sec:optimal}
Below, we first derive some economically interesting necessary conditions that any optimal contract must satisfy, and then we will characterize the unique optimal contract.

\subsection{Necessary Conditions}
\begin{claim}[Nonstop Training]\label{cl:nonstop}
    If a nontrivial contract $(\bm{s},\bm{p})$ is implementable, there does not exist $T$ such that $s_t=s_T$ for all $t\geq T$.  That is, $s_{t+1}>s_t$, infinitely often.
\end{claim}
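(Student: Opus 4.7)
The plan is to prove the contrapositive: if some $T$ satisfies $s_t = s_T$ for all $t \geq T$, the contract must be trivial. Let $T$ be the smallest such index, and split into two cases.

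If $T = 0$, then $s_t \equiv s_0$, so $\pi(s_t)$ is constant across periods and \eqref{eqn:P-IC} at $t=0$ collapses to $\sum_{\tau \geq 0}\delta^\tau p_\tau \leq 0$. Combined with \eqref{eqn:LL}, this forces $p_\tau = 0$ for every $\tau$, and the contract reduces to the trivial one, contradicting nontriviality.

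If $T \geq 1$, minimality of $T$ together with \eqref{eqn:M} forces $s_{T-1} < s_T$ (otherwise $T-1$ would also be a stopping index). The heart of the argument is to pit \eqref{eqn:E-IC} at $t=T-1$ against \eqref{eqn:P-IC} at $t=T$. Using $s_\tau = s_T$ for every $\tau \geq T$, the former simplifies to
\[
\sum_{\tau \geq T}\delta^{\tau-T} p_\tau \;\geq\; \frac{w(s_{T-1}) - w(s_T)}{1-\delta},
\]
whose right-hand side is strictly positive since $w$ is strictly decreasing; the latter simplifies to $\sum_{\tau \geq T}\delta^{\tau-T} p_\tau \leq 0$. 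These are incompatible, delivering the contradiction.

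Economically, the expert requires strictly positive deferred compensation for the final burst of training at $T-1$, but once training has permanently ceased the principal has no reason to honor any such transfer—she strictly prefers to default. These two forces cannot coexist, which forces knowledge transfer either to continue indefinitely or to have never started. I do not anticipate any real technical obstacle; the proof is essentially a short deployment of Lemma \ref{lem:revelation_principle}, and the only bookkeeping point is that a minimum such $T$ exists, which follows immediately from the well-ordering of the nonnegative integers.
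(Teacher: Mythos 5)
Your proof is correct and follows essentially the same route as the paper: the paper's argument (given in the text below the claim) is exactly the observation that the expert's incentive constraint at $T-1$ requires strictly positive discounted payments $\{p_t\}_{t\geq T}$, while the principal's incentive constraint at $T$ rules them out once training has stopped. Your treatment of the constant case $T=0$ and the use of minimality of $T$ are fine bookkeeping additions, but the substance is identical.
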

If training occurs at any point, it must never fully stop. Suppose instead that $x_{T-1}>0$ but no further training is provided starting from period $T$. To be willing to provide $x_{T-1}$, the expert has to be compensated for the loss of his future stage payoffs through $\{p_t\}_{t\geq T}$. However, because $x_t=0$ for all $t\geq T$, the principal would find it optimal not to make any payment from period $T$ onward, rendering the training at $T-1$ impossible. This observation implies that the first-best outcome (i.e., $s_t=1,\ \forall t\geq1$) is not implementable. Even worse, knowledge cannot be fully transferred in finite time.

\begin{claim}\label{cl:binding}
    If a nontrivial contract $(\bm{s},\bm{p})$ is optimal, then $p_0 = 0$, the \eqref{eqn:P-IC} constraints are binding at every $t\geq 1$, and the \eqref{eqn:E-IC} constraints are binding at every $t\geq 0$.
\end{claim}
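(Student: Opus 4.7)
The plan is to establish the three conclusions of the claim separately. The first two ($p_0 = 0$ and \eqref{eqn:E-IC} binding for every $t \geq 0$) will follow from analyzing, for a fixed knowledge path $\bm{s}$, the optimal structure of the payment schedule $\bm{p}$; the third (\eqref{eqn:P-IC} binding for every $t \geq 1$) will require an outer perturbation of $\bm{s}$ itself.

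For the first two, I would fix $\bm{s}$ and view the principal's problem of minimizing $\sum_{t \geq 0} \delta^t p_t$ as a linear program over $\bm{p}$. Introducing the auxiliary variables $P_t := \sum_{\tau \geq t} \delta^{\tau - t} p_\tau$, the constraint \eqref{eqn:E-IC} at $t$ reads $P_{t+1} \geq B_t$, where $B_t := w(s_t)/(1 - \delta) - \sum_{\tau \geq t+1} \delta^{\tau - t - 1} w(s_\tau)$, and \eqref{eqn:LL} reads $P_t \geq \delta P_{t+1}$ (via $p_t = P_t - \delta P_{t+1} \geq 0$). Minimizing $P_0$ therefore forces, for each $t \geq 1$, $P_t = \max\{B_{t-1},\, \delta P_{t+1}\}$, with $P_0 = \delta P_1$. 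A direct computation gives $\delta B_t - B_{t-1} = [w(s_t) - w(s_{t-1})]/(1 - \delta) \leq 0$, using monotonicity of $w$ and \eqref{eqn:M}. Hence $\delta P_{t+1} = \delta B_t \leq B_{t-1}$, so the max is attained at $B_{t-1}$, yielding $P_t = B_{t-1}$ for every $t \geq 1$. This is exactly \eqref{eqn:E-IC} binding at every $t \geq 0$, and $P_0 = \delta P_1 = \delta B_0$ then gives $p_0 = P_0 - \delta P_1 = 0$.

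For the third conclusion, I would argue by contradiction: suppose \eqref{eqn:P-IC} at some $T \geq 1$ is slack by a positive margin. Before constructing the perturbation, I would prove a short propagation lemma: if \eqref{eqn:P-IC} at $T$ is slack and $s_T = s_{T+1}$, then \eqref{eqn:P-IC} at $T+1$ is also slack; this follows from the identity $\Pi_T = \pi(s_T) - p_T + \delta \Pi_{T+1}$ together with $p_T \geq 0$ and $\pi(s_{T+1}) = \pi(s_T)$. Iterating the lemma and invoking Claim \ref{cl:nonstop} (so that some strict increase must occur after $T$), I obtain a period $t^* \geq T$ with $s_{t^*} < s_{t^*+1}$ at which \eqref{eqn:P-IC} is still slack. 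The perturbation is then: replace $s_{t^*}$ by $s_{t^*} + \eta$ for a small $\eta > 0$ and replace $p_{t^*}$ by $p_{t^*} + \Delta_p$, where $\Delta_p := w(s_{t^*}) - w(s_{t^*}+\eta) > 0$, leaving all other entries of $(\bm{s}, \bm{p})$ intact. The compensation $\Delta_p$ is chosen so that the change in $w(s_{t^*}) + p_{t^*}$ is exactly zero, which keeps $W_t$ unchanged for every $t \geq 0$; therefore \eqref{eqn:E-IC} at each $t \neq t^*$ is preserved and \eqref{eqn:E-IC} at $t^*$ is relaxed because its RHS falls. The principal's stage payoff at $t^*$ rises by exactly $G(s_{t^*}+\eta) - G(s_{t^*}) > 0$ by strict monotonicity of $G$ (Assumption \ref{ass:1}), so $\Pi_0$ strictly improves; \eqref{eqn:P-IC} at every $t < t^*$ is only relaxed, and \eqref{eqn:P-IC} at $t^*$ tightens by an amount vanishing with $\eta$ (by continuity of $\pi, w$), so for $\eta$ small enough it remains satisfied. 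This contradicts optimality.

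The main obstacle is the third step. The single-period perturbation requires strict room for $s_{t^*}$ to grow (i.e., $s_{t^*} < s_{t^*+1}$); otherwise \eqref{eqn:M} would be violated. The propagation lemma, combined with Claim \ref{cl:nonstop}, is precisely what reduces the general case ``\eqref{eqn:P-IC} slack at some $T \geq 1$" to the workable case ``\eqref{eqn:P-IC} slack at some $t^*$ with $s_{t^*} < s_{t^*+1}$". Getting the compensation $\Delta_p$ exactly right so that $W_t$ is preserved for every $t < t^*$ is the other delicate ingredient; without that precise choice, some downstream \eqref{eqn:E-IC} constraint would be violated, and the perturbation would unravel.
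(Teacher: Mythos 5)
Your treatment of the \eqref{eqn:P-IC} part is correct and is a genuinely different (and arguably more elementary) route than the paper's: instead of switching to the frontloaded payment \eqref{eqn:P*} and then raising a block of knowledge levels, you propagate slackness across flat stretches (using Claim \ref{cl:nonstop}) to a date $t^*$ with $s_{t^*}<s_{t^*+1}$, and then jointly raise $(s_{t^*},p_{t^*})$ so that the expert's flow payoff at $t^*$ is unchanged; this keeps every $W_t$ fixed, raises surplus by $G(s_{t^*}+\eta)-G(s_{t^*})>0$, and only tightens \eqref{eqn:P-IC} at $t^*$ by a vanishing amount. That argument goes through.

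The genuine gap is in your first step, i.e., the claim that \eqref{eqn:E-IC} binds at \emph{every} $t\geq 0$. Minimizing $P_0$ for fixed $\bm{s}$ does \emph{not} force $P_t=\max\{B_{t-1},\delta P_{t+1}\}$ for all $t\geq 1$: it only forces $p_0=0$ and $P_1=B_0$. Any nonnegative stream from date $1$ onward with $\sum_{\tau\geq1}\delta^{\tau-1}p_\tau=B_0$ and $P_{t+1}\geq B_t$ attains the same minimal $P_0$; for instance, starting from \eqref{eqn:P*} and moving a small amount of payment from date $1$ to date $2$ (feasible whenever $s_1>s_0$) leaves $P_0$ and $P_1$ unchanged but makes \eqref{eqn:E-IC} at $t=1$ strictly slack. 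So your LP does not deliver the necessity of binding \eqref{eqn:E-IC} beyond $t=0$; ruling out such backloaded minimizers requires using \eqref{eqn:P-IC}, which your LP drops entirely. Relatedly, even to conclude that an optimal contract must attain the LP value you need to exhibit an LP-minimizing payment that is \emph{implementable}, i.e., verify that $(\bm{s},\bm{p}^*(\bm{s}))$ satisfies \eqref{eqn:P-IC}; the paper does this by combining the original contract's \eqref{eqn:P-IC} at $t$ with \eqref{eqn:E-IC} at $t-1$, and your proposal omits this check (for $p_0=0$ alone this is easily patched, since lowering $p_0$ preserves all constraints directly). The missing idea for the \eqref{eqn:E-IC} part is precisely the paper's two-step logic: a slack \eqref{eqn:E-IC} at some $k$ is converted, after frontloading payments, into a slack \eqref{eqn:P-IC} at $k+1$, at which point a knowledge perturbation (yours or the paper's) yields a strict improvement. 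Without some version of that conversion, your argument establishes $p_0=0$, \eqref{eqn:E-IC} binding at $t=0$, and \eqref{eqn:P-IC} binding at $t\geq1$, but not the full claim.
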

First, note that $p_0$ does not affect the expert's incentives in any period, but reducing $p_0$ can increase the principal's profit and relax her incentive constraint at $t=0$, so $p_0$ must be zero in any optimal contract. As a result, \eqref{eqn:P-IC} at $t=0$ can be dropped since the principal does not need any incentives to honor a null payment.

Next, if we begin with an implementable $(\bm{s},\bm{p})$ that leaves some of the \eqref{eqn:E-IC} or \eqref{eqn:P-IC} constraints \textit{slack}, as a first step we can always adjust the payment scheme to $\bm{p}^*(\bm{s})$, defined as follows:
\begin{equation}\label{eqn:P*}
\begin{aligned}
    p_0^*&=0,\\
    p_t^* &= \frac{w(s_{t-1})-w(s_t)}{1-\delta},\text{ for }t\geq 1.
\end{aligned}\tag{P*}
\end{equation}
In each period $t$, the expert receives a lump-sum payment equal to the present value of all his payoff losses that arise from the training he provided in the previous period. By construction, the contract $(\bm{s},\ \bm{p}^*(\bm{s}))$ makes all the \eqref{eqn:E-IC} constraints binding.
Because some of the incentive constraints were slack and the discounted total surplus is kept unchanged, the shift to $(\bm{s},\ \bm{p}^*(\bm{s}))$ ensures that \eqref{eqn:P-IC} must be slack in some period $k\geq 1$.

Now consider increasing $s_k$ by a small amount and adjusting the payment scheme accordingly based on \eqref{eqn:P*}. This perturbation does not violate \eqref{eqn:P-IC} in period $k$ because it was made slack by the previous step, and it only relaxes the \eqref{eqn:P-IC} constraints before and after $k$.\footnote{Indeed, other things equal, a higher $s_{k}$ implies a lower $x_{k}$, so the compensation to be made in period $k+1$ becomes lower according to \eqref{eqn:P*}; this relaxes the principal's incentive constraint at $k+1$. In every period prior to $k$, the principal's incentive constraint is also relaxed because her continuation value becomes higher (future surplus gets higher while the expert's continuation value is kept unchanged).} Taken together, whenever some of the \eqref{eqn:E-IC} or \eqref{eqn:P-IC} constraints are slack, this two-step adjustment will deliver an implementable, profit-improving contract, hence the claim.

\bigskip

Claim \ref{cl:binding} implies two tight conditions that any optimal training and payment schemes must satisfy. 
\begin{lemma}[Payment Frontloading]\label{lem:frontloading}
    If $(\bm{s},\bm{p})$ is optimal, then $p_0 = 0$ and $p_t=\frac{w(s_{t-1})-w(s_t)}{1-\delta}$ for every $t\geq 1$. That is, $\bm{p} = \bm{p}^*(\bm{s})$ as defined in \eqref{eqn:P*}.
\end{lemma}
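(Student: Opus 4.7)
The plan is to derive Lemma \ref{lem:frontloading} as essentially an algebraic corollary of Claim \ref{cl:binding}. First, I would split into cases based on whether the optimal contract is trivial. If $(\bm{s},\bm{p})$ is a trivial optimal contract, then by the definition of the trivial contract $p_t = 0$ for all $t$ and $s_t = s_0$ is constant, so $w(s_{t-1}) - w(s_t) = 0$, which means $p_t = \frac{w(s_{t-1})-w(s_t)}{1-\delta} = 0$ matches (P*) trivially. The substantive case is when $(\bm{s},\bm{p})$ is nontrivial, and here I would invoke Claim \ref{cl:binding} to assert $p_0 = 0$ and that the \eqref{eqn:E-IC} constraints are binding in every period $t \geq 0$.

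Next, I would unpack the binding \eqref{eqn:E-IC} constraints into a one-step recursion for the expert's continuation value. From the definition,
\[
W_t(\bm{s},\bm{p}) = w(s_t) + p_t + \delta W_{t+1}(\bm{s},\bm{p}).
\]
For any $t \geq 1$, binding \eqref{eqn:E-IC} in period $t-1$ gives $W_t(\bm{s},\bm{p}) = \frac{w(s_{t-1})}{1-\delta}$, and binding \eqref{eqn:E-IC} in period $t$ gives $W_{t+1}(\bm{s},\bm{p}) = \frac{w(s_t)}{1-\delta}$. Substituting both into the recursion yields
\[
\frac{w(s_{t-1})}{1-\delta} = w(s_t) + p_t + \delta \cdot \frac{w(s_t)}{1-\delta} = \frac{w(s_t)}{1-\delta} + p_t,
\]
which rearranges to $p_t = \frac{w(s_{t-1}) - w(s_t)}{1-\delta}$, as claimed. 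Combined with $p_0 = 0$ from Claim \ref{cl:binding}, this gives $\bm{p} = \bm{p}^*(\bm{s})$.

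There is no real obstacle here, since the heavy lifting has already been done in Claim \ref{cl:binding} via the two-step perturbation argument. The only minor care point is the boundary case: one must verify that the argument for $t \geq 1$ uses binding \eqref{eqn:E-IC} at both period $t-1$ and period $t$, both of which are guaranteed by Claim \ref{cl:binding}; and the $t=0$ value $p_0 = 0$ comes directly from Claim \ref{cl:binding} rather than from the recursion. Finally, I would note that the lemma makes no separate claim about optimal \emph{knowledge} sequences—$\bm{s}$ is taken as given and only the payment structure is pinned down by the binding expert incentive constraints.
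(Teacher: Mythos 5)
Your proof is correct and follows essentially the same route as the paper: invoke Claim \ref{cl:binding} for $p_0=0$ and the binding \eqref{eqn:E-IC} constraints, then use the recursion $W_t = w(s_t)+p_t+\delta W_{t+1}$ together with $W_t=\frac{w(s_{t-1})}{1-\delta}$ and $W_{t+1}=\frac{w(s_t)}{1-\delta}$ to pin down $p_t$. Your explicit handling of the trivial case is a harmless (and slightly more careful) addition to the paper's argument.
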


Technically, for a fixed $\bm{s}$, the payment scheme specified in \eqref{eqn:P*} is the \textit{unique} way to make all the \eqref{eqn:E-IC} constraints binding. To see this, note that
\begin{align*}
    p_t &= W_t - w(s_t) - \delta W_{t+1} 
    = \frac{w(s_{t-1})}{1-\delta}  - w(s_t) - \delta\frac{w(s_t)}{1-\delta} 
    = \frac{w(s_{t-1})-w(s_t)}{1-\delta},
\end{align*}
where the first equality follows from the recursive definition of $W_t$ and the second equality is implied by the binding \eqref{eqn:E-IC} constraints.

Economically, this payment scheme \textit{frontloads} the expert's compensation by offering a lump-sum payment for the present value of all his future losses, right after the outcome of his previous training is observed (i.e., $x_{t-1} \equiv s_{t} - s_{t-1}$). To see why this is optimal, note that for each knowledge transfer the expert provides, he has to be compensated for his loss, $w(s_{t-1})-w(s_t)$ per period, through a discounted sum of future payments. 
If the compensation for $x_{t-1}$ were to be made in multiple installments, say, at $t$ and $t+1$, this would imply that \textit{the expert's incentive constraint at $t$ must be slack}, because now the expert's deviation in period $t$ would make him forfeit, at $t+1$, not only the compensation for $x_{t}$ but also part of the compensation for $x_{t-1}$. Consequently, the principal can frontload some payment from period $t+1$ to $t$, which would relax her \eqref{eqn:P-IC} at $t+1$ without violating any other constraints. The relaxation of the principal's incentive constraint then creates room for a strict improvement of the principal's profit (via the perturbation below Claim \ref{cl:binding}).
\bigskip

Adding \eqref{eqn:E-IC} at $t$ to \eqref{eqn:P-IC} at $t+1$, we obtain
\begin{equation}\label{eqn:S-IC}
    \sum_{\tau=t}^{\infty}\delta^{\tau-t}G(s_\tau)\geq \frac{\pi(s_t)+w(s_{t-1})}{1-\delta},\quad \forall t\geq 1.\tag{S-IC}
\end{equation}
\eqref{eqn:S-IC} states that the continuation surplus generated by the contract must outweigh the sum of the players' continuation values in the inactive equilibrium.\footnote{The expert's incentive to train in the current period is provided only by future values. Therefore, it is $w(s_{t-1})$, not $w(s_t)$, that appears on the RHS of \eqref{eqn:S-IC}.} Since the players' incentive constraints are binding in any optimal contract (Claim \ref{cl:binding}), so are the \eqref{eqn:S-IC} constraints. Taking the difference between \eqref{eqn:S-IC} at every $t+1$ and $t$, we have the following result.
\begin{lemma}\label{lem:BE}
    If $(\bm{s},\bm{p})$ is optimal, then $\bm{s}$ satisfies the following \textbf{break-even} condition:  \begin{equation}\label{eqn:recursion}
\delta[\pi(s_{t+1})-\pi(s_t)]=w(s_{t-1})-w(s_t),\quad\forall t\geq1.\tag{BE}
\end{equation}
\end{lemma}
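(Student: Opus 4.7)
The plan is to derive the break-even condition directly from the binding versions of the incentive constraints that have already been established. By Claim \ref{cl:binding}, any optimal contract must bind \eqref{eqn:E-IC} at every $t\geq 0$ and \eqref{eqn:P-IC} at every $t\geq 1$. Since \eqref{eqn:S-IC} at index $t$ is obtained by summing \eqref{eqn:E-IC} at $t-1$ and \eqref{eqn:P-IC} at $t$, it inherits the binding property, giving us the equality
\begin{equation*}
    \sum_{\tau=t}^{\infty}\delta^{\tau-t}G(s_\tau) = \frac{\pi(s_t)+w(s_{t-1})}{1-\delta},\quad \forall t\geq 1.
\end{equation*}

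The next step is to eliminate the infinite sum by taking a telescoping difference across two consecutive periods. Writing the displayed equality at index $t$ and at index $t+1$, I would split off the leading term $G(s_t)$ from the left-hand side of the $t$-equation, and then substitute in the $(t+1)$-equation (multiplied by $\delta$) for the remaining tail $\sum_{\tau=t+1}^{\infty}\delta^{\tau-t}G(s_\tau)$. This reduces the identity to a purely local relation among $s_{t-1}$, $s_t$, and $s_{t+1}$ with no infinite sums left.

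The final step is algebraic: using $G(s_t)=\pi(s_t)+w(s_t)$ and clearing the factor $1-\delta$, the cross terms cancel and what remains is exactly
\begin{equation*}
    \delta[\pi(s_{t+1})-\pi(s_t)] = w(s_{t-1})-w(s_t),
\end{equation*}
valid for all $t\geq 1$, which is \eqref{eqn:recursion}. There is no genuine obstacle here: the substantive work (showing that both incentive constraints bind in any optimum) has already been done in Claim \ref{cl:binding}, so the lemma is essentially a corollary obtained by a one-line telescoping manipulation, and I only need to verify the index $t=1$ case uses \eqref{eqn:S-IC} at $t=1$ and $t=2$, both of which are covered by the range $t\geq 1$.
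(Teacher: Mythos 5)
Your proof is correct and follows essentially the same route as the paper: sum the binding \eqref{eqn:E-IC} at $t-1$ and \eqref{eqn:P-IC} at $t$ to get a binding \eqref{eqn:S-IC} at every $t\geq 1$, then telescope consecutive equalities to obtain \eqref{eqn:recursion}. The only cosmetic point is that Claim \ref{cl:binding} is stated for \emph{nontrivial} optimal contracts, so strictly you should add that \eqref{eqn:recursion} holds trivially (as $0=0$) when the trivial contract is optimal --- the paper's proof makes exactly this case split.
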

To understand \eqref{eqn:recursion}, recall that the principal needs to pay the expert $\frac{w(s_{t-1})-w(s_t)}{1-\delta}$ in period $t$. Only after this payment is made, the expert will carry out the training of $x_t$, which increases the novice's knowledge from $s_t$ to $s_{t+1}$. From the principal's perspective, this training generates a perpetual benefit of $\pi(s_{t+1})-\pi(s_t)$ each period from $t+1$ onward, which totals $\frac{\delta[\pi(s_{t+1})-\pi(s_t)]}{1-\delta}$. In any optimal contract, condition \eqref{eqn:recursion} requires that, at every $t\geq 1$, the principal's benefit from the expert's training must equal the payment she is supposed to make in that period.
\bigskip

The previous results allow us to further rewrite the principal's objective function. Note that
\begin{align*}
    \Pi_0(\bm{s},\bm{p}) = \pi(s_0) - p_0 + \delta \Pi_1(\bm{s},\bm{p})
    =\pi(s_0)+\frac{\delta\pi(s_1)}{1-\delta},
\end{align*}
where the second equality is implied by the binding \eqref{eqn:P-IC} constraint at $t=1$ and the fact that $p_0=0$. Hence, the principal's problem can be simplified as follows.
\begin{lemma}\label{lem:equivalence}
    A contract $(\bm{s},\bm{p})$ is optimal if and only if $\bm{p}= \bm{p}^*(\bm{s})$ and $\bm{s}$ solves the following program:
    \begin{equation}
        \max_{\bm{s}'}\ s_1' \quad \emph{s.t.}\quad \eqref{eqn:M},\ \eqref{eqn:recursion}.\label{eqn:program2}
    \end{equation}
\end{lemma}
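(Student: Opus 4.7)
The approach is to exploit the simplification of the principal's profit already delivered by Claim \ref{cl:binding} and Lemma \ref{lem:frontloading}: in any optimal contract $p_0=0$ and \eqref{eqn:P-IC} binds at $t=1$, so $\Pi_0=\pi(s_0)+\delta\pi(s_1)/(1-\delta)$, which, since $\pi$ is strictly increasing and $s_0=0$ is fixed, is a strictly increasing function of $s_1$ alone. Combined with Lemma \ref{lem:BE}, which forces any optimal $\bm{s}$ to satisfy \eqref{eqn:recursion}, the plan is to show conversely that every $\bm{s}$ satisfying \eqref{eqn:M} and \eqref{eqn:recursion} gives rise to an implementable contract via $\bm{p}^*(\bm{s})$; once this converse is in place, the principal's problem collapses to $\max s_1$ over \eqref{eqn:M} and \eqref{eqn:recursion}.

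For the converse, I would fix any $\bm{s}$ satisfying \eqref{eqn:M} and \eqref{eqn:recursion} and verify that $(\bm{s},\bm{p}^*(\bm{s}))$ is implementable. \eqref{eqn:LL} follows from \eqref{eqn:M} and monotonicity of $w$, and \eqref{eqn:E-IC} binds at every $t$ by construction of \eqref{eqn:P*}. The main task is \eqref{eqn:P-IC}. Since \eqref{eqn:E-IC} binds, for $t\geq 1$ this is equivalent to \eqref{eqn:S-IC} at $t$. Define $f(t)\equiv\sum_{\tau=t}^{\infty}\delta^{\tau-t}G(s_\tau)-[\pi(s_t)+w(s_{t-1})]/(1-\delta)$. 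A direct algebraic computation using \eqref{eqn:recursion} together with the one-step recursion $\sum_{\tau=t}^{\infty}\delta^{\tau-t}G(s_\tau)=G(s_t)+\delta\sum_{\tau=t+1}^{\infty}\delta^{\tau-(t+1)}G(s_\tau)$ reduces to $f(t)=\delta f(t+1)$ for every $t\geq 1$. Iterating gives $f(t)=\delta^{T-t}f(T)$ for every $T>t$, and since $\pi,w,G$ are all bounded on $[0,1]$ so is $f(T)$; letting $T\to\infty$ yields $f(t)=0$. For \eqref{eqn:P-IC} at $t=0$, the inequality $\pi(s_0)+\delta\pi(s_1)/(1-\delta)\geq \pi(s_0)/(1-\delta)$ simplifies to $\pi(s_1)\geq\pi(s_0)$, which holds by \eqref{eqn:M}.

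With the converse in hand, both directions of the lemma follow by comparing values. Let $V$ denote the maximum of \eqref{eqn:program} and $U$ the maximum of \eqref{eqn:program2}. The converse gives $V\geq\pi(s_0)+\delta\pi(U)/(1-\delta)$, while Claim \ref{cl:binding} together with Lemmas \ref{lem:frontloading}--\ref{lem:BE} yield the reverse inequality. A contract is therefore optimal iff it satisfies $\bm{p}=\bm{p}^*(\bm{s})$ (Lemma \ref{lem:frontloading}), $\bm{s}$ obeys \eqref{eqn:M} and \eqref{eqn:recursion} (feasibility and Lemma \ref{lem:BE}), and $s_1=U$ — i.e., iff $\bm{s}$ solves \eqref{eqn:program2} and $\bm{p}=\bm{p}^*(\bm{s})$.

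The main obstacle is the \eqref{eqn:P-IC} verification in the converse. Condition \eqref{eqn:recursion} is derived from first differences of binding \eqref{eqn:S-IC} constraints, so on its own it pins down no particular level of \eqref{eqn:S-IC}. The telescoping identity $f(t)=\delta f(t+1)$ combined with boundedness — effectively a transversality condition obtained for free because $\delta<1$ and the state space is compact — is what propagates equality to every $t\geq 1$ at once; everything else is either a direct invocation of earlier lemmas or routine algebra.
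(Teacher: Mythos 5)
Your proposal is correct and follows essentially the same route as the paper: reduce the objective to $\pi(s_0)+\delta\pi(s_1)/(1-\delta)$ via the binding constraints from Claim \ref{cl:binding} and Lemmas \ref{lem:frontloading}--\ref{lem:BE}, then argue that the problem collapses to maximizing $s_1$ subject to \eqref{eqn:M} and \eqref{eqn:recursion}. Your explicit verification of the converse --- that any $\bm{s}$ satisfying \eqref{eqn:M} and \eqref{eqn:recursion} paired with $\bm{p}^*(\bm{s})$ is implementable, via the telescoping identity $f(t)=\delta f(t+1)$ plus boundedness --- is a step the paper leaves implicit but is exactly the intended argument, and it is needed for both directions of the equivalence.
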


Lemma \ref{lem:equivalence} suggests that the principal should push for substantial knowledge transfer in the initial stage, so long as the scheme remains sustainable. Recall that $p_0^*=0$, so the initial transfer is provided ``for free," or in the language of \cite{Garicano2017Relational}, the expert offers a ``knowledge gift." A profit-maximizing principal should then try to induce the expert to deliver the largest possible knowledge gift.\footnote{We discuss how this prediction is reflected in practice in Section \ref{sec:applications} and compare it to \cite{Garicano2017Relational} in Section \ref{subsec:rayo}.}

Technically, the frontloading payment scheme $\bm{p}^*(\bm{s})$ anchors the expert's time-$0$ value to $\frac{w(s_0)}{1-\delta}$, so the principal only needs to look for an implementable contract that maximizes surplus. Once $s_1$ is fixed, the difference equation given by \eqref{eqn:recursion} fully pins down the entire knowledge sequence. 
It turns out that a greater $s_1$ will lead to a uniformly higher sequence; therefore, it is sufficient to boost $s_1$ (or equivalently, the training in period $0$) as much as possible. 


\subsection{Characterization}
Given any $\{s_t\}_{t\geq 0}$ satisfying \eqref{eqn:recursion}, let the limiting knowledge level be $\bar{s} \equiv \lim_{t}s_t$, and note that feasibility implies $\bar{s}\leq 1$. Summing both sides of \eqref{eqn:recursion} over all $t\geq 1$, we obtain
\begin{equation*}
    \delta[\pi(\bar{s}) - \pi(s_1)] = w(s_0) - w(\bar{s}).
\end{equation*}
This implies the following relationship between $s_1$ and $\bar{s}$:
\begin{equation*}
    \pi(s_1) = \frac{\delta\pi(\bar{s})+w(\bar{s})-w(s_0)}{\delta}.
\end{equation*}
Thus, solving program \eqref{eqn:program2} is equivalent to choosing a limiting knowledge level that maximizes $\delta\pi(\bar{s})+w(\bar{s})$.

Let 
\begin{align}
    \bar{s}^* &\equiv \min\{ \argmax_{s\in [0,1]}\ \delta\pi(s)+w(s)\},\label{eqn:sbar}\\
    s_1^*&\equiv \pi^{-1}\left(\frac{\delta\pi(\bar{s}^*)+w(\bar{s}^*)-w(s_0)}{\delta}\right).\label{eqn:s1}
\end{align}
That is, $\bar{s}^*$ is the \textit{smallest} maximizer of $\delta\pi+w$, and $s_1^*$ is defined using the relationship we just derived. Now we can state the main theorem.
\begin{theorem}\label{thm:profit}
    There exists a unique optimal contract $(\bm{s}^*,\ \bm{p}^*)$. If $\bar{s}^* = 0$, $(\bm{s}^*,\ \bm{p}^*)$ is the trivial contract with no knowledge transfer. If $\bar{s}^* \in (0,\ 1]$, $(\bm{s}^*,\ \bm{p}^*)$ has the following structure:
    \begin{enumerate}
        \item[i)] \textbf{Gradual knowledge transfer:} $s_t^*<s_{t+1}^*$, for all $t\geq 0$;
        \item[ii)] \textbf{Constrained surplus maximization:} $\bm{s}^*$ is the uniformly highest sequence among all that deliver tight incentive constraints, with $s_1^*$ given by \eqref{eqn:s1} and the limiting knowledge level $\bar{s}^*$ given by \eqref{eqn:sbar};
        \item[iii)] \textbf{No upfront payment:} $p_0^* = 0$;
        \item[iv)] \textbf{Frontloaded compensation upon verification of training:} $p_t^* = \frac{w(s_{t-1}^*)-w(s_t^*)}{1-\delta}$.
    \end{enumerate}
\end{theorem}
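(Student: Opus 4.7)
The plan is to invoke Lemma \ref{lem:equivalence} and show that program \eqref{eqn:program2} admits a unique solution $\bm{s}^*$ with the stated properties; parts (iii)--(iv) then follow from $\bm{p}^* = \bm{p}^*(\bm{s}^*)$ via \eqref{eqn:P*}. The reduction already developed above the theorem statement tells us that for any $\bm{s}$ satisfying \eqref{eqn:M} and \eqref{eqn:recursion}, the limit $\bar{s} := \lim_t s_t$ exists and satisfies $H(\bar{s}) = \delta\pi(s_1) + w(s_0)$ with $H := \delta\pi + w$, so maximizing $s_1$ is equivalent to maximizing $H(\bar{s})$ over feasible limits in $[0,1]$; continuity of $H$ on $[0,1]$ yields $\bar{s}^*$ as in \eqref{eqn:sbar} and the candidate value $s_1^*$ as in \eqref{eqn:s1}.

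The key construction step rewrites \eqref{eqn:recursion} using the per-period conserved quantity $\delta\pi(s_{t+1}) + w(s_t) = H(\bar{s}^*)$ (obtained by applying the same telescoping argument from time $t$ onward) as a first-order map $s_{t+1} = \Phi(s_t)$ with $\Phi(s) := \pi^{-1}\bigl((H(\bar{s}^*) - w(s))/\delta\bigr)$. Since $\pi^{-1}$ is increasing and $-w$ is increasing, $\Phi$ is strictly increasing, and its fixed points on $[0,1]$ are precisely the maximizers of $H$. Assuming $\bar{s}^* > 0$, an induction starting from $s_0 = 0$ shows $s_t < \bar{s}^*$ for all $t$: if $s_t < \bar{s}^*$, then $w(s_t) > w(\bar{s}^*)$ gives $\Phi(s_t) < \bar{s}^*$, while $H(s_t) < H(\bar{s}^*)$ (because $s_t$ lies strictly below the smallest maximizer) gives $\Phi(s_t) > s_t$. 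The sequence is therefore strictly increasing, bounded by $\bar{s}^* \leq 1$ (hence feasible), and converges to a fixed point of $\Phi$ that is no larger than $\bar{s}^*$; by minimality of $\bar{s}^*$, the limit equals $\bar{s}^*$, establishing (i) and part of (ii).

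For uniqueness, note that strict monotonicity of $\pi$ makes \eqref{eqn:recursion} uniquely solvable forward for $s_{t+1}$ given $(s_{t-1}, s_t)$, so fixing $s_0 = 0$ and $s_1 = s_1^*$ identifies $\bm{s}^*$ uniquely, and Lemma \ref{lem:equivalence} then uniquely identifies $\bm{p}^*$ via \eqref{eqn:P*}, yielding (iii) and (iv). The degenerate case $\bar{s}^* = 0$ falls out of the same formulas: \eqref{eqn:s1} collapses to $s_1^* = \pi^{-1}(\pi(0)) = 0$, the recursion forces $s_t^* \equiv 0$, and \eqref{eqn:P*} forces $p_t^* \equiv 0$. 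The principal obstacle is the construction step: recognizing that the second-order recursion \eqref{eqn:recursion} hides a conserved quantity that reduces it to a monotone one-dimensional dynamical system, and being careful about the case where $H$ has multiple maximizers so that $\Phi$ has multiple fixed points---it is precisely here that selecting $\bar{s}^*$ as the \emph{smallest} maximizer becomes essential for the iterates to land on the intended limit.
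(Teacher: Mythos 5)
Your construction is the paper's own argument in different clothing: the ``conserved quantity'' $\delta\pi(s_{t+1})+w(s_t)=\delta\pi(s_1)+w(s_0)$ is exactly how the paper telescopes \eqref{eqn:recursion}, and your monotone map $\Phi$ reproduces its induction showing $s_t^*<\bar{s}^*$, $s_t^*<s_{t+1}^*$, and $\lim_t s_t^*=\bar{s}^*$; likewise the bound $s_1\le s_1^*$ from maximizing $\delta\pi(\bar s)+w(\bar s)$, uniqueness via forward solvability of \eqref{eqn:recursion} from $(s_0,s_1^*)$, the degenerate case $\bar{s}^*=0$, and parts (iii)--(iv) via \eqref{eqn:P*} all match the paper's proof. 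Those steps are correct as written.

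There is, however, one genuine omission: you establish only ``part of (ii)'' and never return to the remaining claim, namely that $\bm{s}^*$ is the \emph{uniformly highest} sequence among all sequences delivering tight incentive constraints, i.e.\ $\bm{s}\le\bm{s}^*$ period by period for every $\bm{s}$ satisfying \eqref{eqn:M} and \eqref{eqn:recursion}. This is a substantive component of the theorem (it is exactly what the paper later invokes in Lemma \ref{lem:uniformlyhigh} and Theorem \ref{thm:pareto}), and it does not follow from $s_1\le s_1^*$ together with convergence of $\bm{s}^*$ to $\bar{s}^*$ alone. The paper closes it with a separate comparison induction: if $\bm{s}$ and $\bm{s}'$ both satisfy \eqref{eqn:recursion} and $s_1<s_1'$, then whenever $s_t<s_t'$ one has $\delta\pi(s_{t+1})+w(s_t)=\delta\pi(s_1)+w(s_0)<\delta\pi(s_1')+w(s_0)=\delta\pi(s_{t+1}')+w(s_t')<\delta\pi(s_{t+1}')+w(s_t)$, forcing $s_{t+1}<s_{t+1}'$. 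In your framework the fix is equally short: any tight sequence $\bm{s}$ is the orbit of $\Phi_c(s)=\pi^{-1}\bigl((c-w(s))/\delta\bigr)$ with constant $c=\delta\pi(s_1)+w(s_0)\le\delta\pi(\bar{s}^*)+w(\bar{s}^*)$, and since $\Phi_c\le\Phi$ pointwise, both maps are increasing, and $s_1\le s_1^*$, induction yields $s_t\le s_t^*$ for all $t$. Adding that one step makes the proposal match the full statement.
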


Figure \ref{fig:optimal} illustrates graphically the process for obtaining the optimal knowledge sequence. Once the limit $\bar{s}^*$ is found as the maximizer of $\delta\pi+w$, $s_1^*$ is determined by equating $\delta\pi(s_1^*)+w(s_0)$ with $\delta\pi(\bar{s}^*)+w(\bar{s}^*)$, and thereafter $s_{t+1}^*$ is determined recursively by equating $\delta\pi(s_{t+1}^*)+w(s_{t}^*)$ with $\delta\pi(\bar{s}^*)+w(\bar{s}^*)$.

\begin{figure}[h!]
\centering
\begin{minipage}[t]{0.9\linewidth}
\centering
\begin{tikzpicture}[scale=5]
\draw[thick](0,0)--(1.1,0)node[right]at(1.1,0){$s$}node[below]at(0,0){$0$}node[below]at(1,0){$1$};
\draw[thick](0,0)--(0,1);
\draw[smooth,domain=0:1]plot(\x,{0.6-0.5*\x^2})node[right]{$w(s)$};
\draw[smooth,domain=0:1]plot(\x,{0.6-0.5*\x^2+0.8*\x})node[right]{$\delta\pi(s)+w(s)$};
\draw[dashed,thick,smooth,domain=0:0.5]plot(\x,{0.6+0.8*\x})node[right]{$\delta\pi(s)+w(s_0)$}node[left]at(0,0.6){$w(s_0)$};
\draw[thick,dotted](0.8,0.92)--(0.8,0)node[below]{$\bar{s}^*$};
\draw[line width=2pt](0,0.92)--(0.8,0.92);
\draw[thick,dotted](0.4,0.92)--(0.4,0)node[below]{$s_1^*$};

\fill (0.8,0.92)circle(0.01);
\fill (0.4,0.92)circle(0.01);
\end{tikzpicture}

(a)
\end{minipage}

\begin{minipage}[t]{0.48\linewidth}
\centering
\begin{tikzpicture}[scale=5]
\draw[thick](0,0)--(1.1,0)node[right]at(1.1,0){$s$}node[below]at(0,0){$0$}node[below]at(1,0){$1$};
\draw[thick](0,0)--(0,1);
\draw[smooth,domain=0:1]plot(\x,{0.6-0.5*\x^2})node[right]{$w(s)$};
\draw[smooth,domain=0:1]plot(\x,{0.6-0.5*\x^2+0.8*\x})node[below]at(1.1,0.9){$\delta\pi(s)+w(s)$};
\draw[thick,dotted](0.8,0.92)--(0.8,0)node[below]{$\bar{s}^*$};
\draw[line width=2pt](0,0.92)--(0.8,0.92);
\draw[thick,dotted](0.4,0.92)--(0.4,0)node[below]{$s_1^*$};
\draw[thick,dotted](0.4,0.52)--(0,0.52);
\draw[dashed,thick,smooth,domain=0:0.6]plot(\x,{0.52+0.8*\x})node[right]{$\delta\pi(s)+w(s_1^*)$}node[left]at(0,0.52){$w(s_1^*)$};
\draw[thick,dotted](0.5,0.92)--(0.5,0)node[below]{$s_2^*$};

\fill (0.8,0.92)circle(0.01);
\fill (0.4,0.92)circle(0.01);
\fill (0.4,0.52)circle(0.01);
\fill (0.5,0.92)circle(0.01);
\end{tikzpicture}

(b)
\end{minipage}
\begin{minipage}[t]{0.48\linewidth}
\centering
\begin{tikzpicture}[scale=5]
\draw[thick](0,0)--(1.1,0)node[right]at(1.1,0){$s$}node[below]at(0,0){$0$}node[below]at(1,0){$1$};
\draw[thick](0,0)--(0,1);
\draw[smooth,domain=0:1]plot(\x,{0.6-0.5*\x^2})node[right]{$w(s)$};
\draw[smooth,domain=0:1]plot(\x,{0.6-0.5*\x^2+0.8*\x})node[below]at(1.1,0.9){$\delta\pi(s)+w(s)$};
\draw[thick,dotted](0.8,0.92)--(0.8,0)node[below]{$\bar{s}^*$};
\draw[line width=2pt](0,0.92)--(0.8,0.92);
\draw[thick,dotted](0.4,0.92)--(0.4,0)node[below]{$s_1^*$};
\draw[thick,dotted](0.5,0.92)--(0.5,0)node[below]{$s_2^*$};
\draw[thick,dotted](0.5,0.475)--(0,0.475);
\draw[dashed,thick,smooth,domain=0:0.65]plot(\x,{0.475+0.8*\x})node[right]{$\delta\pi(s)+w(s_2^*)$}node[left]at(0,0.475){$w(s_2^*)$};
\draw[thick,dotted](0.55625,0.92)--(0.55625,0)node[below]{$s_3^*$};

\fill (0.8,0.92)circle(0.01);
\fill (0.4,0.92)circle(0.01);
\fill (0.4,0.52)circle(0.01);
\fill (0.5,0.92)circle(0.01);
\fill (0.5,0.475)circle(0.01);
\fill (0.55625,0.92)circle(0.01);
\end{tikzpicture}

(c)
\end{minipage}
\caption{Constructing the Optimal Knowledge Sequence for $\pi(s)=s,\ w(s)=0.6-0.5s^2,\ \delta=0.8$.}\label{fig:optimal}
\end{figure}

To understand why $\delta\pi + w$ matters, note that keeping the expert's payoffs constant, the principal always wishes for more training and a more knowledgeable novice. But the expert is willing to train only if he will be compensated after training is provided, and the principal is willing to pay the promised compensation only if she can receive sufficient benefit from \textit{future} training specified in the contract. Therefore, the total amount of future benefits limits the principal's willingness to pay, which then limits how much training the expert wants to provide in each period. 

Given a contract $(\bm{s},\bm{p})$, in any period $k$, sticking to the contract gives the principal roughly $$\frac{\delta \pi(\bar{s})-[w(s_{k-1}) - w(\bar{s})]}{1-\delta},$$
where $\frac{w(s_{k-1})-w(\bar{s})}{1-\delta}$ is the approximate total payment from period $k$ onward until the knowledge level stabilizes, and $\frac{\delta\pi(\bar{s})}{1-\delta}$ is the approximate value of \textit{future} knowledge following the principal's payment at $k$. On the other hand, if the principal were to default, he would only obtain $\frac{\delta\pi(s_{k})}{1-\delta}$ starting $k+1$.\footnote{Note that the principal can get $\pi(s_k)$ at $k$ regardless of his payment decision in the current period, so it drops out of the payoff comparison.} As a result, increasing $\delta\pi(\bar{s})+w(\bar{s})$ would always help enhance the long-run value of the contract to the principal, making higher payments and hence greater knowledge transfers sustainable.

Since $\pi$ is strictly increasing and $w$ is strictly decreasing, it is clear that the discount factor $\delta$ plays a crucial role in shaping the optimal contract.
\begin{proposition}\label{prop:delta}
    There exist  $0\leq \underline{\delta}\leq\bar{\delta}\leq 1$ such that:
    \begin{enumerate}
        \item[i)] the optimal contract is nontrivial if and only if $\delta> \underline{\delta}$;
        \item[ii)] $\bar{s}^*(\delta) < 1$ if $\delta<\bar{\delta}$, and $\bar{s}^*(\delta) = 1$ if $\delta>\bar{\delta}$.
    \end{enumerate}
    Moreover, $\delta_1>\delta_2$ implies $\bm{s}^*(\delta_1)\geq \bm{s}^*(\delta_2)$ uniformly. Finally, $\lim_{\delta\to 1}s_t^*(\delta) < 1$ for every $t$.
\end{proposition}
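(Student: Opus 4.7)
I would reduce everything to properties of the convex value function $V(\delta) := \max_{s \in [0,1]}[\delta\pi(s) + w(s)] = \delta\pi(\bar{s}^*(\delta)) + w(\bar{s}^*(\delta))$. The first step is to observe that $\delta\pi(s) + w(s)$ has strictly increasing differences in $(\delta, s)$: for $\delta_1 > \delta_2$ and $s_1 > s_2$, the cross-difference equals $(\delta_1 - \delta_2)(\pi(s_1) - \pi(s_2)) > 0$. By Topkis's theorem, $\bar{s}^*(\delta) = \min\argmax_{s}[\delta\pi(s) + w(s)]$ is nondecreasing in $\delta$. Because $w$ is strictly decreasing, $\bar{s}^*(0) = 0$; because $G=\pi+w$ is strictly increasing, $\bar{s}^*(1) = 1$. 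Setting $\underline{\delta} := \sup\{\delta \in [0,1] : \bar{s}^*(\delta) = 0\}$ and $\bar{\delta} := \inf\{\delta \in [0,1] : \bar{s}^*(\delta) = 1\}$ yields two thresholds in $[0,1]$ with $\underline{\delta} \leq \bar{\delta}$. Theorem~\ref{thm:profit} says the optimal contract is nontrivial iff $\bar{s}^* > 0$, which together with the monotonicity above delivers parts (i) and (ii).

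For the uniform monotonicity of $\bm{s}^*(\delta)$, I would telescope \eqref{eqn:recursion}: the invariant $\delta\pi(s_{t+1}^*) + w(s_t^*) = \delta\pi(s_t^*) + w(s_{t-1}^*)$, combined with $s_t^* \to \bar{s}^*$, gives
\[
\pi(s_t^*(\delta)) = \frac{V(\delta) - w(s_{t-1}^*(\delta))}{\delta} \quad \text{for all } t \geq 1.
\]
Define $\Phi(s, \delta) := \pi^{-1}\!\left(\tfrac{V(\delta) - w(s)}{\delta}\right)$, so that $s_t^*(\delta) = \Phi(s_{t-1}^*(\delta), \delta)$ starting from $s_0 \equiv 0$. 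Since $V$ is the maximum of affine functions of $\delta$, it is convex, with the boundary value $V(0) = w(0) = w(s_0)$. Decomposing
\[
\frac{V(\delta) - w(s)}{\delta} = \frac{V(\delta) - V(0)}{\delta - 0} + \frac{w(s_0) - w(s)}{\delta},
\]
the first summand is the secant slope of $V$ anchored at the origin and is therefore nondecreasing in $\delta$ by convexity; the second summand has a nonpositive numerator whenever $s \geq s_0$, so dividing by a larger $\delta$ also makes it (weakly) larger. Hence $\Phi(s, \delta)$ is nondecreasing in $\delta$ for $s \geq s_0$, and trivially nondecreasing in $s$ since $-w$ is increasing. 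An induction on $t$ then delivers $\bm{s}^*(\delta_1) \geq \bm{s}^*(\delta_2)$ coordinatewise whenever $\delta_1 \geq \delta_2$.

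For the final limit claim, continuity of $\pi$ and $w$ (Assumption~\ref{ass:1}) makes $V$ continuous, so $V(\delta) \to V(1) = \pi(1) + w(1)$ as $\delta \to 1$; moreover, by Step~2 monotonicity the limits $s_t^\infty := \lim_{\delta \to 1} s_t^*(\delta)$ all exist. I would induct on $t$: $s_0^\infty = 0 < 1$ trivially; assuming $s_{t-1}^\infty < 1$, the closed form yields
\[
\lim_{\delta \to 1}\pi(s_t^*(\delta)) = V(1) - w(s_{t-1}^\infty) = \pi(1) + \bigl[w(1) - w(s_{t-1}^\infty)\bigr] < \pi(1),
\]
where the strict inequality uses $s_{t-1}^\infty < 1$ and the strict monotonicity of $w$. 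Hence $s_t^\infty < 1$, and the induction is complete.

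The main subtlety is the uniform monotonicity step: the iteration $\Phi$ depends on $\delta$ both directly and through $\bar{s}^*(\delta)$, and the latter need not even be continuous. Repackaging all the $\delta$-dependence through the convex function $V(\delta)$, and noting the boundary identity $V(0) = w(s_0)$, converts the comparative-statics question into a clean secant-slope argument that plugs straight into the induction.
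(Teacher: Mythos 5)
Your proof is correct and follows the same skeleton as the paper's (Topkis for monotonicity of $\bar{s}^*(\delta)$, sup/inf thresholds, induction on $t$ via the recursion, and a limit computation), but the key step---uniform monotonicity of $\bm{s}^*(\delta)$---is organized differently. The paper runs the induction by chaining two inequalities directly: first it uses the induction hypothesis together with the sign of $w(s_{t-1}^*)-w(\bar{s}^*)$ to replace $\delta$ by $\delta'$ in the denominator, and then it invokes the suboptimality of $\bar{s}^*(\delta)$ in the $\delta'$-problem, i.e.\ $\delta'\pi(\bar{s}^*(\delta))+w(\bar{s}^*(\delta))\leq V(\delta')$. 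You instead package all the $\delta$-dependence into the convex value function $V(\delta)=\max_s[\delta\pi(s)+w(s)]$ and exploit the boundary identity $V(0)=w(s_0)$ to turn $\tfrac{V(\delta)-w(s)}{\delta}$ into a secant slope plus a nonpositive term; this isolates the envelope inequality as the single statement ``secant slopes of a convex function are nondecreasing'' and makes the one-step map $\Phi(s,\delta)$ visibly monotone in both arguments, which is arguably cleaner and reusable. Your treatment of the final limit is also slightly more careful than the paper's (which computes only $t=1$ explicitly and appeals to ``a simple induction''), and correctly routes through continuity of $V$ rather than convergence of $\bar{s}^*(\delta)$ itself. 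The one point you should patch: part (i) is an ``if and only if,'' so you must verify that the contract is trivial \emph{at} $\delta=\underline{\delta}$, i.e.\ that the set $\{\delta:\bar{s}^*(\delta)=0\}$ contains its supremum. The paper handles this by noting that $\bar{s}^*$, as the smallest selection of the argmax correspondence, is lower semicontinuous by the maximum theorem, hence (combined with monotonicity) left-continuous, so $\bar{s}^*(\underline{\delta})=\lim_{\delta\uparrow\underline{\delta}}\bar{s}^*(\delta)=0$. Without some such continuity remark, your thresholds deliver the claim only for $\delta\neq\underline{\delta}$.
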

When the players are highly impatient ($\delta<\underline{\delta}$), the principal can never justify her current payment obligation using the benefit of future training, and without a credible promise of compensation for his future losses, the expert is unwilling to provide any training. As they become moderately patient ($\underline{\delta}<\delta<\bar{\delta}$), some knowledge transfers can be sustained; however, a full transfer is \textit{not} achievable even in the long run, as any attempt to further increase training would require a payment obligation the principal cannot fulfill. When the players care sufficiently about future payoffs ($\delta>\bar{\delta}$), all knowledge will be transferred eventually, but even in the patience limit (as $\delta\to 1$), the transfer process remains nondegenerate with $s_0<s_1^*(1)<s_2^*(1)<\cdots <\bar{s}^*(1) = 1$.

\subsection{Pareto Efficiency}\label{sec:pareto}
The previous analysis focuses on profit maximization, rendering all the bargaining power to the principal in the contract proposal stage. In this section, we show that a weaker notion of Pareto efficiency delivers almost the same sharp characterization.

\begin{definition}
    An implementable contract $(\bm{s},\bm{p})$ is \textbf{Pareto-efficient} if there does not exist another implementable contract $(\bm{s}',\ \bm{p}')$ such that $\Pi_0(\bm{s}',\ \bm{p}')\geq \Pi_0(\bm{s},\bm{p})$ and $W_0(\bm{s}',\ \bm{p}')\geq W_0(\bm{s},\bm{p})$, with at least one strict inequality.
\end{definition}

We start with the following lemma.
\begin{lemma}\label{lem:uniformlyhigh}
    If $(\bm{s},\bm{p})$ is implementable, then $\bm{s}\leq \bm{s}^*$ uniformly, where $\bm{s}^*$ is the optimal knowledge sequence defined by \eqref{eqn:sbar}, \eqref{eqn:s1}, and \eqref{eqn:recursion}.
\end{lemma}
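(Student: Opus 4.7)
The plan is to prove this by induction on $t$, built on the following key intermediate bound:
\begin{equation*}
\delta\pi(s_t)+w(s_{t-1}) \;\le\; \delta\pi(\bar{s}^*)+w(\bar{s}^*) \quad \text{for every } t\ge 1.
\end{equation*}
Granting this, the induction runs as follows. The base case is $s_0 = 0 = s_0^*$. For the inductive step, suppose $s_{t-1}\le s_{t-1}^*$; since $w$ is strictly decreasing, $w(s_{t-1})\ge w(s_{t-1}^*)$. Telescoping \eqref{eqn:recursion} along $\bm{s}^*$, combined with \eqref{eqn:s1}, gives the conservation identity $\delta\pi(s_t^*)+w(s_{t-1}^*) = \delta\pi(\bar{s}^*)+w(\bar{s}^*)$ for all $t\ge 1$. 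Pairing this with the key bound yields $\delta\pi(s_t)\le\delta\pi(s_t^*)+[w(s_{t-1}^*)-w(s_{t-1})]\le\delta\pi(s_t^*)$, so $s_t\le s_t^*$ by strict monotonicity of $\pi$.

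To establish the key bound, fix an implementable $(\bm{s},\bm{p})$ and let $\bar{s}:=\lim_{t\to\infty}s_t\in[0,1]$, which exists by \eqref{eqn:M}. For each $t\ge 1$, define the slack in \eqref{eqn:S-IC} and the ``energy''
\begin{equation*}
B_t \;:=\; (1-\delta)\sum_{\tau\ge t}\delta^{\tau-t}G(s_\tau) \;-\; [\pi(s_t)+w(s_{t-1})] \;\ge\; 0, \qquad \psi_t \;:=\; \delta\pi(s_t)+w(s_{t-1}).
\end{equation*}
Using the one-step recursion $A_{t}=(1-\delta)G(s_t)+\delta A_{t+1}$ for $A_t:=(1-\delta)\sum_{\tau\ge t}\delta^{\tau-t}G(s_\tau)$, a short algebraic manipulation produces the identity $B_t - \delta B_{t+1} = \psi_{t+1}-\psi_t$. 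Telescoping from $t$ to $T$ and rearranging,
\begin{equation*}
\psi_{T+1} \;=\; \psi_t + B_t + (1-\delta)\sum_{\tau=t+1}^{T} B_\tau - \delta B_{T+1}.
\end{equation*}
As $T\to\infty$, continuity of $\pi$ and $w$ together with $s_T,s_{T+1}\to\bar{s}$ force $\psi_{T+1}\to\delta\pi(\bar{s})+w(\bar{s})$ and $B_{T+1}\to 0$, so $\delta\pi(\bar{s})+w(\bar{s}) \ge \psi_t$ (since all $B_\tau\ge 0$). Because $\bar{s}^*$ maximizes $\delta\pi(s)+w(s)$ on $[0,1]$ by \eqref{eqn:sbar}, the key bound follows.

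The main obstacle is spotting the correct energy $\psi_t$ together with the telescoping identity $B_t - \delta B_{t+1} = \psi_{t+1}-\psi_t$, which converts the nested \eqref{eqn:S-IC} inequalities into a sharp one-period bound on $\delta\pi(s_t)+w(s_{t-1})$. After that, passing to the limit $T\to\infty$ using boundedness and continuity, exploiting nonnegativity of the slacks, and running the induction against the conservation law along $\bm{s}^*$ are all routine.
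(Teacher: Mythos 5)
Your proof is correct, and it takes a genuinely different route from the paper's. The paper proves this lemma by an iterative-improvement argument: starting from any implementable $(\bm{s},\bm{p})$ with some slack \eqref{eqn:S-IC} constraint, it repeatedly applies the two-step perturbation from Claim \ref{cl:binding} to generate a monotone sequence of dominating knowledge sequences, passes to a pointwise limit, argues that the limit must satisfy \eqref{eqn:recursion} (otherwise the incremental adjustments could not vanish), and then invokes the comparison result in the proof of Theorem \ref{thm:profit} (sequences satisfying \eqref{eqn:recursion} are ranked by $s_1$, with $s_1^*$ maximal) to sandwich $\bm{s}\leq\bar{\bm{s}}\leq\bm{s}^*$. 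Your argument bypasses the perturbation-and-limit machinery entirely: since \eqref{eqn:P-IC} at $t$ plus \eqref{eqn:E-IC} at $t-1$ give \eqref{eqn:S-IC} for any implementable contract, your telescoping identity $B_t-\delta B_{t+1}=\psi_{t+1}-\psi_t$ for $\psi_t=\delta\pi(s_t)+w(s_{t-1})$ (which checks out, using $A_t=(1-\delta)G(s_t)+\delta A_{t+1}$) converts the nonnegative slacks directly into the sharp pointwise bound $\delta\pi(s_t)+w(s_{t-1})\leq\delta\pi(\bar{s})+w(\bar{s})\leq\delta\pi(\bar{s}^*)+w(\bar{s}^*)$, and the induction against the conservation law $\delta\pi(s_t^*)+w(s_{t-1}^*)=\delta\pi(\bar{s}^*)+w(\bar{s}^*)$ (which indeed follows from \eqref{eqn:s1} and \eqref{eqn:recursion}, and is how $\bm{s}^*$ is built in the proof of Theorem \ref{thm:profit}) finishes the comparison, including the degenerate case $\bar{s}^*=0$. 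What the paper's approach buys is that it reuses the constructive improvement step already needed for Claim \ref{cl:binding} and exhibits an implementable dominating sequence satisfying \eqref{eqn:recursion}; what your approach buys is a shorter, self-contained and more elementary proof that works with the incentive constraints of the given contract alone, avoiding the delicate limit-of-iterates step (e.g., handling flat stretches and showing the adjustments vanish). The only mild points to make explicit in a write-up are that $B_{T+1}\to 0$ uses uniform continuity of $G$ on $[0,1]$ together with $s_\tau\uparrow\bar{s}$, and that \eqref{eqn:S-IC} is available for every implementable contract by Lemma \ref{lem:revelation_principle}; both are routine.
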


Lemma \ref{lem:uniformlyhigh} indicates that the optimal knowledge sequence $\bm{s}^*$ characterized before uniformly dominates all other implementable knowledge sequences. Note that this is an even stronger result than part (ii) of Theorem \ref{thm:profit} where $\bm{s}^*$ is only shown to be the highest among those that make the incentive constraints binding. 

\begin{theorem}\label{thm:pareto}
    $(\bm{s},\bm{p})$ is Pareto-efficient if and only if $\bm{s}=\bm{s}^*$, $p_0\in \left[0,\frac{\delta[\pi(s_1^*)-\pi(s_0)]}{1-\delta}\right]$, and $p_t = \frac{w(s_{t-1}^*)-w(s_t^*)}{1-\delta}$.
\end{theorem}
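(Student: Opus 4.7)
The plan is to prove both directions via a surplus-maximization argument, exploiting the identity $\Pi_0(\bm{s},\bm{p})+W_0(\bm{s},\bm{p})=\sum_{t\geq 0}\delta^t G(s_t)$, which depends only on $\bm{s}$. Since Lemma \ref{lem:uniformlyhigh} shows $\bm{s}\leq \bm{s}^*$ uniformly for every implementable contract and $G$ is strictly increasing, the total surplus is uniquely maximized at $\bm{s}=\bm{s}^*$, giving the value $S^*\equiv \sum_{t\geq 0}\delta^t G(s_t^*)$. I would first establish that $\bm{s}=\bm{s}^*$ is necessary, then pin down the payment schemes that implement $\bm{s}^*$, and finally verify sufficiency.

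For the necessity of $\bm{s}=\bm{s}^*$, I would take any implementable $(\bm{s},\bm{p})$ with $\bm{s}\neq \bm{s}^*$ and exhibit an implementable $(\bm{s}^*,\bm{p}')$ that Pareto-dominates it. Two bounds are useful: P-IC at $t=0$ gives $\Pi_0(\bm{s},\bm{p})\geq \pi(s_0)/(1-\delta)$, while E-IC at $t=0$ combined with $p_0\geq 0$ gives $W_0(\bm{s},\bm{p})\geq w(s_0)/(1-\delta)$. Since the surplus is strictly less than $S^*$ when $\bm{s}\neq \bm{s}^*$, the interval $\left[\Pi_0(\bm{s},\bm{p}),\, S^*-W_0(\bm{s},\bm{p})\right]$ is nonempty and sits inside $\left[\pi(s_0)/(1-\delta),\, S^*-w(s_0)/(1-\delta)\right]$. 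Any point in this interval can be realized as the principal's value under an implementable contract $(\bm{s}^*,\bm{p}')$ of the form described in the theorem (to be verified in the next step), which weakly improves upon $(\bm{s},\bm{p})$ for both parties with strict improvement somewhere, contradicting Pareto efficiency.

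Given $\bm{s}=\bm{s}^*$, I next characterize the admissible payment schemes. The key observation is that summing E-IC at $t$ and P-IC at $t+1$ yields
\begin{equation*}
    \sum_{\tau\geq t+1}\delta^{\tau-t-1} G(s_\tau^*)\;\geq\;\frac{\pi(s_{t+1}^*)+w(s_t^*)}{1-\delta},
\end{equation*}
but the construction of $\bm{s}^*$ via Claim \ref{cl:binding} and Lemma \ref{lem:BE} (i.e., the binding IC constraints under $\bm{p}^*(\bm{s}^*)$) implies that the two sides are in fact equal. Consequently, P-IC at each $t\geq 1$ and E-IC at each $t\geq 0$ must all bind simultaneously. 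The binding E-IC constraints, together with the recursion $W_t=w(s_t^*)+p_t+\delta W_{t+1}$, then force $p_t=[w(s_{t-1}^*)-w(s_t^*)]/(1-\delta)$ for every $t\geq 1$, exactly as in Lemma \ref{lem:frontloading}. The only remaining degree of freedom is $p_0$: limited liability gives $p_0\geq 0$, while substituting the binding P-IC at $t=1$ into P-IC at $t=0$ delivers $p_0\leq \delta[\pi(s_1^*)-\pi(s_0)]/(1-\delta)$, yielding the stated interval.

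Sufficiency is then direct: for any $\bm{p}$ of the prescribed form, the contract $(\bm{s}^*,\bm{p})$ satisfies \eqref{eqn:M} and \eqref{eqn:LL}, binds all \eqref{eqn:E-IC} constraints by construction, binds \eqref{eqn:P-IC} at every $t\geq 1$ by the break-even identity, and satisfies \eqref{eqn:P-IC} at $t=0$ by the chosen upper bound on $p_0$. Thus $(\bm{s}^*,\bm{p})$ is implementable and attains the maximal total surplus $S^*$, putting it on the Pareto frontier. The main obstacle in the whole argument is the sharp budget identity in the third paragraph: I must leverage the specific construction of $\bm{s}^*$ to upgrade the weak inequality obtained from adding two IC constraints into an equality, thereby showing that implementability alone is enough to pin down every payment $p_t$ for $t\geq 1$.
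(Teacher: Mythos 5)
Your proposal is correct and follows essentially the same route as the paper: it uses Lemma \ref{lem:uniformlyhigh} plus the surplus identity $\Pi_0+W_0=\sum_t\delta^t G(s_t)$ to rule out any $\bm{s}\neq\bm{s}^*$, and then exploits the fact that $\bm{s}^*$ makes the \eqref{eqn:S-IC} constraints hold with equality so that \eqref{eqn:E-IC} and \eqref{eqn:P-IC} (for $t\geq 1$) must bind under any implementable contract with $\bm{s}=\bm{s}^*$, pinning down $p_t$ for $t\geq 1$ and bounding $p_0$ via limited liability and \eqref{eqn:P-IC} at $t=0$. This matches the paper's argument, including the observation that $p_0$ serves as the surplus splitter tracing out the frontier.
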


\begin{figure}[!htb]
\centering
\begin{minipage}[t]{0.8\linewidth}
\centering
\begin{tikzpicture}[scale=7]
\draw[line width=1.5pt](0,0)--(0.8,0)node[below]at(0.5,0){Expert's time-$0$ value ($W_0$)};
\draw[line width=1.5pt](0,0)--(0,1)node[left]at(0,0.5){\rotatebox{90}{Principal's time-$0$ profit ($\Pi_0$)}};
\fill[fill opacity=0.2](0.1,0.2)--(0.1,0.8)--(0.7,0.2);
\draw[line width=1.5pt] (0.1,0.8) -- (0.7,0.2);
\draw[dotted,line width=1.5pt](0,0.2)--(0.8,0.2)node[right]{$\Pi_0=\frac{\pi(s_0)}{1-\delta}$};
\draw[dotted,line width=1.5pt](0.1,0)--(0.1,1)node[right]{$W_0=\frac{w(s_0)}{1-\delta}$};
\fill[] (0.1,0.8) circle (0.015)node[right]{$A$};
\fill[] (0.7,0.2) circle (0.015)node[above]{$B$};
\draw[]node[right,font=\large]at(0.42,0.5){$\bm{s}=\bm{s}^*$}node[left]at(0.5,0.5){\rotatebox{-45}{Pareto frontier}};
\end{tikzpicture}
\end{minipage}
\caption{Pareto Frontier}\label{fig:sur_tri}
\end{figure}

Theorem \ref{thm:pareto} implies that the entire Pareto frontier can be traced out using the exact characterization of the profit-maximizing contract, with $p_0$ being the surplus splitter. To see why, note first that when $\bm{s} =\bm{s}^*$, all the \eqref{eqn:S-IC} constraints are binding by construction, which implies that the \eqref{eqn:E-IC} constraints are binding all the time. By the same argument as that below Lemma \ref{lem:frontloading}, every $p_t$ from period $1$ onward must satisfy the frontloading formula. Since the \eqref{eqn:P-IC} constraints starting $t=1$ are also ensured by the binding \eqref{eqn:S-IC} constraints, we only need to confirm that $p_0$ does not violate \eqref{eqn:P-IC} at $t=0$, that is,
\begin{equation*}
    p_0 + \pi(s_0) + \delta \Pi_1(\bm{s},\bm{p}) \geq 
    \frac{\pi({s_0})}{1-\delta} \quad \Rightarrow\quad p_0\leq \frac{\delta[\pi(s_1^*)-\pi(s_0)]}{1-\delta}.
\end{equation*}

On the other hand, due to Lemma \ref{lem:uniformlyhigh} and the players' incentive constraints at $t=0$, any implementable contract $(\bm{s},\bm{p})$ must induce a payoff pair within the shaded area of Figure \ref{fig:sur_tri}. It is then clear that as long as $\bm{s}\neq \bm{s}^*$, one can always combine the optimal knowledge sequence with an appropriately chosen $p_0$ to construct a Pareto improvement, hence the theorem.


\begin{remark}
    Our notion of Pareto efficiency considers only the principal and the expert. Theorem \ref{thm:pareto} holds verbatim if the novice's welfare is also included in the definition, because the novice's stage payoff, $v(\cdot)$, is an increasing function (Assumption \ref{ass:1}) and $\bm{s}^*$ uniformly dominates any other implementable $\bm{s}$ (Lemma \ref{lem:uniformlyhigh}).  
\end{remark}

\section{Applications}\label{sec:applications}
In this section, we apply our prior analysis to two cases of large-scale cross-border technology transfers, one in the Chinese auto industry and the other pertaining to the development of Korea Train Express.

\subsection{The Chinese Auto Industry}\label{subsec:auto}

Beginning in the 1970s, many developing countries implemented the \emph{quid pro quo} policy under which multinational corporations were required to transfer technology as a condition for market access \citep*{Holmes2015Quid}. After launching its economic reform in 1978, China urgently needed to learn advanced technologies and explore domestic markets.
Shortly after, China adopted the \emph{quid pro quo} policy in the automobile industry and followed it through until very recently. The requirement for technology transfer was explicit before 2001 (the year China joined the WTO) and became implicit afterwards. \cite*{Bai2022Quid} provide empirical evidence for the effectiveness of this policy and discuss the institutional background in detail. 

The threat to exclude induced many automakers to participate in a technology transfer process, and the primary way to implement it was through joint ventures (JVs) between international automakers and domestic partners.\footnote{In the model, we take participation as given without explicitly considering this pre-game exclusion threat.
}  
The first JV for automobile manufacturing was formed in 1983 between American Motors Corporation and Beijing Jeep Corporation. Other notable examples include the JV between Shanghai Automobile Industry Corporation (SAIC) and Volkswagen established in 1984, and that between SAIC and General Motors (GM) established in 1997. The ``Automotive
Industry Development Policy" issued by China's State Council mandated that cars manufactured by the JVs should reach the international technological level of the same period.\footnote{See Article 16 of Chapter 4 of the \textit{Automotive
Industry Development Policy}.} To meet this requirement, foreign automakers offered know-how and product lines as equity (capped at 50\%) soon after the policy was implemented, while domestic partners provided manufacturing facilities and labor. Consistent with Lemma \ref{lem:equivalence}, this policy helped facilitate substantial knowledge transfer in the initial stage of the partnership.

Establishing JVs was only the first step, as technology transfer was not an overnight process. Instead, it lasted for years via various channels, such as continued training of local workers and management, transfer and joint ownership of patents, and setting up R\&D offices within the JVs. Meanwhile, since these JVs were established, both central and local governments have continually supported participating foreign companies through a variety of subsidies, including guaranteed government procurements, land price discounts, and tax incentives.\footnote{For example, the contract between SAIC and Volkswagen specified that the Shanghai government guaranteed the purchase of Santana vehicles for taxis \citep{Yuan2025The}. Similarly, between 2004 and 2015, most taxis in Beijing (around 80,000 vehicles) were produced by Beijing-Hyundai, a joint venture established in 2002. 

Since 1986, local governments in China have had the authority to lower the costs for land use. Even recently, SAIC-Volkswagen acquired a new land parcel of 95,394 square meters at an ultra-low rate of 1,020 RMB per square meter from the Shanghai Government in 2022.  See Article 4 of the \emph{Provisions of the State Council for the Encouragement of Foreign Investment} 
(English version: \url{http://www.law-lib.com/lawhtm/1986/3910.htm}).

From 1991 to 2007,  eligible foreign enterprises were exempted from income tax during their first two profitable years. Subsequently, they received a 50\% reduction in income tax from the third to the fifth year. See Article 8 of the \emph{Income Tax Law of the People's Republic of China for Enterprises with Foreign Investment and Foreign Enterprises} (English version: \url{http://www.law-lib.com/lawhtm/1991/7536.htm}).
} 
In line with our predictions, the observed technology transfer process featured gradual, subsequent transfers after the initial push by the government; in turn, the government also provided continued compensations to international automakers to keep the scheme running.

However, even in the long run, the technological capacity of domestic manufacturers often remained strictly below that of their foreign partners. Take SAIC as an example. Despite its JVs beginning in the 80s and 90s, SAIC lacked the capability of manufacturing passenger vehicles on its own until 2006, when it launched the inaugural independent automotive brand, Roewe. Even after that, cars sold under this local brand were generally perceived as lower-end and of inferior quality compared to those under a JV brand.

More interestingly, different JVs exhibited distinct levels of collaboration, even after holding fixed the same domestic manufacturer. For example, it was reported that ``GM is holding tight to its more valuable technology: Beijing is eager to tap into foreign auto companies' clean-energy technologies, [b]ut GM doesn't want to share all its research with its Chinese partner."\footnote{See Sharon Terlep, ``Balancing the Give and Take in GM's Chinese Partnership." \textit{Wall Street Journal,} August 19, 2012.} 
In contrast, Volkswagen has appeared more open to collaborations even on its most advanced technologies, working closely with SAIC on developing and deploying innovations regarding electrification and autonomous driving.\footnote{See 
Aschhoff and Ludewig, ``All set for future mobility: Volkswagen Group and SAIC Strengthen Long-Standing Partnership with New Joint Venture Agreement," \textit{Volkswagen Group Press Releases}, November 27, 2024.} 
This divergence may be attributable to the fact that the economic relationship between China and the EU is often considered more stable and less politically sensitive than that between China and the US. As a result, the SAIC-Volkswagen joint venture is arguably associated with a higher discount factor and, as predicted by Proposition \ref{prop:delta}, is able to sustain a greater level of technology transfer in the long run.

\subsection{The Korea Train Express (KTX)}\label{subsec:KTX}

In the 1990s, the government of South Korea (principal) decided to build a high-speed rail (HSR) linking Seoul and Pusan, two of Korea's largest cities. To avoid continued dependence on foreign technologies and its potential jeopardy to operation and profitability, the Korean government specified technology transfer from foreign contractors (expert) to local companies (novice) as part of its request for proposal. 
In 1994, the Franco-British consortium Alstom won the contract after defeating Siemens from Germany and Mitsubishi from Japan, the latter of which was eliminated reportedly because it was unwilling to agree to the technology-transfer clause. For detailed institutional background, see \cite{Kang2014}.

The technology transfer process between Alstom and KTX resembles our characterization in Section \ref{sec:optimal}. First, the contract was structured around strict phasing, featuring gradual technology transfer and gradual payments \citep{Rouach2004Alstom}. In the first phase, Alstom manufactured 12 trainsets in Belfort, France, and delivered them directly. In the second phase, Alstom supplied components, and Korean companies assembled 34 trainsets domestically. In the third phase, Korean companies achieved full localization with Alstom's assistance. 
Meanwhile, payments from the Korean government were also made in a phased manner: Subsequent large-scale procurement funds would only be disbursed after verifying the technology transfer required in each phase, with the government directly testing the trainsets built by local manufacturers.

The Korean government and Alstom also ensured that a substantial amount of technology transfer took place in the initial phase. According to Alstom's press release, ``the technology transfer began with transferring and updating approximately 350,000 high speed train documents: drawings, specifications, manufacturing documents, procedures, purchasing documents, and training documents. Then came the training of Korean engineers to use them: more than 1,000 Koreans were trained in France; over 400 French engineers provided assistance to the Korean production."\footnote{See ``Official opening of the first high speed railway line in Korea," \textit{Alstom Press Releases}, March 30, 2004.
}

In this case, Alstom agreed to transfer all \emph{existing} knowledge to Korea while simultaneously continuing to develop \emph{new} expertise. On one hand, it agreed to fully transfer 100\% of the relevant technologies and permit 95\% of production activities to be carried out locally in Korea. This arrangement enabled Korean firms to acquire critical know-how and hands-on experience, laying the groundwork for them to independently manufacture their own trains in the future. On the other hand, to safeguard its technological advantage, Alstom remained dedicated to conducting efficient research and development initiatives and sustaining the rapid advancement of its core technologies \citep{Rouach2004Alstom}.

It is worth noting that South Korea has officially become an exporter of high-speed rail, competing directly with Alstom in the international market. In June 2024, South Korea and Uzbekistan signed a deal to apply KTX technology in Uzbekistan, with the first batch of trains delivered in December 2025.\footnote{See Nam Hyun-woo, ``Hyundai Rotem delivers trains for Uzbekistan ahead of schedule," \textit{The Korea Times}, December 10, 2025.} This marks the first time KTX technology has been exported, stressing the legit concerns of an expert about the intensified competition that arises from knowledge transfers. 

\section{Expert's Retirement}\label{sec:retire}
A crucial feature of the optimal contract in Section \ref{sec:optimal} is that the knowledge transfer process does \textit{not} end in finite time, and when the players are relatively impatient, knowledge cannot be fully transferred even in the long run. This is somewhat undesirable due to the inefficiencies it causes; moreover, in some applications such as employment relationships, having an expert that is as long-lived as the principal (firm) might not be the most reasonable setting to begin with.

To address these concerns, we consider a slightly different setup with overlapping generations of workers. At any point, the principal interacts with two workers, an expert and a novice. Each worker lives for $2K$ periods. A worker arrives as a novice with knowledge level $s_0=0$. He accumulates knowledge during his novice phase through the training provided by the current expert. The novice phase takes $K$ periods, after which he begins to play the role of an expert for another $K$ periods. Each time a novice is elevated to an expert, the previous expert ``retires" and a new worker joins the relationship as a novice. 

The stage payoffs (absent monetary transfers) of the principal, expert, and novice are still given by $\pi(s)$, $w(s)$, and $v(s)$, respectively, satisfying Assumption \ref{ass:1}, where $s$ is the knowledge level of the novice. Furthermore, if a worker's knowledge level is below $1$ after his novice phase ends (if $s_K<1$), the principal must incur a catch-up cost of $C(s_K)$ to improve it to $1$ so that the worker can start acting as an expert. We assume that $C$ is continuous and strictly decreasing, with $C(1)=0$. This cost captures the extra resources to be expended if a soon-to-be expert is not properly trained by his predecessor. This game is formally specified in Appendix \ref{app:retire}.

We focus on stationary equilibria in which the same contract $(\bm{s},\bm{p})\equiv (s_0,...,s_K;\  p_0,...,p_{K-1})$ is used between the principal and every expert. Because each worker is an inactive player during his novice phase, it suffices to analyze a finite game between the principal and an expert that ends after $K$ periods. 

Given a contract $(\bm{s},\bm{p})$, the continuation values of the principal and the expert at $t\in \{0,1,...,K-1\}$, respectively, are given by
\begin{align*}
\Pi_t^R(\bm{s},\bm{p})&=\sum_{\tau=t}^{K-1}\delta^{\tau-t}[\pi(s_\tau)-p_\tau]-\delta^{K-1-t}C(s_{K}),\\
    W_t^R(\bm{s},\bm{p})&=\sum_{\tau=t}^{K-1}\delta^{\tau-t}[w(s_\tau)+p_\tau].
\end{align*}
Assuming that any deviation from the contract is punished by ending all training and payments that are yet to come (until a new expert appears), we can write down the following incentive compatibility constraints for the principal and the expert:\footnote{This is the common worst punishment between the principal and the \textit{current} expert. One could imagine a more severe punishment to the principal if the next expert continues refusing to train once the training during his novice phase ends prematurely. However, this is arguably less reasonable because 
even though deviations are detectable by a novice, he can only observe the training he receives but not the payments between the other two parties, making him unable to infer who deviated.
}
\begin{align}
\Pi_t^R(\bm{s},\bm{p})&\geq \frac{(1-\delta^{K-t})\pi(s_t)}{1-\delta}-\delta^{K-1-t}C(s_k),\quad \forall t\in\{0,1,...,K-1\};\label{eqn:R-P-IC}\tag{R-P-IC} \\ 
W_{t+1}^R(\bm{s},\bm{p})&\geq \frac{(1-\delta^{K-t-1})w(s_t)}{1-\delta},\quad \forall t\in \{0,...,K-2\}.\label{eqn:R-E-IC}\tag{R-E-IC}
\end{align}
Now the principal's program can be written as:
\begin{equation}
\max_{(\bm{s},\bm{p})} \ \Pi_0^R(\bm{s},\bm{p})\quad \mbox{s.t.} \quad \eqref{eqn:M},\ \eqref{eqn:LL},\ \eqref{eqn:R-P-IC},\ \eqref{eqn:R-E-IC}. \label{eqn:program-r}
\end{equation}
Any solution to \eqref{eqn:program-r} is called an \textbf{optimal contract with retirement}.

\subsection{Analysis}
The optimal contract in this model shares a structure very similar to that in the baseline model, with a substantial difference: knowledge is always fully transferred when an expert retires.

\begin{lemma}\label{lem:sK=1}
    If $(\bm{s},\bm{p})$ is an optimal contract with retirement, then $s_K = 1$.
\end{lemma}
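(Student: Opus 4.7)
The plan is a one-shot perturbation and contradiction argument. Suppose $(\bm{s},\bm{p})$ is an optimal contract with retirement and, toward a contradiction, assume $s_K<1$. I will construct an implementable contract $(\bm{s}',\bm{p}')$ with strictly higher $\Pi_0^R$ by leaving every earlier knowledge level and every payment untouched — $s_t'=s_t$ for $t\leq K-1$ and $p_t'=p_t$ for every $t$ — and only raising the terminal knowledge to $s_K'=1$. The motivating observation is that, because the expert retires at the end of period $K-1$, the value $s_K$ enters the principal's program \eqref{eqn:program-r} only through the catch-up cost $C(s_K)$, and not through any of the expert's forward-looking incentives.

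Next I would check that $(\bm{s}',\bm{p}')$ remains implementable. The monotonicity piece of \eqref{eqn:M} is immediate: $s_K'=1\geq s_{K-1}=s_{K-1}'$ since $s_{K-1}\leq s_K<1$; the rest of \eqref{eqn:M} and \eqref{eqn:LL} are inherited. Each \eqref{eqn:R-E-IC} is stated only for $t\leq K-2$ and references $W_{t+1}^R$ and $w(s_t)$, whose values depend solely on $(s_\tau,p_\tau)$ for $\tau\leq K-1$; none involves $s_K$, so these constraints are literally unchanged. Each \eqref{eqn:R-P-IC} is also preserved: raising $s_K$ to $1$ drives $C(s_K)$ down to $0$, shifting $\Pi_t^R$ upward by $\delta^{K-1-t}C(s_K)$, while the right-hand side either is independent of $s_K$ (if the off-path catch-up cost reads $C(s_t)$) or shifts up by the \emph{same} amount (if it reads $C(s_K)$). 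In either reading, the constraint becomes weakly easier, and the original's slack (if any) is at least preserved.

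Finally, the profit improvement is
\[\Pi_0^R(\bm{s}',\bm{p}')-\Pi_0^R(\bm{s},\bm{p})=\delta^{K-1}\bigl[C(s_K)-C(1)\bigr]=\delta^{K-1}C(s_K)>0,\]
using $s_K<1$ together with $C$ strictly decreasing and $C(1)=0$. This contradicts the optimality of $(\bm{s},\bm{p})$, so $s_K=1$. There is no genuine obstacle; the only thing one must verify carefully is the asymmetric role of $s_K$ across the two sides of the incentive constraints — it is absent from the expert's IC (he has no continuation after retirement), while it strictly lowers the principal's payoff via the catch-up cost — which is precisely what makes raising $s_K$ to $1$ a free lunch for the principal whenever $s_K<1$.
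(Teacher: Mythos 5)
Your proposal is correct and matches the paper's own argument: the paper likewise raises $s_K$ to $1$, noting that $s_K$ never enters the expert's incentive constraints (he is retired afterwards) while the drop in the catch-up cost $C(s_K)$ both relaxes every \eqref{eqn:R-P-IC} constraint and strictly raises $\Pi_0^R$ by $\delta^{K-1}C(s_K)>0$. Your extra care about whether the off-path term reads $C(s_t)$ or $C(s_K)$ is a sensible robustness check but does not change the route.
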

From the current expert's perspective, training the novice no longer has an adverse effect on his future payoffs when he is at the very last period of his tenure, so he is indifferent. Meanwhile, increasing $x_{K-1}$ (thereby increasing $s_K$) reduces the principal's cost of dealing with an unqualified future expert, which improves her continuation value in the contract in every period. This relaxes the principal's incentive constraints and increases her profit. We emphasize that this result holds for every $\delta\in (0,1)$.

In apprenticeships, it is common for a master to teach his student without reservation when he is close to his retirement, even if the student is a potential competitor. Take independent watchmakers as an example:
Under the voluntary guidance of Philippe Dufour and two founders of Greubel Forsey, Michel Boulanger was trained with traditional manual skills and craftsmanship starting in 2007. Later, he put into practice the techniques he had learned by creating a timepiece by hand, which was auctioned for almost \$1.5 million USD to finance the subsequent transmission of craftsmanship to the next generation.\footnote{For more details, see \url{https://timeaeon.org/projects/naissance-dune-montre}.}

The rest of the analysis follows the baseline model closely. By the same reasoning as before, one can show that, in any optimal contract, the expert's incentive constraints must be binding all the time, and so are the principal's incentive constraints starting $t=1$. This implies that the optimal payment sequence must satisfy a modified frontloading formula:
\begin{equation}\label{eqn:R-P*}
\begin{aligned}
    p_0^*&=0,\\
    p_t^* &= \frac{\left(1-\delta^{K-t}\right)[w(s_{t-1})-w(s_t)]}{1-\delta},\text{ for }t\in \{1,...,K-1\}.
\end{aligned}\tag{R-P*}
\end{equation}

In addition, combining the players' binding incentive constraints, we find that the optimal knowledge sequence must satisfy a modified break-even condition: For $t\in \{1,\cdots,K-1\}$,
\vspace{0.25cm}
\begin{equation}\label{eqn:R-BE}
\underbrace{\frac{(\delta-\delta^{K-t})[\pi(s_{t+1})-\pi(s_t)]}{1-\delta}}_\text{Benefit of $x_t$ through stage payoffs}+\underbrace{\delta^{K-1-t}[C(s_t)-C(s_{t+1})]}_\text{Benefit of $x_t$ through cost savings}=\underbrace{\frac{(1-\delta^{K-t})[w(s_{t-1})-w(s_{t})]}{1-\delta}}_\text{Compensation to the expert}.\tag{R-BE}
\vspace{0.25cm}
\end{equation}
Like before, the break-even condition requires that, in each period, the principal's total benefit from the expert's training must equal the payment she is supposed to make in that period. In this case, the expert's training in period $t$ not only affects stage payoffs from period $t+1$ through $K-1$, but also the principal's catch-up cost after the current expert retires.

Then, we can write the principal's time-$0$ profit as
\begin{equation}\label{eqn:Pi0-r}
    \Pi_0^R(\bm{s},\bm{p}) = \pi_0(s_0) - p_0 + \delta\Pi_1^R(\bm{s},\bm{p}) = \pi_0(s_0) + \frac{(\delta - \delta^K)\pi(s_1)}{1-\delta} - \delta^{K-1}C(s_1),
\end{equation}
where the second equality follows from $p_0=0$ and the binding \eqref{eqn:R-P-IC} at $t=1$. The next lemma follows.

\begin{lemma}\label{lem:retire-problem}
    $(\bm{s},\bm{p})$ is an optimal contract with retirement if and only if $\bm{p}$ is given by $\eqref{eqn:R-P*}$ and $\bm{s}$ solves
    \begin{equation}
        \max_{\bm{s}'}\ s_1' \quad \emph{s.t.}\quad \eqref{eqn:M},\ \eqref{eqn:R-BE}.\label{eqn:program2-r}
    \end{equation}
\end{lemma}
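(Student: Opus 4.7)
The plan is to adapt the proof of Lemma \ref{lem:equivalence} from the baseline model, adjusting for the finite horizon and the catch-up cost $C(\cdot)$. The argument proceeds in three stages: establishing binding incentive constraints in any optimum, using these to derive the frontloading payment formula and break-even difference equation, and reducing the objective to a strictly increasing function of $s_1$.

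First, I would establish the finite-horizon analogue of Claim \ref{cl:binding}: in any optimal contract, $p_0 = 0$, \eqref{eqn:R-E-IC} binds for every $t \in \{0,\ldots,K-2\}$, and \eqref{eqn:R-P-IC} binds for every $t \in \{1,\ldots,K-1\}$. The two-step perturbation from Claim \ref{cl:binding} carries over with only minor adjustments. Fixing $\bm{s}$, the unique payment scheme binding all \eqref{eqn:R-E-IC} is obtained from the backward recursion $p_t = W_t^R - w(s_t) - \delta W_{t+1}^R$ with boundary $W_K^R = 0$; the telescoping spans $K-t$ terms, producing the factor $1 - \delta^{K-t}$ and giving exactly \eqref{eqn:R-P*}. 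If some \eqref{eqn:R-E-IC} constraint were slack, one could marginally increase some $s_{k+1}$ with $k+1 \leq K-1$ (holding $s_K$ fixed), which strictly raises $G = \pi + w$ in period $k+1$ and thereby improves the principal's profit while preserving all \eqref{eqn:R-P-IC}---contradicting optimality. The argument that $p_0 = 0$ and \eqref{eqn:R-P-IC} binds at every $t \geq 1$ mirrors the baseline proof verbatim.

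Second, I would use the binding constraints to derive \eqref{eqn:R-BE}. Differencing binding \eqref{eqn:R-P-IC} at consecutive periods $t$ and $t+1$ via the recursion $\Pi_t^R = \pi(s_t) - p_t + \delta \Pi_{t+1}^R$ yields an expression for $p_t$ in terms of $\pi(s_t)$, $\pi(s_{t+1})$, $C(s_t)$, and $C(s_{t+1})$, which matches the \eqref{eqn:R-P*} value precisely when \eqref{eqn:R-BE} holds; the $C(s_t) - C(s_{t+1})$ term arises because the deviation values at $t$ and $t+1$ involve different catch-up costs. Substituting $p_0 = 0$ and the binding \eqref{eqn:R-P-IC} at $t=1$ into $\Pi_0^R = \pi(s_0) - p_0 + \delta \Pi_1^R$ gives the profit formula \eqref{eqn:Pi0-r}; since $\pi$ is strictly increasing and $C$ is strictly decreasing, $\Pi_0^R$ is strictly increasing in $s_1$, so maximizing profit reduces to maximizing $s_1$.

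For the converse direction, given any $\bm{s}$ solving program \eqref{eqn:program2-r} and $\bm{p}$ defined by \eqref{eqn:R-P*}, I would verify implementability. Non-negativity of $\bm{p}$ follows from the monotonicity of $\bm{s}$ and $w$ being decreasing; \eqref{eqn:R-E-IC} binds by construction; and \eqref{eqn:R-P-IC} at $t \geq 1$ follows by telescoping \eqref{eqn:R-BE} backward through period $K-1$, using $s_K = 1$ at the optimum (guaranteed by Lemma \ref{lem:sK=1}, or equivalently by noting that higher $s_1$ propagates to higher $s_K$ through \eqref{eqn:R-BE}, so the maximum of $s_1$ is attained at the feasibility boundary $s_K = 1$). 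The main obstacle is \eqref{eqn:R-P-IC} at $t = 0$, which is not directly implied by \eqref{eqn:R-BE}; here I would note that since $s_1 \geq s_0$, the expression \eqref{eqn:Pi0-r} exceeds the deviation value $\frac{(1-\delta^K)\pi(s_0)}{1-\delta} - \delta^{K-1}C(s_0)$ by $\frac{(\delta-\delta^K)[\pi(s_1)-\pi(s_0)]}{1-\delta} + \delta^{K-1}[C(s_0) - C(s_1)]$, a sum of two nonnegative terms.
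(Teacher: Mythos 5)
Your proposal is correct and follows essentially the same route as the paper: adapt the two-step perturbation of Claim \ref{cl:binding} to show that \eqref{eqn:R-E-IC} and \eqref{eqn:R-P-IC} (from $t\geq 1$) bind at any optimum, recover \eqref{eqn:R-P*} from the binding expert constraints, difference the binding principal constraints to get \eqref{eqn:R-BE}, and use \eqref{eqn:Pi0-r} to reduce profit maximization to maximizing $s_1$, with the converse verified by telescoping \eqref{eqn:R-BE} back from $t=K-1$. One minor note: the telescoping step and \eqref{eqn:Pi0-r} do not actually require $s_K=1$ (the terminal case of \eqref{eqn:R-BE} already encodes the catch-up-cost difference), so your invocation of Lemma \ref{lem:sK=1} there is superfluous rather than a gap.
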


Similar to Lemma \ref{lem:equivalence}, the optimal knowledge sequence maximizes training in the very first period (i.e., $x_0$) among those that make the players' incentive constraints binding. Intuitively, the principal's benefit from all the training starting $t=1$ is fully offset by her payment obligations, but the training at $t=0$ is provided ``for free" (i.e., $p_0=0$). Thus, profit maximization boils down to maximizing the principal's rent that results from the initial training. 

\begin{theorem}\label{thm:profit-r}
    There is a unique optimal contract with retirement $(\bm{s}^*,\ \bm{p}^*)$, and it is structured as follows:
    \begin{enumerate}
        \item[i)] \textbf{Gradual knowledge transfer:} $s_0<s_1^*<\cdots<s_{K-1}^*<s_{K}^*=1$;
        \item[ii)] \textbf{Constrained maximization of knowledge gift:} $\bm{s}^*$ generates the highest $s_1^*$ among all that deliver tight incentive constraints;
        \item[iii)] \textbf{No upfront payment:} $p_0^* = 0$;
        \item[iv)] \textbf{Frontloaded compensation upon verification of training:} $p_t^* = \frac{\left(1-\delta^{K-t}\right)[w(s_{t-1}^*)-w(s_t^*)]}{1-\delta}$.
    \end{enumerate}
\end{theorem}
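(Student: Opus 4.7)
The plan is to leverage the existing reductions to convert the theorem into a finite-dimensional boundary value problem, which I then solve via a continuity-plus-monotonicity argument on the forward iteration of \eqref{eqn:R-BE}. By Lemma \ref{lem:retire-problem}, a contract is optimal if and only if $\bm{p}$ equals \eqref{eqn:R-P*} (which immediately yields (iii) and (iv)) and $\bm{s}$ solves $\max s_1'$ subject to \eqref{eqn:M} and \eqref{eqn:R-BE}. Lemma \ref{lem:sK=1} further pins $s_K^* = 1$ at any optimum. It thus suffices to analyze this boundary problem and verify (i) and (ii).

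My first step would be to recast \eqref{eqn:R-BE} in invertible form:
\[F_t(s_{t+1}) = F_t(s_t) + c_t[w(s_{t-1}) - w(s_t)],\]
where $F_t(s) := \frac{\delta - \delta^{K-t}}{1-\delta}\pi(s) - \delta^{K-1-t}C(s)$ is strictly increasing and continuous (reducing to $-C(s)$ at $t = K-1$) and $c_t := \frac{1-\delta^{K-t}}{1-\delta} > 0$. This makes $s_{t+1}$ a continuous function of $(s_{t-1}, s_t)$ via $F_t^{-1}$. Since $w$ is strictly decreasing, $s_{t-1} < s_t$ forces $s_{t+1} > s_t$; given that $s_1^* > 0$ (else the zero sequence would give $s_K = 0 \neq 1$), induction from $s_0 = 0 < s_1^*$ yields the strict chain in (i). Next, let $\phi(s_1) := s_K$ denote the value produced by iterating \eqref{eqn:R-BE} forward from $(s_0, s_1) = (0, s_1)$. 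Then $\phi$ is continuous as a composition of continuous maps, with $\phi(0) = 0$ by the fixed-point property of the zero sequence. By Lemma \ref{lem:sK=1}, $s_1^*$ is the unique value satisfying $\phi(s_1^*) = 1$ (once monotonicity is established), and (ii) follows because any larger initial $s_1$ would propagate to $s_K > 1$, violating \eqref{eqn:M}. Existence then follows from the intermediate value theorem.

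The main technical obstacle is showing $\phi$ is strictly increasing. A naive induction---$s_1 < s_1' \Rightarrow s_t < s_t'$---fails at the inductive step, since the bracketed difference $w(s_{t-1}) - w(s_t)$ can move either way when both arguments increase. I would resolve this by a sensitivity analysis: differentiating the forward iteration with respect to $s_1$ gives a linear two-step recursion for $\partial s_t / \partial s_1$, whose coefficient signs are pinned by Assumption \ref{ass:1} ($\pi' > 0$, $w' < 0$) and the monotonicity of $C$ ($C' < 0$). Iterating this linearized system and carefully tracking cross-term cancellations should establish $\partial s_K / \partial s_1 > 0$ along every feasible initial condition, delivering the required monotonicity. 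An alternative route is a variational contradiction using the cumulative sum of \eqref{eqn:R-BE}: summing the recursion from $t = 1$ to $K-1$ yields a global identity linking $s_0$, $s_1$, and $s_K$ through Abel summation on the coefficients $c_t$ (which decrease in $t$), and a careful comparison should rule out two distinct initial conditions producing the same endpoint.
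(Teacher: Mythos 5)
Your reduction steps are fine and match the paper: parts (iii) and (iv) are immediate from Lemma \ref{lem:retire-problem}, $s_K^*=1$ comes from Lemma \ref{lem:sK=1}, and your part (i) argument (rewriting \eqref{eqn:R-BE} as $F_t(s_{t+1})=F_t(s_t)+c_t[w(s_{t-1})-w(s_t)]$ with $F_t$ strictly increasing, then inducting from $s_0<s_1^*$) is essentially the paper's argument that a positive $s_1$ forces a strictly increasing sequence.

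The genuine gap is the step you yourself flag as the "main technical obstacle": strict monotonicity of $\phi(s_1)=s_K$. This claim is false under the paper's assumptions, so neither the sensitivity analysis nor the Abel-summation comparison can be made to work. The conserved-quantity structure that delivers monotone comparisons in the baseline model ($\delta\pi(s_{t+1})+w(s_t)$ constant along \eqref{eqn:recursion}) is destroyed by the $t$-dependent weights in \eqref{eqn:R-BE}: raising $s_1$ raises both arguments of the difference $w(s_{t-1})-w(s_t)$, which can shrink, and the linearized recursion for $\partial s_t/\partial s_1$ has coefficients of both signs, so no sign of $\partial s_K/\partial s_1$ is pinned by Assumption \ref{ass:1} and $C'<0$. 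Concretely, for $w(s)=1-s$, $C(s)=0.01(1-s)$, $\pi(s)=(1+s)^{10}$, $\delta=0.9$ and a short horizon, the forward iteration of \eqref{eqn:R-BE} produces a non-monotone $s_K(s_1)$ and the equation $\phi(s_1)=1$ has three distinct roots; this is exactly why the paper remarks that, unlike in the baseline model, $\bm{s}^*$ need not dominate other implementable sequences date by date. Consequently your assertions that $s_1^*$ is the \emph{unique} root of $\phi(s_1)=1$ and that any larger $s_1$ "propagates to $s_K>1$" both fail, and uniqueness cannot be obtained by injectivity of $\phi$. The correct route, which is the paper's, does not need monotonicity: part (ii) is essentially the content of Lemma \ref{lem:retire-problem} (among binding-IC sequences the objective \eqref{eqn:Pi0-r} is strictly increasing in $s_1$, so the optimum maximizes $s_1$, equivalently bilateral surplus); the optimal $s_1^*$ is then the \emph{largest} feasible root, not the only one; uniqueness of the contract follows because the maximal value $s_1^*$ is a single number, the invertibility of $F_t$ makes the forward iteration of \eqref{eqn:R-BE} determine the rest of $\bm{s}^*$ uniquely, and \eqref{eqn:R-P*} then pins down $\bm{p}^*$; existence follows from compactness of the (finite-dimensional) feasible set rather than from an intermediate value argument on $\phi$.
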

We make a few remarks comparing this characterization to Theorem \ref{thm:profit}. First, with retirement, the trivial contract is never optimal and knowledge is always fully transferred upon an expert's retirement, \textit{regardless of the rate of discounting}. Second, while the optimal knowledge sequence still maximizes knowledge gift and bilateral surplus, it does not necessarily dominate, date by date, other implementable sequences. Third, Pareto efficiency is trickier to define in this more complicated, overlapping-generation model. Even if we restrict attention to the class of stationary equilibria when looking for Pareto dominance, the fact that $\bm{s}^*$ is not always uniformly dominant still leaves room for an implementable $\bm{s}$ that raises the worker's value by sufficiently improving his payoffs during the novice phase.

\subsection{Effect of Catch-Up Cost}
In the model with retirement, the principal has to incur a catch-up cost to qualify a new expert for his role if he is not properly trained by the previous expert. Even though this cost is not incurred in the optimal contract ($s_K^*=1$), it does shape the structure of the contract in an important way.

Absent the catch-up cost, knowledge transfer is not possible except in an expert's final period of the game. This is because the principal would have no incentives to pay in that final period, but then the expert would not want to train the novice in the period before, leading to unraveling. In contrast, with the catch-up cost, if the principal does not pay a retiring expert, the latter can still punish the principal by withholding training in his final period, thereby forcing the catch-up cost onto the principal.  
This credible punishment makes higher payments and better training sustainable from the beginning of the game. The next result formalizes this intuition.

\begin{proposition}\label{prop:CSgeneralC}
Fix any two catch-up cost functions $C(\cdot)$ and $\hat{C}(\cdot)$ such that $\hat{C}(s)>C(s)$ for $s\in[0,1)$. 
\begin{enumerate}
    \item[i)] The principal is strictly better off under $\hat{C}$ than under $C$.
    \item[ii)] For every $t\in \{1,...,K-1\}$, $s_t^*$ converges to $1$ under $\lambda C(\cdot)$ as $\lambda$ increases.
\end{enumerate}
\end{proposition}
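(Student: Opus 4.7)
The proof of the two parts of Proposition \ref{prop:CSgeneralC} proceeds along parallel lines, both exploiting the structural characterization in Theorem \ref{thm:profit-r} and the binding \eqref{eqn:R-BE} condition. For (i), I will show that the optimal contract under $C$, when applied under $\hat{C}$, has strictly slack \eqref{eqn:R-P-IC} constraints and therefore admits a profit-improving perturbation analogous to Step 2 of Claim \ref{cl:binding}. For (ii), I will exploit \eqref{eqn:R-BE} to bound $C(s_t^*) - C(s_{t+1}^*)$ uniformly in $\lambda$ and then propagate convergence backward starting from $s_K^* = 1$.

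\textbf{Part (i).} Let $(\bm{s}^*, \bm{p}^*)$ denote the unique optimal contract under $C$. By Lemma \ref{lem:sK=1}, $s_K^* = 1$, so $C(s_K^*) = \hat{C}(s_K^*) = 0$ and the continuation value $\Pi_t^R(\bm{s}^*, \bm{p}^*)$ is invariant to the choice of catch-up cost; similarly, the \eqref{eqn:R-E-IC} constraints do not involve the cost function at all. However, the right-hand side of \eqref{eqn:R-P-IC} at each $t \in \{1,\ldots,K-1\}$ drops strictly upon replacing $C$ with $\hat{C}$, because $\hat{C}(s_t^*) > C(s_t^*)$ whenever $s_t^* < 1$. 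Since the original \eqref{eqn:R-P-IC} constraints were binding under $C$ (by the analog of Claim \ref{cl:binding}), they become strictly slack under $\hat{C}$. I then perturb by raising $s_1$ to $s_1^* + \varepsilon$ for small $\varepsilon > 0$ (keeping $s_t^*$ for $t \geq 2$) and updating payments via \eqref{eqn:R-P*}. The gradual transfer property $s_1^* < s_2^*$ preserves \eqref{eqn:M} and \eqref{eqn:LL}, \eqref{eqn:R-E-IC} binds by construction, and \eqref{eqn:R-P-IC} persists by continuity at $t = 1$ and by a direct $\Pi_t^R$-increase argument for $t \geq 2$. A telescoping simplification using the identity $(1-\delta^{K-1}) - \delta(1-\delta^{K-2}) = 1-\delta$ reduces the profit change to $\delta[G(s_1^*+\varepsilon) - G(s_1^*)] > 0$, which exceeds $\Pi_0^R(\bm{s}^*, \bm{p}^*)$ under $\hat{C}$ and hence the optimal value under $C$.

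\textbf{Part (ii).} Rewriting \eqref{eqn:R-BE} with $C$ replaced by $\lambda C$ as
\begin{equation*}
\lambda\, \delta^{K-1-t}[C(s_t^*) - C(s_{t+1}^*)] = \frac{(1-\delta^{K-t})[w(s_{t-1}^*) - w(s_t^*)]}{1-\delta} - \frac{(\delta-\delta^{K-t})[\pi(s_{t+1}^*) - \pi(s_t^*)]}{1-\delta},
\end{equation*}
I observe that the right-hand side is nonnegative (since the left-hand side is, by \eqref{eqn:M} and $C$ decreasing) and uniformly bounded above by $[w(0)-w(1)]/(1-\delta)$ for every $t \in \{1,\ldots,K-1\}$ and every $\lambda$, because $s_t^* \in [0,1]$ and $w$ is monotone. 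It follows that $C(s_t^*) - C(s_{t+1}^*) = O(1/\lambda) \to 0$ as $\lambda \to \infty$. By Lemma \ref{lem:sK=1}, $s_K^* = 1$ and $C(s_K^*) = 0$; applying the bound at $t = K-1$ gives $C(s_{K-1}^*) \to 0$, so $s_{K-1}^* \to 1$ by the continuity and strict monotonicity of $C$. A downward induction on $t$ then establishes $s_t^* \to 1$ for every $t \in \{1,\ldots,K-1\}$.

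\textbf{Main obstacles.} For (i), the key subtlety lies not in the profit-increase computation (which collapses cleanly via the telescoping identity to $G' > 0$ and thus leverages Assumption \ref{ass:1}(iii)), but in verifying that the perturbed contract satisfies all constraints simultaneously; in particular, non-negativity of the updated $p_2'$ relies on the strict gradual transfer $s_1^* < s_2^*$ guaranteed by Theorem \ref{thm:profit-r}. For (ii), the essential step is establishing the uniform boundedness of the right-hand side of \eqref{eqn:R-BE} in $\lambda$; this requires first bounding the magnitude of the (positive) stage-payoff term by the (larger) continuation-loss term on the same side, which follows from the feasibility inequality $C(s_t^*) \geq C(s_{t+1}^*)$ on the left.
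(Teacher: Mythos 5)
Your proof is correct. Part (ii) is essentially the paper's own argument: both rewrite \eqref{eqn:R-BE} under $\lambda C$, bound the right-hand side uniformly in $\lambda$ using the monotonicity of $w$ and $\pi$, conclude that the per-period increments vanish as $\lambda\to\infty$, and telescope back from $s_K^*=1$; your variant telescopes in $C$-values and inverts once at the end via continuity and strict monotonicity of $C$, whereas the paper telescopes the knowledge increments $s_{t+1}^*-s_t^*$ directly -- an immaterial difference. Part (i), however, takes a genuinely different route. The paper notes that, because $s_K^*=1$, the optimal contract under $C$ remains implementable under $\hat{C}$ but now has strictly slack \eqref{eqn:R-S-IC} constraints; since the unique optimum under $\hat{C}$ must have those constraints binding (Lemma \ref{lem:retire-problem}), it differs from $(\bm{s}^*,\bm{p}^*(\bm{s}^*))$, and uniqueness together with the cost-invariance of the on-path profit delivers the strict improvement. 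You instead build an explicit profit-improving perturbation: raise $s_1^*$ by $\varepsilon\leq s_2^*-s_1^*$, re-frontload payments via \eqref{eqn:R-P*}, use the slack at $t=1$ created by $\hat{C}(s_1^*)>C(s_1^*)$ to preserve \eqref{eqn:R-P-IC} there, observe that $p_2$ falls so the later constraints only relax, and compute the gain $\delta[G(s_1^*+\varepsilon)-G(s_1^*)]>0$ via the telescoping identity. Your construction makes the source of strictness fully explicit and does not rely on the uniqueness/binding characterization beyond feasibility of the old optimum, at the cost of redoing the constraint-by-constraint bookkeeping that the paper's appeal to Lemma \ref{lem:retire-problem} packages away; the paper's version is shorter but leaves the strict-inequality step (profit invariance of the old optimum plus uniqueness of the new one) more implicit.
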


Indeed, the optimal contract under $C$ is still implementable under $\hat{C}$, but some of the IC constraints are relaxed. This creates room for inducing greater knowledge transfers and achieving a higher profit. Moreover, when the catch-up cost is sufficiently large, it is possible to frontload almost all training into the first period. Although not much knowledge remains to be transferred, the principal is still willing to pay for such training in the next period, as the expert could eventually punish her by not training the novice fully.

\subsection{Application to Craftsmanship Preservation by Charitable Foundations}

Many traditional crafts are at risk of disappearing because masters (expert) are often reluctant to pass their skills on to outsiders, preferring instead to teach only family members. This preference arises because teaching a family member mitigates the risk of a novice undermining the expert's future earnings: knowledge transfers to a family member may continue to generate benefits for the family, while an outsider has no obligation to support the retired expert. 
However, as noted by the United Nations Educational Scientific and Cultural Organization (UNESCO), ``if family members or community members are not interested in learning it, the knowledge may disappear because sharing it with strangers violates tradition."

To protect these intangible cultural heritages, many grants and foundations (principal) provide monetary incentives to craftsmen to pass on knowledge and skills associated with traditional artisanries to future generations outside of individual families. For example, the Michelangelo Foundation provides opportunities for the younger generation of artisans to collaborate with experienced master craftsmen, engaging in creation and learning together. This pairing ensures that these skills can be passed down from the ``older generation" to the ``new generation." 
UNESCO also raised more than \$4 million USD to support such intergenerational transfers.\footnote{See ``International Assistance from the Intangible Cultural Heritage Fund" (project ID is 199IAS4142).
}

Similar to our characterization, these programs usually facilitate year-long learning of craftsmanship with substantial payments made to the master craftsmen in a gradual manner.  
For example, the Traditional Arts Apprenticeship Grant of the Montana Arts will pay \$3,000 directly to the master, and the payment will be in multiple installments throughout the yearly program. 
The Homo Faber Fellowship, which was jointly developed by the Michelangelo Foundation and Jaeger-LeCoultre (a Swiss luxury watch manufacturer), also provides continuous payments to the master during the 8-month program.

\section{Discussion}\label{sec:discussion}

\subsection{Comparison with \cite{Garicano2017Relational}}\label{subsec:rayo}

In \cite{Garicano2017Relational} (hereafter ``GR"), the expert employs the novice to produce, but the cash-constrained novice initially possesses neither the productive knowledge nor the financial resources to acquire such knowledge. Because of that, the expert is intrinsically motivated to train the novice, but at the same time, he is concerned that the novice may stop producing for him and pursue an outside option whose value increases with the novice's knowledge level. 
GR show that the expert-optimal contract is a multi-period apprenticeship, in which the novice is trained gradually over time and receives full knowledge after a finite number of periods. In the initial period, the expert offers a ``knowledge gift," despite the novice's inability to produce; after that, the novice works without pay until he fully acquires the remaining knowledge; once the apprenticeship concludes, the novice starts to retain all the output from his production.
In what follows, we highlight the key differences between our results and those in GR.




\subsubsection*{Finite vs. Perpetual Transfer Process}

In GR, knowledge is fully transferred after a finite number of periods. This outcome arises because, as the amount of remaining knowledge diminishes, a threshold is eventually reached where the novice's output in a single period exceeds the value of the remaining knowledge. At that point, the expert has an incentive to transfer all remaining knowledge at once to accelerate revenue collection. This result also hinges on the expert's one-period commitment: If the expert were able to freely choose how much knowledge to transfer \textit{after} the novice had produced, he would never transfer all remaining knowledge, knowing that the novice would then appropriate the entire surplus. In contrast, in our model with no commitment, the knowledge transfer process proceeds perpetually, with the novice's knowledge level converging to a long-run limit. When players are somewhat impatient, a complete knowledge transfer is not even attainable at that limit.\footnote{This is true in our baseline model. In the model with retirement, knowledge is always fully transferred upon the expert's retirement, regardless of the discount factor.} If the principal had one-period commitment in our setting, however, the first-best outcome would be achievable: The principal, in period 0, should ask the expert to transfer all knowledge to the novice and commit to paying him $\frac{w(0)-w(1)}{1-\delta}$ in period $1$ upon verifying the transfer.

\subsubsection*{Knowledge Gift}
The optimal contracts in both papers feature a knowledge gift at the beginning of the game, where the expert transfers some knowledge to the novice without receiving any immediate benefit. However, the underlying reasons for this property differ significantly. In GR, the novice is assumed to be unproductive in the initial period, so \textit{by definition}, the expert receives no output in that period regardless of how much knowledge he transfers (which only takes effect in the next period). When deciding the amount of knowledge gift, the expert balances a trade-off between the novice’s productivity and the duration of the apprenticeship. By contrast, in our model, the principal \textit{chooses} not to compensate the expert in the very first period, as a positive payment does not affect the incentives of either party. Instead, the principal pushes the expert to train the novice as much as he can at the outset, so long as the contract remains implementable.

\subsubsection*{Discount Factor and Speed of Knowledge Transfer}
GR and our paper also yield distinct predictions regarding the effect of the discount factor on the speed of knowledge transfer. In GR, greater patience (i.e., a higher discount factor) leads to longer apprenticeships and slower knowledge transfer. This is because an increase in the discount factor raises the marginal value of knowledge to the novice, making him willing to work in exchange for smaller increments of knowledge per period. The expert exploits this by deliberately slowing the transfer of knowledge, thereby extending the apprentice phase during which he captures the full output. To the contrary, in our model, greater patience enhances the principal's value from future knowledge transfers; as a result, the principal is willing to pay more to the expert in each period, which in turn supports larger and faster knowledge transfers. As discussed in Section \ref{subsec:auto}, our prediction seems to align well with the comparison between the long-run cooperation levels observed in the SAIC-GM joint venture and in the SAIC-Volkswagen joint venture.

\subsection{Direct Training Cost}\label{subsubsec:trainingcost}

In the baseline model, training the novice is intrinsically costless to the expert. In this subsection, we consider the situation where the expert also incurs a direct cost of training, denoted by $\underline{c}(s_{t+1}|s_t)$, in order to increase the novice's knowledge level from $s_{t}$ to $s_{t+1}$. We make the following assumption on $\underline{c}$.

\begin{assumption}\label{ass:train}
\ 
\begin{enumerate}
    \item[i)] $\underline{c}(s_t|s_t)=0$, for any $s_t\in[0,1]$;
    \item[ii)] $\underline{c}(s_{t+1}|s_t)$ is increasing in $s_{t+1}$ and decreasing in $s_t$, for any $0\leq s_t\leq s_{t+1}\leq 1$;
    \item[iii)] $\underline{c}(s_{t+1}|s_t)/\partial s_{t+1}\leq \delta G'(s_{t+1})$ almost everywhere. 
\end{enumerate}
\end{assumption}

Parts \textit{i)} and \textit{ii)} of Assumption \ref{ass:train} are basic requirements for a cost function. Meanwhile, part \textit{iii)} limits its derivative, so that the positive effect of greater knowledge transfer on the bilateral surplus will not be fully offset by the increase in training cost.


Under Assumption \ref{ass:train}, one can show that the unique optimal contract exhibits the same qualitative features as what we characterized in Theorem \ref{thm:profit}, except that now the payment in each period compensates the expert for both the direct training cost incurred in the previous period and his loss of future payoffs. Furthermore, uniformly lowering the direct training cost will speed up knowledge transfer (in the sense of increasing $s_1$) and benefit the principal.

\subsection{Alternative Timing}
In the baseline model, we assume delayed observation of the expert's training, capturing that the novice's knowledge growth can be difficult to verify instantly. Meanwhile, the principal's payment in a given period is made and observed \textit{before} the expert makes his decision. In this subsection, we consider an alternative timing under which the principal pays the expert \textit{after} the expert chooses $x_t$, while we still maintain the assumption that $x_t$ is not observed until $t+1$. The new timing is shown in Figure \ref{fig:timing2}.

\begin{figure}[!htb]
\centering
\begin{tikzpicture}[scale=3]
    \draw[line width=1.5pt](0,0.05)--(0,0)--(3,0)--(3,0.05)node[below]at(0,0){$s_t$ is observed}node[above]at(3,0.05){Expert gets $w(s_t)+p_t$}node[above]at(3,0.25){Principal gets $\pi(s_t)-p_t$};
    \draw[dotted,line width=1.5pt](3.2,0)--(3.5,0);
    \draw[line width=1.5pt](3.7,0)--(4.2,0);
    \draw[line width=1.5pt](1,0)--(1,0.05)node[above]{Expert chooses $x_t$};
    \draw[line width=1.5pt](2,0.05)--(2,0)node[below]{Principal chooses $p_t$};
    \draw[line width=1.5pt](3.7,0)--(3.7,0.05)node[below]at(3.7,0){$s_{t+1}$ is observed}node[below]at(3.7,-0.2){($x_t$ is observed)};
    \draw[dotted,line width=1.5pt](-0.5,0)--(-0.2,0);
\end{tikzpicture}
\caption{Alternative Timing of the Game.}
\label{fig:timing2}
\end{figure}

Note that this timing is strategically equivalent to having the principal and expert act \emph{simultaneously} within a period, because neither party can condition its move on the other party's action within the same period. This implies that any consequences to either player's deviation can only occur starting the next period. We have the following characterization of the optimal contract.
\begin{theorem}\label{thm:alternativetiming}
    Let $(\bm{s}^*,\ \bm{p}^*)$ be the optimal contract obtained in Theorem \ref{thm:profit}, with discount factor $\delta^2$. Under the alternative timing, the unique optimal contract $\{(\tilde{x}_t,\ \tilde{p}_t)\}_{t\geq 0}$ has the following structure: For each $k\in \mathbb{N}$,
    \begin{enumerate}
        \item[i)] $\tilde{p}_{2k}=0$ and $\tilde{x}_{2k} = s_{k+1}^*-s_k^*$ in every even period;
        \item[ii)] $\tilde{p}_{2k+1}=(1+\delta)p_{k+1}^*$ and $\tilde{x}_{2k+1} = 0$ in every odd period.
    \end{enumerate}
\end{theorem}

Interestingly, the principal and the expert now have to make alternate moves in order to support the scheme. In every even period (including $t=0$), the expert transfers some knowledge to the novice, but the principal makes no payment. In every odd period, the principal compensates the expert for the present value of his payoff losses due to the knowledge transfer in the period before, but the expert will not provide any further training until the next period.\footnote{The term $(1+\delta)$ appears because $\bm{p}^*$ is obtained using discount factor $\delta^2$ but $\tilde{p}_{2k+1} = \frac{w(s^*_{k})-w(s^*_{k+1})}{1-\delta} = (1+\delta)p^*_{k+1}$.}

To understand this result, note that the optimal payment structure follows from the same reasoning as in the baseline model: Because $p_0$ does not affect any of the expert's incentive constraints, it is optimally set to $0$; moreover, the principal always wants to frontload payment and compensate the expert immediately after verifying each knowledge transfer. 
On the other hand, think from the expert's perspective in period $1$. Because $x_1$ cannot be conditioned on $p_1$ (see Figure \ref{fig:timing2}), $x_1$ can only provide incentives for the prior payment $p_0$, not for the current payment $p_1$. But since $\tilde{p}_0=0$, no such incentives are needed in period $1$, so any positive $x_1$ can be shifted to $x_0$ to accelerate knowledge transfer. That $\tilde{x}_1=0$ further implies that $\tilde{p}_2 = 0$, thereby kick-starting the alternating sequence of moves.

While the optimal contract shares similar qualitative features with the baseline model, the knowledge transfer process is slowed down by a half due to the alternating moves. Furthermore, because the effective discount factor for constructing the optimal contract is now $\delta^2$, the long-run transfer limit also becomes lower (Proposition \ref{prop:delta}). This suggests that the players' sequence of moves in the stage game and how quickly the principal's payment is verified can have a significant effect on the speed and eventual level of knowledge transfer.

\section{Concluding Remarks}

We study how a principal should incentivize an expert to share specialized knowledge with a novice, when the expert is concerned about the future competition he has to face from a more capable novice. In the optimal relational contract between the principal and the expert, the expert is asked to transfer as much knowledge as possible to the novice, without any spot payment, in the initial period. The lack of commitment on both sides necessitates gradual and perpetual knowledge transfers later on, 
with the principal providing lump-sum compensations to the expert for his future losses that arise from each transfer.
When players are somewhat impatient, a complete knowledge transfer is not achievable even in the long run. 

Our characterization matches well with the implementation details of the \textit{quid pro quo} policy in the Chinese auto industry, as well as the development of Korea's high-speed rail. In each of these cases, the government successfully facilitated significant technology transfers from various international companies to domestic manufacturers. Our analysis can also be extended to a more complex environment that accounts for expert retirements and the career progression of novices.

While the model is stylized, we believe that the insights can shed light on a variety of applications, within and beyond the examples discussed in the paper. It would also be interesting to examine the role of other realistic features, such as the novice's learning-by-doing and the 
information asymmetry between different parties. We leave them as promising directions for future research.

\bibliographystyle{aer}
\bibliography{contract}

\newpage
\appendix
\begin{center}
    \Large \textbf{Appendix}
\end{center}
\section{Microfoundations for Stage Payoff Functions}\label{sec:micro}

\subsection{Apprenticeships Within a Firm}

Consider a firm (principal) interacting with two workers, and we call them expert ($i=1$) and novice ($i=2$). There is a divisible task to split between the two workers. The expert always has a knowledge level of $1$, and the novice starts with a knowledge level of $0$.

Within a period, let $\alpha_1$ and $\alpha_2$ be the share of the task assigned to the expert and the novice, respectively ($\alpha_1+\alpha_2\leq 1$). If $\alpha_i$ share of the task is successful, it generates a revenue of $K\alpha_i$ ($0$ if unsuccessful).  Each worker decides whether to exert an effort $e_i\in \{0,1\}$, unobservable to the firm, with cost $e_ic(\alpha_i)$ for some strictly increasing and convex $c(\cdot)$. For each worker, given his knowledge level $s$ and effort $e$, the probability of success is $g(s)[eq+(1-e)p]$ with $1\geq q> p>0$, for some strictly increasing $g(\cdot)$.

Fixing $(\alpha_1,\alpha_2)$, the firm uses a bonus contract to induce high effort. Suppose that a contract pays $b_i$ when the portion of the task controlled by worker $i$ is successful. Exerting effort is optimal for worker $i$, if and only if
\begin{align*}
    &qg(s_i)b_i-c(\alpha_i)\geq pg(s_i)b_i\\
    \Rightarrow\  &b_i \geq \frac{c(\alpha_i)}{g(s_i)(q-p)}.
\end{align*}
Setting $b_i = \frac{c(\alpha_i)}{g(s_i)(q-p)}$, the payoffs of the workers and the firm, respectively, are:
\begin{align*}
    w_i(\alpha_i) &= \frac{pc(\alpha_i)}{q-p},\text{ for }i=1,2;\\
    \pi(\alpha_1,\alpha_2;s) &= K\alpha_1g(1)q -  \frac{qc(\alpha_1)}{q-p} + K\alpha_2g(s)q - \frac{qc(\alpha_2)}{q-p}.
\end{align*}

For simplicity, we assume $c(\alpha) = \frac{\alpha^2}{2}$. To make sure that the firm indeed wants to induce high effort, we also need
\begin{equation}\label{eqn:q>2p}
    q > 2p.
\end{equation}
The firm's task assignment problem within each period is
\begin{align*}
&\max_{\alpha_1,\alpha_2}\ \pi(\alpha_1,\alpha_2;s)\\
\text{s.t. }\ &\alpha_1+\alpha_2\leq 1\\
&0\leq \alpha_i\leq 1,\text{ for }i=1,2.
\end{align*}

Setting up the Lagrangian, we find that the solution, if interior, satisfies
\begin{align*}
    \alpha_1^*(s) &= \frac{1+K(q-p)[g(1)-g(s)]}{2},\\
    \alpha_2^*(s) &= \frac{1-K(q-p)[g(1)-g(s)]}{2}.
\end{align*}
One can see that increasing the novice's knowledge level $s$ will decrease the expert's share of the job, thereby reducing his rent from the relationship. 

Let us further assume $K(q-p) = 1$ and $g(s)=\frac{1+s}{2}$. In this case, the solution is always interior, with $\alpha_1^*(s) = \frac{3-s}{4}$ and $\alpha_2^*(s) = \frac{1+s}{4}$. Substituting them back into the payoff functions, we have
\begin{align*}
    w_1(s) &\equiv w_1(\alpha_1^*(s)) = \frac{Kp}{32}(3-s)^2,\\
    w_2(s) &\equiv w_2(\alpha_2^*(s)) = \frac{Kp}{32}(1+s)^2,\\
    \pi(s) &\equiv \pi(\alpha_1^*(s),\alpha_2^*(s);s) = \frac{Kq}{16}(s^2+2s+9).
\end{align*}
It is straightforward to verify that these functions satisfy Assumption \ref{ass:1}: $\pi$ and $w_2$ are continuous and strictly increasing;\footnote{Here, the novice's payoff function is denoted by $w_2$, not $v$.}  $w_1$ is continuous and strictly decreasing; $G\equiv \pi+w_1$ is strictly increasing whenever $q\geq \frac{3}{2}p$, which is implied by condition \eqref{eqn:q>2p}.

\subsection{Technology Transfers Between Competing Firms}

Consider a government (principal) that wants to bring advanced technologies from foreign companies to its domestic market. Let Firm $1$ be an international firm with full knowledge and Firm 2 be a local firm with an initial knowledge level of $0$.

\subsubsection{Homogeneous Products with a Cost Shifter}

Suppose that the two firms produce the same product and engage in Cournot competition.  The marginal cost of Firm 1 is zero, and the marginal cost of Firm 2 is $1-s$. Within a period, both firms simultaneously choose their quantities, denoted by $q_1$ and $q_2$. The inverse demand curve is $p=A-q_1-q_2$.

The local government wants to induce knowledge transfers from Firm $1$ to Firm $2$. In doing so, it can use monetary transfers (such as tax rebates) to incentivize the international firm. The government also charges a tax on the product, and the exogenous tax rate is $\beta$ per unit.\footnote{The exogenous tax rate is decided by legislation outside of this game.} We assume that the Cournot equilibrium is an interior solution for every $s\in [0,1]$ (this requires $A\geq2+\beta$).

Given $s$, the equilibrium in the stage game is $q_1^*=\frac{A-\beta+(1-s)}{3}$ and $q_2^*=\frac{A-\beta-2(1-s)}{3}$, and their payoffs are 
\begin{align*}
    w(s)&=\left(\frac{A-\beta+(1-s)}{3}\right)^2,\\ v(s)&=\left(\frac{A-\beta-2(1-s)}{3}\right)^2.
\end{align*}
Since the total quantity is $q_1^*+q_2^*=\frac{2(A-\beta)-(1-s)}{3}$, the amount of tax collected within the period is $t(s)=\frac{2(A-\beta)-(1-s)}{3}\beta$. Moreover, the consumer surplus is 
\begin{equation*}
    \mathbf{CS}(s)=\frac{1}{2}\left(\frac{2(A-\beta)-(1-s)}{3}\right)^2.
\end{equation*}
We assume the government cares about consumer welfare and tax revenue, so its stage payoff is
\begin{equation*}
\pi(s)\equiv t(s)+\mathbf{CS}(s).
\end{equation*}

It is easy to verify that $\pi(s)$ and $v(s)$ are increasing in $s$, $w(s)$ is decreasing in $s$, and $G(s)\equiv \pi(s)+w(s)$ is increasing in $s$ whenever $\beta\geq1$; all functions are continuous. So Assumption \ref{ass:1} is satisfied.


\subsubsection{Differentiated Products with a Demand Shifter}

Now suppose that the two firms produce differentiated products and simultaneously set their prices, $p_1$ and $p_2$. The demand for Firm $1$ is $q_1=A+(1-s)+p_2-2p_1$ and the demand for Firm 2 is $q_2=A-2(1-s)+p_1-2p_2$. The production cost is zero, and the exogenous corporate income tax rate is $\gamma$.

Firm 1's profit is $(1-\gamma)q_1p_1=(1-\gamma)[A+(1-s)+p_2-2p_1]p_1$, and its first-order condition is given by $A+(1-s)+p_2-4p_1=0$. Firm 2's profit is $(1-\gamma)q_2 p_2=(1-\gamma)[A-2(1-s)+p_1-2p_2]p_2$, and its first-order condition is given by $A-2(1-s)+p_1-4p_2=0$. 

Assuming interior solutions, we have $p_1^*=\frac{5A+2(1-s)}{15}$ and $p_2^*=\frac{5A-7(1-s)}{15}$. 
So Firm $1$ and Firm $2$'s stage payoffs are
\begin{align*}
w(s)&=\frac{2(1-\gamma)[5A+2(1-s)]^2}{225},\\
v(s)&=\frac{2(1-\gamma)[5A-7(1-s)]^2}{225}.
\end{align*}
The government cares about consumer welfare and tax revenue, so its stage payoff is
\begin{equation*}
\pi(s)\equiv t(s)+\mathbf{CS}(s)
\end{equation*}
where $t(s)=\frac{2\gamma[5A+2(1-s)]^2+2\gamma[5A-7(1-s)]^2}{225}$ and $\textbf{CS}(s)=\frac{100A^2-100A(1-s)+52(1-s)^2}{225}$.\footnote{From the demand system, one can derive the consumer's utility function as $u(q_1,q_2,m)=m+Aq_1+[A-(1-s)]q_2-\frac{q_1^2+q_2^2+q_1q_2}{3}$, where $m$ is the numeraire good.}

One can verify that $v(s)$ is increasing in $s$ and $w(s)$ is decreasing in $s$; $\pi(s)$ and $G(s)\equiv \pi(s)+w(s)$ are increasing in $s$ whenever $A\geq 2$; all functions are continuous. So Assumption \ref{ass:1} is satisfied.

\color{black}

\section{Proofs}

\subsection{Proofs for Sections \ref{sec:model} and \ref{sec:optimal}}\label{app:proof}

\begin{proof}[Proof of Lemma \ref{lem:revelation_principle}]
The \textit{if} part is straightforward because the trigger strategy profile specified in the main text constitutes an SPE, as long as $(\bm{s},\bm{p})$ satisfies \eqref{eqn:M}, \eqref{eqn:LL}, \eqref{eqn:P-IC}, and \eqref{eqn:E-IC}. 

Now we turn to the \textit{only if} part. By feasibility, \eqref{eqn:M} and \eqref{eqn:LL} hold. Notice that, at the beginning of any period $t$, if the current knowledge level is $s_t$, the principal can guarantee herself a continuation value of at least $\frac{\pi(s_t)}{1-\delta}$ by paying zero forever from the current period onward. Likewise, after receiving $p_t$, the expert can guarantee himself a continuation of at least $\frac{w(s_t)}{1-\delta}$ by stopping all training from the current period onward. Hence, if $(\bm{s},\bm{p})$ is supported by an SPE, these particular deviations must be deterred, which implies \eqref{eqn:P-IC} and \eqref{eqn:E-IC}. 
\end{proof}

\begin{proof}[Proof of Lemma \ref{lem:existence}]
It is without loss to bound the principal's payment from above by $\bar{p}\equiv \frac{w(0)-w(1)}{1-\delta}$, as it would have compensated the expert for his largest possible payoff losses at once. With this additional restriction, the domain of program \eqref{eqn:program} is a closed subset of a compact space ($[0,1]^\infty\times[0,\bar{p}]^\infty$, with the product topology), and therefore it is compact. Existence then follows from the extreme value theorem.
\end{proof}

\begin{proof}[Proof of Claim \ref{cl:nonstop}]
The result follows from the argument in the main text.
\end{proof}

\begin{proof}[Proof of Claim \ref{cl:binding}]
Suppose a nontrivial contract $(\bm{s},\bm{p})$ is optimal. That $p_0=0$ follows from the argument in the main text.  

We now show that, for any implementable $(\bm{s},\bm{p})$, if \eqref{eqn:E-IC} is slack in some period $k$, one can then strictly improve the principal's profit in two steps. 

\paragraph{Step 1.} Fixing $\bm{s}$, consider the payment scheme $\bm{p}^*(\bm{s})$ defined in \eqref{eqn:P*}:
\begin{equation*}
\begin{aligned}
    p_0^*&=0,\\
    p_t^* &= \frac{w(s_{t-1})-w(s_t)}{1-\delta},\text{ for }t\geq 1.
\end{aligned}
\end{equation*}
We prove that $(\bm{s},\ \bm{p}^*(\bm{s}))$ is implementable and weakly improves the principal's time-$0$ profit, and that \eqref{eqn:P-IC} is slack in period $k+1$ under $(\bm{s},\ \bm{p}^*(\bm{s}))$.
\medskip

\noindent\textbf{Feasibility.} Since $(\bm{s},\bm{p})$ is feasible, $(\bm{s},\ \bm{p}^*(\bm{s}))$ satisfies \eqref{eqn:M} and  \eqref{eqn:LL}, so it is feasible too.
\medskip

\noindent\textbf{\eqref{eqn:E-IC} constraints are binding.} Recall that $W_t(\bm{s},\ \bm{p}^*(\bm{s}))=\sum_{\tau=t}^{\infty}\delta^{\tau-t} [w(s_\tau)+p_\tau^*]$, so for every $t\geq 0$ we have
\begin{equation*}
W_{t+1}(\bm{s},\ \bm{p}^*(\bm{s}))=\sum_{\tau=t+1}^{\infty}\delta^{\tau-(t+1)} \left[w(s_\tau) + \frac{w(s_{\tau-1})-w(s_{\tau})}{1-\delta}\right] = \frac{w(s_t)}{1-\delta}.
\end{equation*}

\medskip

\noindent\textbf{\eqref{eqn:P-IC} are satisfied.} Because $(\bm{s},\bm{p})$ is implementable, it satisfies \eqref{eqn:P-IC} and \eqref{eqn:E-IC}. Combining these constraints, we have: For every $t\geq 1$
\begin{equation}\label{eqn:inequaity_PE-IC}
\sum_{\tau=t}^{\infty}\delta^{\tau-t}\pi(s_\tau)-\frac{\pi(s_t)}{1-\delta}\geq\sum_{\tau=t}^{\infty}\delta^{\tau-t}p_\tau\geq\frac{w(s_{t-1})}{1-\delta}-\sum_{\tau=t}^{\infty}\delta^{\tau-t}w(s_\tau).
\end{equation}
where the first inequality is \eqref{eqn:P-IC} at $t$ and the second inequality is \eqref{eqn:E-IC} at $t-1$.

Now we check \eqref{eqn:P-IC}. At every $t\geq 1$, 
\begin{align*}
\Pi_t(\bm{s},\ \bm{p}^*(\bm{s}))&=\sum_{\tau=t}^{\infty}\delta^{\tau-t}\pi(s_\tau)-\sum_{\tau=t}^{\infty}\delta^{\tau-t}p_\tau^*\\
&\geq\left[\frac{\pi(s_t)}{1-\delta}+\frac{w(s_{t-1})}{1-\delta}-\sum_{\tau=t}^{\infty}\delta^{\tau-t}w(s_\tau)\right]-\frac{1}{1-\delta}\sum_{\tau=t}^{\infty}\delta^{\tau-t}[w(s_{\tau-1})-w(s_\tau)]\\
&= \frac{\pi(s_t)}{1-\delta},
\end{align*}
where the second line follows condition \eqref{eqn:inequaity_PE-IC} and the definition of $\bm{p}^*(\bm{s})$, and the last line follows from the cancellation between the remaining terms.

At $t=0$,
\begin{equation*}
    \Pi_0(\bm{s},\ \bm{p}^*(\bm{s})) = \pi(s_0) + \delta\Pi_1(\bm{s},\ \bm{p}^*(\bm{s}))\geq \pi(s_0) + \delta\frac{\pi(s_1)}{1-\delta} \geq \frac{\pi(s_0)}{1-\delta},
\end{equation*}
where the first equality follows from $p_0^*=0$, the first inequality follows from \eqref{eqn:P-IC} at $t=1$, and the last inequality follows from $\pi(s_1)\geq \pi(s_0)$.

\medskip

\noindent\textbf{\eqref{eqn:P-IC} is slack in period $k+1$.} Recall that, by assumption, \eqref{eqn:E-IC} is slack in period $k$ under $(\bm{s},\bm{p})$. Then by \eqref{eqn:inequaity_PE-IC}, the following inequality holds:
\[
\sum_{\tau=k+1}^{\infty}\delta^{\tau-(k+1)}\pi(s_\tau)-\frac{\pi(s_{k+1})}{1-\delta}>
\frac{w(s_{k})}{1-\delta}-\sum_{\tau=k+1}^{\infty}\delta^{\tau-(k+1)}w(s_\tau).
\]
We have shown that all (E-IC) constraints are binding under $(\bm{s},\ \bm{p}^*(\bm{s}))$, implying that the RHS above equals $\sum_{\tau=k+1}^{\infty}\delta^{\tau-(k+1)}p_\tau^*$. Therefore, \eqref{eqn:P-IC} is slack in period $k+1$ under $(\bm{s},\ \bm{p}^*(\bm{s}))$.
\medskip

\noindent\textbf{Principal's profit is weakly improved.} This is because the discounted bilateral surplus is solely determined by the knowledge sequence and is thus unchanged, but the payment scheme \eqref{eqn:P*} minimizes the expert's time-$0$ value to $\frac{w(s_0)}{1-\delta}$.


\paragraph{Step 2.} 
We now perturb $\bm{s}$ to construct an implementable $(\bm{s}',\ \bm{p}^*(\bm{s}'))$ which \textit{strictly} increases the principal's time-$0$ profit.

Recall that when the payment scheme \eqref{eqn:P*} is adopted, all the \eqref{eqn:E-IC} constraints are binding. By \eqref{eqn:inequaity_PE-IC}, each \eqref{eqn:P-IC} constraint can be rewritten as \eqref{eqn:S-IC}:
\begin{equation*}
\sum_{\tau=t}^{\infty}\delta^{\tau-t}G(s_\tau)\geq\frac{w(s_{t-1})+\pi(s_t)}{1-\delta}, \ \forall t\geq 1.
\end{equation*}
\medskip

\noindent\textbf{Constructing $\bm{s}'$.} Let $T\geq k+2$ be the first period after $k+1$ that has a higher knowledge level. That is, $s_{k+1} =\cdots =s_{T-1} < s_{T}$. Consider $\bm{s}'$ such that $s_t'=s_t+\varepsilon$ for $t\in \{k+1,\cdots,T-1\}$ and $s_t'=s_t$ otherwise. Because \eqref{eqn:P-IC} is slack at $k+1$ under $(\bm{s},\ \bm{p}^*(\bm{s}))$, by continuity, it still holds under $(\bm{s}',\ \bm{p}^*(\bm{s}'))$ for a small positive $\varepsilon$. In particular, we can set $\varepsilon$ to be the maximum value in $[0,s_{T}-s_{T-1}]$ satisfying 
\[
\sum_{\tau=k+1}^{\infty}\delta^{\tau-k-1}G(s_\tau')\geq\frac{w(s_{k}')+\pi(s_{k+1}')}{1-\delta}.
\]

\medskip

\noindent\textbf{$(\bm{s}',\ \bm{p}^*(\bm{s}'))$ improves the principal's profit.} Since $G(\cdot)$ is strictly increasing, we have $\sum_{\tau=0}^{\infty}\delta^{\tau-t}G(s'_\tau)>\sum_{\tau=0}^{\infty}\delta^{\tau-t}G(s_\tau)$, that is, the discounted bilateral surplus becomes strictly higher. Because the payment scheme \eqref{eqn:P*} anchors the expert's time-$0$ value at $\frac{w(s_0)}{1-\delta}$, the principal's is strictly better off than under the original contract $(\bm{s},\bm{p})$.
\medskip

\noindent\textbf{$(\bm{s}',\ \bm{p}^*(\bm{s}'))$ is implementable.} We only need to check the \eqref{eqn:P-IC} constraints, or equivalently, \eqref{eqn:S-IC}. The RHS of \eqref{eqn:S-IC} is unchanged for $t\in[1,k]\cup[T+1,\infty)$ and the LHS gets weakly higher, so these \eqref{eqn:S-IC} constraints are satisfied. In period $T$, \eqref{eqn:S-IC} holds because
\begin{equation*}
\sum_{\tau=T}^{\infty}\delta^{\tau-T}G(s'_\tau)=\sum_{\tau=T}^{\infty}\delta^{\tau-T}G(s_\tau) \geq \frac{w(s_{T-1})+\pi(s_T)}{1-\delta}>\frac{w(s_{T-1}')+\pi(s_T')}{1-\delta},
\end{equation*}
where the last inequality follows from $s_{T-1}'>s_{T-1}$ and $s_{T}'=s_{T}$. 

If $T=k+2$, we are done verifying all the \eqref{eqn:P-IC} constraints.

If $T>k+2$, then for $k+2\leq t\leq T-1$, we have
\begin{equation*}
\sum_{\tau=t}^{\infty}\delta^{\tau-t}G(s_\tau')\geq \sum_{\tau=t}^{\infty}\delta^{\tau-t}G(s_t')=\frac{\pi(s_t')+w(s_t')}{1-\delta}=\frac{\pi(s_t')+w(s_{t-1}')}{1-\delta},
\end{equation*}
where the inequality holds because $\bm{s}'$ is a nondecreasing sequence and $G(\cdot)$ is increasing, and the last equality follows from $s_{t-1}'=s_t'$.
\end{proof}

\begin{proof}[Proof of Lemma \ref{lem:frontloading}]
    The result follows from the argument in the main text.
\end{proof}

\begin{proof}[Proof of Lemma \ref{lem:BE}]
    If the trivial contract is optimal, then \eqref{eqn:recursion} is automatically satisfied. If a nontrivial contract is optimal, the result follows from the argument in the main text.
\end{proof}

\begin{proof}[Proof of Lemma \ref{lem:equivalence}]
    The result follows from the argument in the main text.
\end{proof}

\begin{proof}[Proof of Theorem \ref{thm:profit}]
Existence follows from Lemma \ref{lem:existence}. By Lemma \ref{lem:equivalence}, any optimal knowledge sequence must be a solution to program \eqref{eqn:program2}. Let the value of program \eqref{eqn:program2} be $\hat{s}_1$. It is then clear from \eqref{eqn:recursion} that $s_0$ and $\hat{s}_1$ will determine a unique optimal knowledge sequence; moreover, given the optimal knowledge sequence, the optimal payment sequence is also uniquely determined by \eqref{eqn:P*}.

Recall that $\bar{s}^*$ and $s_1^*$ are defined in \eqref{eqn:sbar} and \eqref{eqn:s1}.
When $\bar{s}^*=0$, only the trivial knowledge sequence satisfies both \eqref{eqn:M} and \eqref{eqn:recursion}, so by Lemma \ref{lem:equivalence} the trivial contract is optimal.

Now suppose $\bar{s}^*>0$. 
We first argue that $s_1^*$ defined in \eqref{eqn:s1} is indeed the value of program \eqref{eqn:program2} and $\bar{s}^*$ defined in \eqref{eqn:sbar} is indeed the limit of the optimal knowledge sequence. Take any $\bm{s}$ satisfying \eqref{eqn:M} and \eqref{eqn:recursion}, and let $\bar{s} \equiv \lim_{t}s_t$. By the argument in the main text, we know that $s_1$ and $\bar{s}$ satisfy
\begin{equation}\label{eqn:s1sbar}
    \pi(s_1) = \frac{\delta\pi(\bar{s})+w(\bar{s})-w(s_0)}{\delta}.
\end{equation}
Because $\bar{s}^*$ is defined to maximize the RHS of \eqref{eqn:s1sbar} and $\pi$ is strictly increasing, $s_1^*$ is indeed the value of program \eqref{eqn:program2}. Given this $s_1^*$, \eqref{eqn:recursion} then allows us to pin down the entire optimal knowledge sequence $\bm{s}^*$. To see that this sequence converges to $\bar{s}^*$, i.e., the smallest maximizer of $\delta\pi(s)+w(s)$, note first that \eqref{eqn:s1sbar} implies that the limit must be \textit{a} maximizer of \eqref{eqn:s1sbar}, so we have $\lim_ts_t^* \geq \bar{s}^*$. Next, because $\bar{s}^*>0$ and $\pi$ and $w$ are strictly monotone, \eqref{eqn:s1} implies that $s_1^*<\bar{s}^*$. Moreover, given $s_1^*$ defined in \eqref{eqn:s1}, \eqref{eqn:recursion} implies that for every $t\geq 1$,
\begin{equation}\label{eqn:stst-1}
    \pi(s_t^*) = \frac{\delta\pi(\bar{s}^*)+w(\bar{s}^*)-w(s_{t-1}^*)}{\delta}.
\end{equation}
A simple induction tells us that $s_{t}^*<\bar{s}^*$ for every $t$, so $\lim_ts_t^*\leq \bar{s}^*$. Together we have $\lim_ts_t^*= \bar{s}^*$, as desired. Parts \textit{iii)} and \textit{iv)} of Theorem \ref{thm:profit} then follow directly from Lemma \ref{lem:frontloading}. Since $s_1^*>s_0=0$, Part \textit{i)} follows from a simple induction on $\eqref{eqn:recursion}$.

Next, we show that $\bm{s}^*$ uniformly dominates any other knowledge sequence that satisfies \eqref{eqn:M} and \eqref{eqn:recursion}. This is implied by a stronger statement which we now prove: If both $\bm{s}$ and $\bm{s}'$ satisfy \eqref{eqn:M} and \eqref{eqn:recursion} and $s_1'>s_1$, then $\bm{s}'$ uniformly dominates $\bm{s}$. Take any such $\bm{s}$ and $\bm{s}'$ with $s_1'>s_1$. By induction, suppose $s_{t}'>s_{t}$ for some $t\geq 1$. We now show that $s_{t+1}'>s_{t+1}$. Since $\bm{s}$ and $\bm{s}'$ both satisfy \eqref{eqn:recursion}, we have
\begin{align*}
\delta \pi(s_{t+1})+w(s_{t})&=\delta\pi(s_{t})+w(s_{t-1})=\cdots=\delta \pi(s_1)+w(s_0),\\
\delta \pi(s_{t+1}')+w(s_{t}')&=\delta\pi(s_{t}')+w(s_{t-1}')=\cdots=\delta \pi(s_1')+w(s_0).
\end{align*}
Then,
\begin{equation*}
\delta \pi(s_{t+1})+w(s_{t})= \delta \pi(s_1)+w(s_0)<\delta \pi(s_1')+w(s_0)= \delta \pi(s_{t+1}')+w(s_{t}')< \delta \pi(s_{t+1}')+w(s_{t}),
\end{equation*}
where the first inequality holds because $\pi$ is increasing and $s_1'>s_1$, and the last inequality holds because $w$ is decreasing and $s_{t}'>s_{t}$. This immediately implies $s_{t+1}'>s_{t+1}$, as desired.\footnote{We can also show that $\lim_ts_t'>\lim_ts_t$.}
\end{proof}

\begin{proof}[Proof of Proposition \ref{prop:delta}]

Recall that $\bar{s}^*(\delta)$ is defined as the smallest maximizer of $\delta\pi(s)+w(s)$.
Because $\delta\pi(s)+w(s)$ has strictly increasing differences in $(\delta,s)$, we have $\bar{s}^*(\delta')\geq \bar{s}^*(\delta)$ by the Topkis theorem. Then by the maximum theorem, $\bar{s}^*(\delta)$ is left-continuous. Let
\begin{align*}
    \underline{\delta}&\equiv \sup\{\delta\in (0,1)|\bar{s}^*(\delta) = 0\},\\
    \bar{\delta}&\equiv \inf\{\delta\in (0,1)|\bar{s}^*(\delta)=1\}.
\end{align*}
It is easy to verify that they satisfy the desired properties.

Next, we show that $s_t^*(\delta)$ weakly increases with $\delta$ for every $t$. Suppose $\delta'>\delta$. Note that
\begin{align*}
\pi(s_1^*(\delta))&=\pi(\bar{s}^*(\delta))-\frac{w(0)-w(\bar{s}^*(\delta))}{\delta}\\ &\leq\pi(\bar{s}^*(\delta))-\frac{w(0)-w(\bar{s}^*(\delta))}{\delta'} \\
&\leq \frac{\left[\delta'\pi(\bar{s}^*(\delta'))+w(\bar{s}^*(\delta'))\right]-w(0)}{\delta'}\\
&=\pi(s_1^*(\delta')),
\end{align*}
where both the first line and the last line follow from \eqref{eqn:s1}, and the third line follows from $\bar{s}^*(\delta')$ being a maximizer of $\delta'\pi(s)+w(s)$. Hence $s_1^*(\delta')\geq s_1^*(\delta)$. By induction, assume that $s_{t-1}^*(\delta')\geq s_{t-1}^*(\delta)$ for some $t\geq 2$. Then,
\begin{align*}
\pi(s_t^*(\delta))&=\pi(\bar{s}^*(\delta))-\frac{w(s_{t-1}^*(\delta))-w(\bar{s}^*(\delta))}{\delta}\\ 
&\leq\pi(\bar{s}^*(\delta))-\frac{w(s_{t-1}^*(\delta'))-w(\bar{s}^*(\delta))}{\delta'} \\
&\leq \frac{\left[\delta'\pi(\bar{s}^*(\delta'))+w(\bar{s}^*(\delta'))\right]-w(s_{t-1}^*(\delta'))}{\delta'}\\
&=\pi(s_t^*(\delta')),
\end{align*}
where both the first line and the last line follow from \eqref{eqn:stst-1}, the second line follows from $\delta<\delta'$ and $w(s_{t-1}^*(\delta))\geq w(s_{t-1}^*(\delta'))$ (by the induction assumption and monotonicity of $w$), and the third line follows from $\bar{s}^*(\delta')$ being a maximizer of $\delta'\pi(s)+w(s)$. Thus $s_t^*(\delta')\geq s_t^*(\delta)$.

Lastly, note that $\lim_{\delta\to 1}\bar{s}^*(\delta)=1$ because $\pi(s)+w(s)$ is strictly increasing in $s$. Then by \eqref{eqn:s1}, we have
\begin{equation*}
    \lim_{\delta\to 1}s_1^*(\delta)=\lim_{\delta\to 1} \pi^{-1}\left(\frac{\delta\pi(\bar{s}^*)+w(\bar{s}^*)-w(s_0)}{\delta}\right) =\pi^{-1}(\pi(1)+w(1)-w(0))<1.
\end{equation*}
A simple induction on \eqref{eqn:stst-1} shows that $\lim_{t\to 1}s_t^*(\delta)<1$ for every $t$.
\end{proof}

\begin{proof}[Proof of Lemma \ref{lem:uniformlyhigh}]
Take any implementable contract $(\bm{s},\bm{p})$. If $(\bm{s},\bm{p})$ makes all the \eqref{eqn:S-IC} constraints binding, then by part \textit{ii)} of Theorem \ref{thm:profit}, we have $\bm{s}\leq \bm{s}^*$. 

Now suppose that some of the \eqref{eqn:S-IC} constraints are slack under $(\bm{s},\bm{p})$. Applying the two-step construction in the proof of Claim \ref{cl:binding}, we can find another implementable contract $(\bm{s}^{(1)},\ \bm{p}^{(1)})$ such that $\bm{s}^{(1)}\geq\bm{s}$, with strict inequality in some period(s). If $\bm{s}^{(1)}$ satisfies \eqref{eqn:recursion}, we are done because $\bm{s}^*\geq \bm{s}^{(1)}\geq \bm{s}$. If not, we continue this process to construct $\bm{s}^{(2)}$, $\bm{s}^{(3)}$, and so on, and we are done if any $\bm{s}^{(i)}$ along this process satisfies \eqref{eqn:recursion}. 

If no $\bm{s}^{(i)}$ satisfies \eqref{eqn:recursion}, then we obtain a sequence (of sequences), $\{\bm{s}^{(i)}\}_{i=1}^{\infty}$, with $\bm{s}^{(i)}+\varepsilon^{(i)} \bm{l}^{(i)}=  \bm{s}^{(i+1)}$ where $\varepsilon^{(i)}\in(0,1]$ is the incremental adjustment in the two-step construction and $l^{(i)}_t\in\{0,1\}$ is an indicator for such an adjustment at time $t$. Note that $\lim_{i\rightarrow\infty}\varepsilon^{(i)} =0$, because $\sum_{i=1}^{N}\varepsilon^{(i)}\leq\lim_{t\rightarrow\infty}s^{(N)}_{t}\leq 1$ for every $N\geq 1$. Let $\bar{\bm{s}}$ be the pointwise limit of $\bm{s}^{(i)}$, that is, $\bar{s}_t\equiv\lim_{i\rightarrow\infty} s_t^{(i)}$ for every $t$. Because every $\bm{s}^{(i)}$ satisfies \eqref{eqn:S-IC}, so does $\bar{\bm{s}}$. We now argue $\bar{\bm{s}}$ makes all the \eqref{eqn:S-IC} constraints binding, that is, it satisfies \eqref{eqn:recursion}. By contradiction, suppose \eqref{eqn:S-IC}  is slack at time $k$ under $\bar{\bm{s}}$. For simplicity, suppose also that $\bar{s}_{k}<\bar{s}_{k+1}$.\footnote{The case where $\bar{s}_{k}=\bar{s}_{k+1}=\cdots = \bar{s}_{T-1}<\bar{s}_T$ for some $T>k+1$ can be handled with an argument similar to the final part of Claim \ref{cl:binding}'s proof.} In this case, by applying Step 2 in Claim \ref{cl:binding}'s proof, $\bar{s}_{k}$ can be increased by $\bar{\varepsilon} = \min\{s_{T}-s_{T-1},\varepsilon^*\}>0$ where $\varepsilon^*$ satisfies
\[
\sum_{\tau=k}^{\infty}\delta^{\tau-k}G(\bar{s}_\tau)=\frac{w(\bar{s}_{k-1})+\pi(\bar{s}_{k}+\varepsilon^*)}{1-\delta}.
\] 
Recall that 
$\bar{\bm{s}}=\bm{s}^{(N)}+\sum_{i=N}^{\infty}\varepsilon^{(i)}\bm{l}^{(i)}$, so for large enough $N$, $\bm{s}_k^{(N)}$ can be increased by at least $\frac{\bar{\varepsilon}}{2}$. This is a contradiction to $\lim_{i\rightarrow\infty}\varepsilon^{(i)} =0$. Hence, $\bar{\bm{s}}$ satisfies \eqref{eqn:recursion}, and then we have $\bm{s}^*\geq \bar{\bm{s}}\geq \bm{s}$, as desired.
\end{proof}

\begin{proof}[Proof of Theorem \ref{thm:pareto}]
    The result follows from the argument in the main text.
\end{proof}






\subsection{Proofs for Section \ref{sec:retire}}\label{app:retire}

\subsubsection*{Game Setup and Equilibrium Definition}
Consider an overlapping generation model with a long-lived principal and a sequence of workers. At time $0$, there are two workers, an expert and a novice, denoted by $i=-1$ and $i=0$, respectively. Fix an integer $K\geq 2$. At time $iK$, worker $i\geq 1$ arrives as a novice with a knowledge level of $0$; worker $i-2$, who was the expert, exits the game; and worker $i-1$, who was the novice, becomes the expert.\footnote{In other words, every worker $i\geq 0$ lives for $2K$ periods, with the first (last) $K$ periods being his novice (expert) phase; the initial expert, worker $i=-1$, only lives in the first $K$ periods of the game.} In each period $t\in\{iK, iK+1,\cdots,(i+1)K-1\}$, the principal chooses a payment ${p}_t$ and the expert (i.e., worker $i-1$) chooses a training amount ${x}_t$, as illustrated in Figure  \ref{fig:timing}; the novice (i.e., worker $i$) has no actions to take. 

Let $X_t=[0,1]$ and $P_t=\mathbb{R}_+$ be the players' choice sets at time $t$. A (pure, behavioral) strategy of the principal at time $t$ is a mapping 
\begin{equation*}
\sigma_P^{(t)}:X_0\times\cdots\times X_{t-1}\times P_0\times\cdots\times P_{t-1}\rightarrow P_t,
\end{equation*}
which maps all previous training amounts and payments to a current payment. A strategy of expert $i$ at time $t$ is a mapping 
\begin{align*}
\sigma_i^{(t)}&:X_0\times\cdots\times X_{t-1}\times P_0\times\cdots\times P_{t-1}\rightarrow X_t,\quad \text{for } i=-1,\ t\in[0,K),\\
\sigma_i^{(t)}&:X_{iK}\times\cdots\times X_{t-1}\times P_{(i+1)K}\times\cdots\times P_{t-1}\rightarrow X_t,\quad \text{for } i\geq 0,\ t\in[iK,(i+1)K),
\end{align*}
which maps all the training amounts from period $iK$ to $t-1$ and the payments from period $(i+1)K$ to time $t-1$ to a current training amount.\footnote{To be sure, each expert knows the training he received when he was a novice but did not observe the payments between the principal and the previous expert; therefore, $X_{iK}\times\cdots\times X_{(i+1)K-1}$ is in expert $i$'s information set, but $P_{iK}\times\cdots\times P_{(i+1)K-1}$ is not.}

The stage payoffs (absent monetary transfers) of the principal, expert, and novice are still $\pi(s)$, $w(s)$, and $v(s)$, respectively, satisfying Assumption \ref{ass:1}, where $s$ is the knowledge level of the novice. Furthermore, if worker $i$'s knowledge level is below $1$ after his novice phase ends (if $s_{(i+1)K}<1$), the principal must incur a catch-up cost of $C(s_{(i+1)K})$ to improve it to $1$ so that the worker can start acting as an expert. We assume that $C$ is continuous and strictly decreasing, with $C(1)=0$. 

\begin{definition}
    A \textbf{stationary equilibrium}, $\left(\sigma_P,\{\sigma_i\}_{i=-1}^{\infty}\right)$, is an SPE such that
    \begin{enumerate}
        \item[i)] On the equilibrium path, it induces the same contract between the principal and every expert: For every $i\geq -1$, 
        \begin{equation*}(x_{(i+1)K},\cdots,x_{(i+2)K-1};\ p_{(i+1)K},\cdots,p_{(i+2)K-1}) = (x_{0},\cdots,x_{K-1};\ p_{0},\cdots,p_{K-1}).
        \end{equation*}
        \item[ii)] The training received by each expert in his novice's phase does not affect his own training choices: For every $i\geq 0$, $\sigma_i$ is independent of   $X_{iK}\times\cdots X_{(i+1)K-1}$.
        \end{enumerate}
\end{definition}
The second restriction in the above definition precludes a future expert from punishing the principal for deviations that occurred in the past. Even though an expert can detect deviations in training during his novice phase, without observing payments, he is unable to infer who (the principal or the previous expert) the initial deviator was; moreover, whenever needed, the principal is required to pay the catch-up cost to make sure the current expert is properly trained. In this case, it does not seem reasonable for the expert to punish the principal for deviations in previous relationships that were not necessarily the principal's fault, hence the restriction. 

For stationary equilibria, it suffices to analyze a finite game between the principal and an expert that ends after $K$ periods and to represent a contract by $(\bm{s},\bm{p}) \equiv (s_0,\cdots,s_{K};\ p_0,\cdots,p_{K-1})$.


\subsubsection*{Proofs}

\begin{proof}[Proof of Lemma \ref{lem:sK=1}]
The result follows from the argument in the main text.    
\end{proof}

\begin{proof}[Proof of Lemma \ref{lem:retire-problem}]

We first show that in any optimal contract, \eqref{eqn:R-E-IC} must be binding for every $t\in \{0,\cdots,K-2\}$, and \eqref{eqn:R-P-IC} must be binding for every $t\in \{1,\cdots,K-1\}$. 

Suppose by contradiction that $(\bm{s},\bm{p})$ is optimal with retirement and one of these constraints is slack. First consider $(\bm{s},\ \bm{p}^*(\bm{s}))$ where, with an abuse of notation, $\bm{p}^*(\bm{s})$ is given by \eqref{eqn:R-P*}. By the same reasoning as that in the proof of Claim \ref{cl:binding}, $(\bm{s},\ \bm{p}^*(\bm{s}))$ is implementable, makes \eqref{eqn:R-E-IC} binding for every $t\in \{0,\cdots,K-2\}$, and makes \eqref{eqn:R-P-IC} slack for some $t\in \{1,\cdots,K-1\}$. Adding \eqref{eqn:R-P-IC} to \eqref{eqn:R-E-IC} in each period, we have 
\begin{equation*}
\sum_{\tau=t}^{K-1}\delta^{\tau-t}G(s_\tau)\geq \frac{[\pi(s_t)+w(s_{t-1})](1-\delta^{K-t})}{1-\delta}+\delta^{K-t-1}[C(s_K)-C(s_t)],\ \forall t\in \{1,\cdots,K-1\}, \label{eqn:R-S-IC}\tag{R-S-IC}
\end{equation*}
with at least one strict inequality.

Now we further perturb $(\bm{s},\ \bm{p}^*(\bm{s}))$ to construct a strict improvement for the principal. Suppose that \eqref{eqn:R-S-IC} is slack in period $k\in \{1,\cdots,K-1\}$ under $(\bm{s},\ \bm{p}^*(\bm{s}))$. Let $T\geq k+1$ be the first period after $k$ such that $s_T>s_k$ (i.e., $s_T>s_{T-1}=\cdots=s_k$).\footnote{$T$ is well-defined because \textit{i)} by Lemma \ref{lem:sK=1} it is without loss to set $s_K=1$; and \textit{ii)} $s_t<1$ for every $1\leq t\leq K-1$, for otherwise the principal has no incentives to compensate the expert for the training $x_{t-1}$.} Pick any $\varepsilon\in (0,s_{T}-s_{k})$ such that 
\begin{equation*}
 \sum_{\tau=k}^{K-1}\delta^{\tau-k}G(s_\tau)\geq\frac{[\pi(s_k+\varepsilon)+w(s_{k-1})](1-\delta^{K-k})}{1-\delta}+\delta^{K-k-1}[C(s_K)-C(s_k+\varepsilon)].
\end{equation*}

Let $\bm{s}'$ be such that $s_t'=s_t+\varepsilon$ for $t\in\{k,\cdots,T-1\}$ and $s_t'=s_t$ otherwise. We verify that $(\bm{s}',\ \bm{p}^*(\bm{s}'))$ is implementable and strictly improves the principal's profit, which will contradict the optimality of $(\bm{s},\bm{p})$. Profit is strictly improved because $\bm{s}'$ dominates $\bm{s}$ and the expert's time-$0$ value is fixed by \eqref{eqn:R-P*}. For implementability, because \eqref{eqn:R-P*} binds all the \eqref{eqn:R-E-IC} constraints, it suffices to verify \eqref{eqn:R-S-IC} under $(\bm{s}',\ \bm{p}^*(\bm{s}'))$. The RHS of \eqref{eqn:R-S-IC} is unchanged for $t\in[1,k-1]\cup[T+1,K-1]$ and the LHS gets weakly higher, so these \eqref{eqn:R-S-IC} constraints are satisfied. In period $T$, \eqref{eqn:R-S-IC} holds because
\begin{align*}
\sum_{\tau=T}^{K-1}\delta^{\tau-T}G(s'_\tau)=\sum_{\tau=T}^{K-1}\delta^{\tau-T}G(s_\tau)
&\geq \frac{[w(s_{T-1})+\pi(s_T)](1-\delta^{\delta-T})}{1-\delta}+\delta^{K-T-1}[C(s_K)-C(s_T)]\\
&>\frac{[w(s_{T-1}')+\pi(s_T')](1-\delta^{\delta-T})}{1-\delta}+\delta^{K-T-1}[C(s_K')-C(s_T')],
\end{align*}
where the last inequality follows from $s_{T-1}'>s_{T-1}$, $s_{T}'=s_{T}$, and $s_K'=s_K$.

If $T=k+1$, we are done verifying the implementability of $(\bm{s}',\ \bm{p}^*(\bm{s}'))$. If $T>k+1$, then for $t\in[k+1,T-1]$, we have
\begin{align*}
 \sum_{\tau=t}^{K-1}\delta^{\tau-t}G(s_\tau')\geq \sum_{\tau=t}^{K-1}\delta^{\tau-t}G(s_t')&=\frac{[\pi(s_t')+w(s_t')](1-\delta^{K-t})}{1-\delta}\\
 &\geq \frac{[\pi(s_t')+w(s_{t-1}')](1-\delta^{K-t})}{1-\delta}+\delta^{K-t-1}[C(s_K')-C(s_t')],
\end{align*}
where the first inequality holds because $\bm{s}'$ is a nondecreasing sequence, and the last inequality follows from $s_t'=s_{t-1}'$.

Now that we know any optimal contract, $(\bm{s},\bm{p})$, must make \eqref{eqn:R-E-IC} binding for every $t\in \{0,\cdots,K-2\}$ and \eqref{eqn:R-P-IC} binding for every $t\in \{1,\cdots,K-1\}$, note that for every $t\geq 1$,
\begin{align*}
    p_t &= W_{t}^R-w(s_t)-\delta W_{t+1}^R \\
    &= \frac{(1-\delta^{K-t})w(s_{t-1})}{1-\delta} - w(s_t) -  \frac{\delta(1-\delta^{K-t-1})w(s_t)}{1-\delta}\\
    &= \frac{(1-\delta^{K-t})[w(s_{t-1})-w(s_t)]}{1-\delta},
\end{align*}
where the second line follows from the binding \eqref{eqn:R-E-IC} constraints. Moreover, $p_0$ must be zero because otherwise the principal can set it to zero to improve profit without violating any constraints. Thus, $\bm{p}$ satisfies \eqref{eqn:R-P*}. The calculations in the main text show that $\bm{s}$ must be a solution to \eqref{eqn:program2-r}.
\end{proof}

\begin{proof}[Proof of Theorem \ref{thm:profit-r}]
Existence and uniqueness follow from the same argument as that in the proof of Theorem \ref{thm:profit}. Parts \textit{iii)} and \textit{iv)} follow directly from Lemma \ref{lem:retire-problem}. 

For part \textit{i)}, note that if $s_1^*=0$, then \eqref{eqn:R-BE} implies a sequence of zeros; if $s_1^*>0$, then \eqref{eqn:R-BE} implies a strictly increasing sequence. By Lemma \ref{lem:retire-problem}, $s_K^*=1$, so we must have $0<s_1^*<\cdots<s_K^*=1$.

For part \textit{ii)}, take any contract $(\bm{s},\bm{p})$ that makes \eqref{eqn:R-E-IC} binding for every $t\in \{0,\cdots,K-2\}$ and \eqref{eqn:R-P-IC} binding for every $t\in \{1,\cdots,K-1\}$. This implies \eqref{eqn:R-BE} and
\begin{align*}
W_0^R(\bm{s},\bm{p})&=w(s_0)+p_0+\delta W_1^R(\bm{s},\bm{p})\\
&=p_0+w(s_0)+\frac{\delta(1-\delta^{K-1})w(s_0)}{1-\delta}\\
&=p_0+\frac{(1-\delta^K)w(s_0)}{1-\delta}.
\end{align*}
Then, by \eqref{eqn:Pi0-r} and the above equation, the bilateral surplus satisfies 
\begin{align*}
W_0^R(\bm{s},\bm{p}) + \Pi_0^R(\bm{s},\bm{p})= \frac{(1-\delta^K)w(s_0)}{1-\delta}+ \pi_0(s_0) + \frac{(\delta - \delta^K)\pi(s_1)}{1-\delta} - \delta^{K-1}C(s_1).
\end{align*}
Given the monotonicity of $\pi$ and $C$, maximizing bilateral surplus among these contracts is equivalent to solving program \eqref{eqn:program2-r}, as is searching for the optimal contract, so the result follows.
\end{proof}

\begin{proof}[Proof of Proposition \ref{prop:CSgeneralC}]

For part \textit{i)}, let $\bm{s}^*$ and $\hat{\bm{s}}$ denote the optimal knowledge sequences under $C$ and $\hat{C}$, respectively. By their optimality, we have: For all $t\geq 1$,
\begin{align*}
\sum_{\tau=t}^{K-1}\delta^{\tau-t}G(s_\tau^*)
&=\frac{[\pi(s_t^*)+w(s_{t-1}^*)](1-\delta^{K-t})}{1-\delta}-\delta^{K-t-1} C(s_t^*),\\
\sum_{\tau=t}^{K-1}\delta^{\tau-t}G(\hat{s}_\tau)
&=\frac{[\pi(\hat{s}_t)+w(\hat{s}_{t-1})](1-\delta^{K-t})}{1-\delta}-\delta^{K-t-1} \hat{C}(\hat{s}_k).
\end{align*}
Moreover, because $\hat{C}>C$, we also have
\begin{align*}
\sum_{\tau=t}^{K-1}\delta^{\tau-t}G(s_\tau^*)
>\frac{[\pi(s_t^*)+w(s_{t-1}^*)](1-\delta^{K-t})}{1-\delta}-\delta^{K-t-1} \hat{C}(s_t^*).
\end{align*}
This means that $(\bm{s}^*,\ \bm{p}^*(\bm{s}^*))$ is still implementable under $\hat{C}$ and that $\hat{\bm{s}}\neq \bm{s}^*$. Since the optimal contract is always unique, we conclude that the principal is strictly better off under $\hat{C}$ than under $C$.


For part \textit{ii)}, it suffices to show that $\lim_{\lambda\to\infty}s_1^* = 1$. For any $t\in \{1,...,K-1\}$, \eqref{eqn:R-BE} can be written as
\begin{equation*}
    \lambda\delta^{K-t-1}[C(s_t^*)-C(s_{t+1}^*)]=\frac{(1-\delta^{K-t})[w(s_{t-1}^*)-w(s_{t}^*)]}{1-\delta}-\frac{(\delta-\delta^{K-t})[\pi(s_{t+1}^*)-\pi(s_t^*)]}{1-\delta}.
\end{equation*}
Note that the RHS is uniformly bounded, so we must have $\lim_{\lambda\to\infty}(s_{t+1}^*-s_t^*) = 0$. But then,
\begin{equation*}
    \lim_{\lambda\to\infty}s_1^* = \lim_{\lambda\to\infty}\left[s_K^*-\sum_{t=1}^{K-1}(s_{t+1}^*-s_t^*)\right] = 1 - \sum_{t=1}^{K-1}\lim_{\lambda\to\infty}(s_{t+1}^*-s_t^*) = 1,
\end{equation*}
as desired.
\end{proof}

\subsection{Proofs for Section \ref{sec:discussion}}\label{app:discussion}

\begin{proof}[Proof of Theorem \ref{thm:alternativetiming}]


Under the alternative timing, the expert's IC constraints remain unchanged as \eqref{eqn:E-IC}: $W_{t+1}(\bm{s},\bm{p})\geq\frac{w(s_t)}{1-\delta},\ \forall t\geq0$. However, since the expert can only react to the principal's action in the next period, the principal's IC constraints become:
\begin{equation}\label{eqn:A-P-IC}
\Pi_t(\bm{s},\bm{p})\geq\pi(s_t)+\frac{\delta\pi(s_{t+1})}{1-\delta},\quad\forall t\geq0.\tag{P-IC'}
\end{equation}
The principal's problem can now be written as:
\begin{equation}
\max_{(\bm{s},\bm{p})} \ \Pi_0(\bm{s},\bm{p})\quad \mbox{s.t.} \quad \eqref{eqn:M},\ \eqref{eqn:LL},\ \eqref{eqn:A-P-IC},\ \eqref{eqn:E-IC}. \label{eqn:A-program'}
\end{equation}

Following an argument similar to that in the baseline model, one can prove an analogue of Claim \ref{cl:binding}: In any optimal contract, the time-$0$ payment must be zero and all the incentive constraints (except \eqref{eqn:P-IC} at $t=0$) must be binding. This has three implications. First, like before, the optimal payment stream is pinned down by \eqref{eqn:P*}. Second, we obtain a new break-even condition:
\begin{equation}\label{eqn:A-BE}
\delta^2[\pi(s_{t+2})-\pi(s_{t+1})]=w(s_{t-1})-w(s_t),\quad\forall t\geq1.\tag{BE'}
\end{equation}
Third, we can rewrite the principal's profit as
\begin{equation*}
\Pi_0(\bm{s},\bm{p}) = \pi(s_0)+\delta\pi(s_1)+\frac{\delta^2\pi(s_2)}{1-\delta}.
\end{equation*}
Hence, the principal's problem is equivalent to:
\begin{equation}
\max_{\bm{s}} \ \delta\pi(s_1)+\frac{\delta^2\pi(s_2)}{1-\delta}\quad \mbox{s.t.} \quad \eqref{eqn:M},\ \eqref{eqn:A-BE}. \label{eqn:A-program'}
\end{equation}

Let $\tilde{\bm{s}}$ be a solution to \eqref{eqn:A-program'} with discount factor $\delta$, and let $\bm{s}^*$ be the optimal knowledge sequence obtained in Theorem \ref{thm:profit} with discount factor $\delta^2$. Take any sequence $\bm{s}$ satisfying \eqref{eqn:M} and \eqref{eqn:A-BE}, and let $\bar{s} = \lim_ts_t$. Adding up \eqref{eqn:A-BE} across all $t\geq1$, we have
\begin{equation*}
    \delta^2[\pi(\bar{s})-\pi(s_2)]= w(s_0) - w(\bar{s}).
\end{equation*}
This indicates that $\bar{s}$ and $s_2$ mutually pin down each other, and that the choice of $s_1$ does not affect $\bar{s}$. Because the objective function is increasing in both $s_1$ and $s_2$, it is optimal to set $\tilde{s}_1 = \tilde{s}_2 = s_1^*$, where $s_1^*$ is defined by condition \eqref{eqn:s1} using discount factor $\delta^2$.

But then, \eqref{eqn:A-BE} recursively implies that $\tilde{s}_{2k+2} = \tilde{s}_{2k+1}$ for every $k\in \mathbb{N}$, that is, $\tilde{x}_{2k+1} = 0$ in every odd period. Using this fact, one can further turn \eqref{eqn:A-BE} into:
\begin{equation}\label{eqn:14}
    \delta^2\left[\pi(\tilde{s}_{2k+2}) -\pi(\tilde{s}_{2k})\right] = w(\tilde{s}_{2k-2}) - w(\tilde{s}_{2k}),\quad \forall k\geq 1.
\end{equation}
Recall that $\bm{s}^*$ satisfies \eqref{eqn:recursion} with discount factor $\delta
^2$. Comparing \eqref{eqn:14} with \eqref{eqn:recursion}, we conclude that in every even period, we must have $\tilde{s}_{2k} = s^*_k$. Finally, the optimal payment sequence $\tilde{\bm{p}}$ is immediately implied by \eqref{eqn:P*}.
\end{proof}

\end{document}